\newcommand{\indep}{\rotatebox[origin=c]{90}{$\models$}}
\newcommand{\defeq}{\vcentcolon=}
\newlist{assumpenum}{enumerate}{1} 
\setlist[assumpenum]{label=(\arabic*), ref=\theassumption~(\arabic*)}
\crefname{assumpenumi}{Assumption}{assumption}
\newlist{lemmaenum}{enumerate}{1}  
\setlist[lemmaenum]{label=(\arabic*), ref=\thelemma~(\arabic*)}
\crefname{lemmaenumi}{Lemma}{lemma}
\newlist{propenum}{enumerate}{1}  
\setlist[propenum]{label=(\arabic*), ref=\theproposition~(\arabic*)}
\newtheoremstyle{plain}     
  {3pt}{3pt}{}{}{\bfseries}{.}{ }{}    
\theoremstyle{plain}
\newtheorem{assumption}{Assumption}[section]
\newtheorem{theorem}{Theorem}[section]
\newtheorem{lemma}{Lemma}[section]     
\newtheorem{corollary}{Corollary}[section]
\newtheorem{proposition}{Proposition}[section]
\newtheorem{remark}{Remark}[section]
\theoremstyle{definition}
\newtheorem{example}{Example}
\newcommand{\argmax}{\operatorname{argmax}}
\newcommand{\Reg}{\mathrm{Reg}}
\begin{document}
\title{Policy Learning under Unobserved Confounding: A Robust and Efficient Approach\footnote{We thank Yuehao Bai, Toru Kitagawa, Bo Zhang, and participants at 35th POMS Conference 2025, INFORMS Workshop on Data Science 2025, the 2025 Shanghai Econometrics Workshop, and the 2026 AEA Annual Meeting for valuable feedback.} }
\author{
Zequn Jin\thanks{School of Economics, Shanghai University of Finance and Economics. \href{mailto:polaris@163.sufe.edu.cn}{\texttt{jinzequn@mail.shufe.edu.cn}}}
\and
Gaoqian Xu\thanks{Department of Economics, University of Iowa.
\href{mailto:gaoqian-xu@uiowa.edu}{\texttt{gaoqian-xu@uiowa.edu}} }
\and
Xi Zheng\thanks{Michael G. Foster School of Business, University of Washington. \href{mailto:xzheng01@uw.edu}{\texttt{xzheng01@uw.edu}}}
\and
Yahong Zhou\thanks{School of Economics, Shanghai University of Finance and Economics. \href{mailto:yahong.zhou@mail.shufe.edu.cn}{\texttt{yahong.zhou@mail.shufe.edu.cn}}}
}

\date{\today}

\maketitle

\begin{abstract}
This paper develops a robust and efficient method for policy learning from observational data in the presence of unobserved confounding, complementing existing instrumental variable based approaches.  We employ the marginal sensitivity model (MSM) to relax the commonly used yet restrictive unconfoundedness assumption by introducing a sensitivity parameter that captures the extent of selection bias induced by unobserved confounders.  Building on this framework, we consider two distributionally robust welfare criteria, defined as the worst-case welfare and  policy improvement functions,  evaluated over an uncertainty set of counterfactual distributions characterized by the MSM. Closed-form expressions for both welfare  criteria are derived. Leveraging these identification results, we construct doubly robust scores and estimate the robust policies by  maximizing the proposed criteria. Our approach accommodates flexible machine learning methods for estimating nuisance components, even when these converge at moderately slow rates. We establish asymptotic regret bounds for the resulting policies, providing a robust guarantee against the most adversarial confounding scenario. The proposed method is evaluated through extensive simulation studies and  empirical applications to the JTPA study and Head Start program.

   \vspace{12pt}

\textit{JEL codes}: C14, C31, C54

   \vspace{3pt}

    \textit{Keywords}: Partial Identification, Targeted Policy, Treatment Effects, Sensitivity Analysis

\end{abstract}

\newpage
\section{Introduction}



Policy targeting, allocating interventions based on individual characteristics, has become increasingly influential in applied economics and econometrics. For example, a local government may need to decide which workers receive job training, a school district may identify students most in need of educational support, and a regulatory agency has to determine which workplaces warrant safety inspections.

Policymakers often rely on observational data to guide targeting decisions. However, unobserved confounding, unobserved factors that simultaneously affect both treatment and potential outcomes, can severely bias causal inference from such data. Ignoring these confounders can result in misleading estimates and suboptimal policy decisions.  For example, \cite{johnson2023improving} note that safety inspections tend to target workplaces with a history of accidents or informal complaints, events that may reflect unobserved traits, such as poor safety culture or weak management, which also affect injury risk.  Given the potentially large benefits of well-targeted inspections,\footnote{Workplace injuries and illnesses place a significant economic burden on the United States, with annual social costs estimated at \$250 billion \citep{leigh2011economic}.}designing targeting without accounting for unobserved confounding may misidentify high-risk workplaces, leading to the inefficient use of limited regulatory resources. Similar challenges arise in education, where observational data guide targeting decisions for specific academic programs. However, actual enrollment often hinges on parental choice, which is based on factors inaccessible to policymakers, such as a child’s latent abilities, behavioral traits, or home environment. This self-selection process based on unobserved characteristics complicates the design of effective targeting policies.

Many existing policy learning (or targeting) methods using observational data assume no unobserved confounding, that is, treatment assignment is random conditional on the observed covariates; see \citep{kitagawa2018a,athey2021policy}. This assumption, commonly referred to as unconfoundedness, is often unreasonable, as historical policies or decisions may have been based on additional, unobserved information.  When this assumption breaks down, researchers often turn to instrumental variable (IV) methods, which enable causal inference and policy targeting; see \citep{cui2021semiparametric,qiu2021optimal,sasaki2024welfare}.  Yet in practice, identifying a valid and credible instrument is often challenging. Moreover, even with a valid binary instrument, heterogeneous treatment effects are typically only partially identified, thereby limiting the ability to effectively target interventions.

As a supplement to IV methods, sensitivity analysis is used to assess the impact of unobserved confounding on policy effects, as originally proposed by \cite{rosenbaum1983assessing}. Building on this idea, \cite{kallus2021minimax} incorporate sensitivity analysis into policy learning. Leveraging the marginal sensitivity model (MSM) of \cite{tan2006distributional}, they relax the unconfoundedness assumption and formulate a robust optimization framework to learn policies that remain effective under unobserved confounding. Specifically, building on the framework proposed by \cite{zhao2019sensitivity}, they construct the objective function as the largest  {inverse propensity weighting (IPW)} estimator of expected welfare, where the putative propensity scores are restricted by the MSM. While the method offers robustness to unobserved confounding, it is not without limitations. First, \cite{kallus2021minimax} derive a  {loose} upper bound on expected welfare improvement, which leads to overly conservative policies. Second, the method requires propensity scores to be estimated at the parametric rate, which precludes the use of many popular  {machine learning (ML)} methods, such as random forests and LASSO. Third, the policy optimization algorithm is tailored to differentiable parametric policy classes and relies on a computationally intensive iterative routine.

\subsection{Main Contributions}

This work proposes a method for policy learning that is robust to unobserved confounding, serving as a complement to existing IV-based approaches.  Building on the MSM framework, we construct a distributionally robust welfare criterion that captures the worst-case policy value over a plausible set of counterfactual outcome distributions.  Based on this criterion, we develop a two-stage policy learning procedure: the first stage estimates nuisance components using flexible  nonparametric or modern ML  methods; the second stage optimizes the estimated criterion over a constrained policy class. The resulting procedure is both computationally efficient and straightforward to implement.

In \cref{section: Identification}, building on \cite{dorn2023sharp} and \cite{dorn2024doubly}, we  derive closed-form expressions for the worst-case (i.e., sharp lower bounds of) average welfare and welfare improvement\footnote{These are commonly used objective functions in policy learning, which are formally defined in \cref{section: Policy Learning beyond Unconfoundedness}.} compatible with the MSM.  {Although our main analysis focuses on binary compulsory assignment, Appendix~\ref{appendix: self-selection} examines selection-driven policy targeting under the MSM, following \cite{ida2022choosing}, with self-selection as a policy option. This extension is relevant when the MSM-implied treatment-effect bounds are insufficiently informative for compulsory assignment, in which case preserving self-selection may help exploit individuals' private information for policy targeting.} Given the identification of these two robust criteria/objectives  using moment conditions, we can
construct the doubly robust estimation of them, even when the nuisance components are estimated at moderately slow rate, i.e.,  { $o(n^{-1/4})$ in the $L^2$-norm}.  

\Cref{section: regret bounds} presents algorithms for learning optimal confounding-robust policies, where the policy class may be constrained by practical considerations such as implementability, cost, and interpretability. Our method follows the mainstream two-stage paradigm of \citet{athey2021policy, zhou2023offline}. Unlike \citet{kallus2021minimax}, our second-stage policy optimization supports standard  {ML} methods and readily available software packages, such as logistic regression, policy trees, and neural networks, to learn the optimal policy by maximizing the estimated criterion. For example, given the doubly robust scores, our method can be implemented using the widely adopted \texttt{R}-package \texttt{policytree} \citep{sverdrup2020policytree}. Moreover, \Cref{section: regret bounds} provides asymptotic upper regret bounds as performance guarantees for our estimated confounding-robust policies.

We  validate our method via simulations in \cref{section: simulation}, comparing it to \cite{kallus2021minimax} and the empirical welfare maximization (EWM) of \cite{athey2021policy}. We then demonstrate its practical relevance in \cref{section: empirical} by applying it to two classic empirical settings where unobserved confounding is a primary concern, focusing on the contrast with the EWM.

Our applications suggest that ignoring such confounding may lead to misguided policy recommendations. In the National Job Training Partnership Act (JTPA) study, our method refines the naive EWM policy which treats nearly all participants. As concerns about unobserved confounding grow, our method shifts to identifying more selective policies that target subgroups with higher education but lower prior earnings.  {In our Head Start application, the EWM policy assigns no children to Head Start.} In contrast, our robust method prioritizes disadvantaged children with lower family income and maternal education, aligning with the program's mission. These findings demonstrate that accounting for unobserved confounding can fundamentally alter policy targeting, highlighting the importance of robust methods in practice.


\subsection{Related Literature}

Our work contributes to the growing literature on policy learning. Many policy learning works focus on mean-optimal policies under unconfoundedness \citep{kitagawa2018a,athey2021policy,zhou2023offline}. Alternative objectives under the same assumption have also been explored, including quantile-optimal policies \citep{wang2018quantile,leqi2021median} as well as those motivated by fairness \citep{kitagawa2021equality,fang2023fairness,viviano2023fair,fan2025policy}.

To address unobserved confounding, several studies have proposed IV-based approaches for policy targeting.  To avoid partial identification of the welfare function, \cite{sasaki2024welfare} assume the availability of a continuous instrument with sufficiently large support and identify the average welfare via the marginal treatment effect (MTE).  A similar approach has been applied to the design of optimal encouragement rules under endogeneity; see \citep{chen2022personalized,liu2022policy}. In the statistics literature, other works investigate optimal treatment rules or encouragement interventions using binary instruments; see \citep{cui2021semiparametric,qiu2021optimal}. However, these methods rely on quite stringent assumptions for a valid IV to point identify the conditional average treatment effect (CATE).

Binary instruments, though widely used, typically allow identification only of the local average treatment effect (LATE) for the subpopulation of compliers; see \citep{imbens1994identification,angrist1996identification,athey2017state}. This presents major challenges for policy learning: the resulting policy intervention is effective only for compliers, a subpopulation that cannot be identified in advance in a new sample, thus limiting external validity. Moreover, compliance status is instrument-dependent, leading to instability and lack of generalizability across different IV designs. To address this issue, \cite{pu2021estimating} and \cite{d2021orthogonal}, propose methods for binary IV settings that partially identify heterogeneous treatment effects, learning robust policies by optimizing policy criteria constructed from IV-identified bounds on the CATE.

 {
More broadly, our work contributes to the growing literature on policy learning and treatment choice under partial identification, where the relevant welfare function is only set-identified. To address this ambiguity, the literature commonly relies on the minimax principle \citep{manski2000identification,manski2004statistical,manski2025identification,stoye2012minimax,yata2025optimal,olea2023decision}, while \cite{Christensen2026Res} proposes a hybrid approach that combines minimax criteria with quasi-Bayesian methods.
Partial identification may also arise from distributional ambiguity, in the sense that the target population  may differ from the observed sample. In this setting, some works adopt distributionally robust optimization to learn policies that remain valid across new environments \citep{kido2022distributionally,si2023distributionally,qi2023robustness,lei2023policy,zhang2024minimax,adjaho2025externally}. These works mainly maximize worst-case expected outcomes over ambiguity sets defined by  Wasserstein distance or Kullback-Leibler divergence. 
Moreover, learning externally valid policies may require extrapolating causal parameters that cannot be point-identified from the available sample. To address this, recent studies \citep{khan2023off,zhang2024safe,ben2025safe,higbee2025policy} impose functional or shape restrictions to partially identify treatment effects and derive robust policies. Finally, we focus on learning robust policies when  causal parameters are  partially identified  due to unmeasured confounding, viewed through the lens of sensitivity analysis \citep{tan2006distributional, masten2018identification, zhao2019sensitivity,kallus2021minimax, dorn2023sharp,dorn2024doubly}.
}

\section{Problem Statement and Preliminaries}\label{section: sec_1}
We consider observational data consisting of random samples $\{Z_i\}_{i=1}^n=\{X_i, Y_i, A_i\}_{i=1}^n$, where $X_i\in \mathcal{X} \subseteq \mathbb{R}^d$ represents the observed covariates, $A_i \in \{0,1\}$ denotes a binary intervention/treatment, and $Y_i \in \mathbb{R}$ is the real-valued observed outcome. Within the Neyman-Rubin potential outcomes framework, let $Y_i(0)$ and $Y_i(1)$ denote the potential outcomes under control ($A_i = 0$) and treatment ($A_i = 1)$, respectively. The observed outcome satisfies $Y_i = Y_i(A_i)$. Throughout this work, we interpret $Y_i$ as a measure of welfare or utility, where higher values correspond to more desirable outcomes.

Our objective is to use the observational data $\{ Z_i \}_{i=1}^n$ to guide personalized policy interventions in settings where unobserved confounding may be present.

 {
\begin{example}[Job Training]\label{example: JTPA_Training}
Let $A_i$ denote whether individual $i$ participates in a job training program. Let $Y_i(1)$ and $Y_i(0)$ denote individual $i$'s post-program earnings under participation and non-participation, respectively. To improve labor market outcomes, the policymaker decides whether individual $i$ should receive the job training based on observed covariates $X_i$, such as years of education and pre-program earnings.
\end{example}
}

\begin{example}[Head Start Enrollment]\label{example: Head_Start}
Let \( A_i \) indicate whether child \( i \) is enrolled in the Head Start program. Let \( Y_i(1) \) and \( Y_i(0) \) denote academic outcomes under enrollment and non-enrollment (e.g., test scores or school readiness). To improve early childhood development, the policymaker allocates slots based on characteristics \( X_i \), including household income, parental education, and number of siblings.
\end{example}

\subsection{The Marginal Sensitivity Model}

Our analysis builds on the marginal sensitivity model (MSM) introduced by \cite{tan2006distributional} to control the selection bias caused by unobserved confounders. This model relaxes the unconfoundedness assumption by allowing for unobserved confounders $U \in \mathbb{R}^k$, where $k\in \mathbb{N}^+$ is unknown. Here, we assume that 
$U$ is a vector but have no prior knowledge of its dimension or other characteristics. We denote $P_{o}$ as the true distribution of $O \equiv \left(X,Y(1), Y(0), A,U \right) $.  The true propensity score is given by $e_o(x, u) =\mathbb{P}_{P_{o}}[A = 1 | X = x, U = u]$, which accounts for both observed covariates and unobserved confounders. In contrast, the nominal propensity score, defined as $e(x) = \mathbb{P}_{P_{o}}[A = 1 | X = x]$. In the MSM, unobserved confounders are assumed to have a bounded influence on the odds of treatment assignment,  restricting the extent of selection bias introduced by unobserved confounders.

We now present the formal specification of the MSM, a widely used framework for sensitivity analysis in the causal inference literature (see, e.g., \cite{zhao2019sensitivity, dorn2023sharp, oprescu2023b, dorn2024doubly}).

\begin{assumption}[Marginal Sensitivity Model] \label{assumption: Marginal sensitivity model}
Suppose there exists a vector of unobserved confounders $U \in \mathbb{R}^k$ such that  
\[
\left( Y(1), Y(0) \right) \indep  \ A \mid \left(X, U \right).
\]
The distribution of $O \equiv \left(X, Y(1),Y(0),A,U \right) $ satisfies the selection bias condition with $1 \leq \Lambda< \infty$ if the following inequality holds $P_o$-almost surely,
\begin{equation}\label{equation: marginal sensitivity model}
\frac{1}{\Lambda} \leq \frac{e_o(x, u) / \left( 1- e_o(x, u)  \right)}{e(x ) / \left( 1- e(x)  \right) }  \leq \Lambda.
\end{equation}
\end{assumption}

\begin{remark}
  {In \cref{example: JTPA_Training}, the latent variable $U$ may capture unobserved factors, such as motivation, expected gains from training, or participation constraints, that affect both training take-up and potential earnings.} Similarly, in \cref{example: Head_Start}, $U$ may reflect unobserved characteristics of the child and family, including aspects of socioeconomic background, parental preferences, and investments in early human capital.
\end{remark}

In practice, the sensitivity parameter $\Lambda$ in \cref{assumption: Marginal sensitivity model} is selected by policymakers based on their prior beliefs regarding the extent of unobserved confounding and the resulting selection bias. Implicitly, it is assumed that policymakers choose $\Lambda$ such that \cref{assumption: Marginal sensitivity model} is satisfied. The special case of $\Lambda=1$ corresponds to the unconfoundedness, also known as the selection-on-observables assumption. Higher values of $\Lambda$ impose fewer restrictions on the degree of unobserved confounding, allowing for greater potential selection bias. 

\begin{remark}
 {The sensitivity parameter, $ \Lambda $, measures the allowed deviation from the unconfoundedness benchmark. Following \cite{hsu2013calibrating}, we calibrate $ \Lambda $ assuming the effect of an unobserved confounder on the treatment-assignment odds is comparable in magnitude to that of observed covariates.  In practice, we quantify this magnitude by examining how the estimated treatment-assignment odds change when each observed covariate is omitted in turn.}

 {When credible bounds from an alternative identification strategy are available, they can also serve as a reference for calibrating $\Lambda$. Following \cite{masten2018identification}, we ask how much the baseline unconfoundedness assumption must be relaxed before the MSM-implied bounds become conservative relative to these reference bounds. For the JTPA study in \cref{section: JTPA}, we use the IV bounds induced by randomized eligibility assignment as this benchmark. Specifically, we choose the smallest value of $\Lambda$ such that the MSM-implied lower bound do not exceed the IV lower bound.}
\end{remark}


\subsection{Policy Learning under the MSM}\label{section: Policy Learning beyond Unconfoundedness}

In this subsection, we provide a brief review of policy learning  under the assumption of unconfoundedness ($\Lambda = 1$).  We then explore how to learn a robust policy when this assumption is violated, within the framework of the marginal sensitivity model. Throughout the rest of this work, we fix the sensitivity parameter $\Lambda \geq 1$.

A possibly randomized policy $\pi$ maps covariates $x \in \mathcal{X}$ to treatment assignment probabilities, with $\pi(x)$ denoting the probability of receiving treatment $A = 1$. The policymaker can pre-specify a policy class $\Pi$, defined as a collection of Borel measurable functions mapping from $\mathcal{X}$ to $[0,1]$.  This class typically incorporates application-specific constraints, including budgetary limitations, structural assumptions, and fairness requirements; see \cref{section: policy_class} for concrete examples.

We begin by recalling the framework of policy learning under the unconfoundedness assumption. Given a predefined policy class $\Pi$, the mean-optimal policy learning aims to identify a policy that maximizes the expected outcome:
\begin{equation}\label{equation: standard policy learning}
\pi^\star \in \underset{\pi \in \Pi}{ \argmax}  \ \mathbb{E} \left[ Y\left(\pi \left(X \right) \right) \right],
\end{equation}
where  $Y\left(\pi \left(X \right)\right) = \pi(X)Y(1) + (1-\pi(X))Y(0)$. For simplicity, let  {$W_1(\pi) = \mathbb{E} \left[ Y\left(\pi(X) \right) \right]$} denote the expected welfare function. Moreover, the regret of deploying a policy $\pi \in \Pi$ is defined as
\begin{equation}\label{equation: mean_regret} \mathrm{Reg}(\pi) = \sup_{\pi^\prime \in \Pi} \mathbb{E}[Y(\pi^\prime(X))] - \mathbb{E}[Y(\pi(X))]. 
\end{equation}
It is evident that a policy maximizing $W_1(\pi)$ simultaneously minimizes the regret given in \cref{equation: mean_regret}.

A central objective in policy learning is policy evaluation, that is, identifying and  estimating the criterion function.  Once an estimate of $W_1(\cdot)$ is available, the optimal policy can be estimated by maximizing the estimated welfare over the policy class. Unconfoundedness (i.e., $\Lambda = 1$) is a  widely adopted assumption for identifying the expected welfare function; see \cite{kitagawa2018a,athey2021policy, zhou2023offline}. Under this assumption, various methods such as inverse probability weighting (IPW) can be employed to identify and estimate $W_1(\pi)$. In particular, the expected welfare function can be expressed as
\begin{align}
W_1(\pi)  &= \mathbb{E}[Y(0)] + \mathbb{E}[ \tau(X) \pi(X) ]  \label{eq:welfare-decomposition} \\
 &= \mathbb{E}\left[ \frac{Y A \pi(X)}{e(X)} + \frac{Y (1-A) (1-\pi(X)) }{1-e(X)} \right] \label{eq:welfare-identification}, 
\end{align}
where $e(x)$ is the nominal propensity score, and $\tau(x) = \mathbb{E}[Y(1) - Y(0) \mid X = x]$ is the conditional average treatment effect (CATE). When $\Lambda = 1$, the outcome regression, nominal propensity score, and CATE are all identifiable.

When unconfoundedness fails (i.e., $\Lambda > 1$), the expected welfare $W_1(\pi)$ becomes unidentifiable, and policies based on \cref{eq:welfare-decomposition} or \cref{eq:welfare-identification} lack reliable guarantees. As a result, $W_1(\pi)$ is not a suitable criterion for policy learning in the presence of unobserved confounding.

We adopt a distributionally robust optimization (DRO) framework for policy learning, replacing $W_1(\pi)$ with its worst-case counterpart over distributions consistent with MSM specified in \cref{assumption: Marginal sensitivity model}. Within this framework, we construct two welfare criteria that are both robust to unobserved confounding and identifiable, and derive the policy by optimizing one of them.

To implement the DRO, we first characterize the distributional uncertainty set over the counterfactual distribution of $\left(Y(1), Y(0), X, U, A\right)$  subject to the constraints imposed by \cref{assumption: Marginal sensitivity model}.  Formally, let $\mathcal{P}(\Lambda)$ denote the set of all probability distributions $Q$ on $\mathbb{R}^2  \times \mathcal{X}\times \{0,1\} \times \mathbb{R}^k$ satisfying the following conditions: 
\begin{enumerate}
    \item[(1)]   If $\left(X, Y(0), Y(1), A, U \right)\sim Q$, then $\left(Y(0), Y(1) \right) \indep A \mid (X,U)$ under $Q$;
    \item[(2)] If $Y = A Y(1) + (1-A)Y(0)$, then the distribution of $(X,Y,A)$ under $Q$ is identical to the observed-data distribution $P$;
    \item[(3)] The odds ratio between the true propensity score and the nominal propensity score lies in $\left[1/\Lambda, \Lambda \right]$, i.e., 
    \[
  \frac{1}{\Lambda} \leq  \frac{ \mathbb{P}_{ {Q}}(A =1 |X, U)/  \mathbb{P}_{ {Q}}(A =0 |X, U) }{\mathbb{P}_{ {Q}}(A =1 |X)/  \mathbb{P}_{ {Q}}(A =0 |X)} \leq \Lambda.
    \]
\end{enumerate}

We now present two complementary policy learning methods that are robust to unobserved confounding under the MSM framework. Our first approach is the max-min expected welfare method. Specifically, the {\it worst-case welfare function} $W_{\Lambda}:\Pi\rightarrow\mathbb{R}$ is defined as
\[
W_{\Lambda}(\pi)=\inf_{ Q \in \mathcal{P}(\Lambda) } \mathbb{E}_Q \left[ 
 Y(\pi (X) )\right].
\]
The corresponding max-min welfare (MMW) policy is given by 
\begin{equation}\label{equation: max-min welfare PL}
\pi_{W, \Lambda} \equiv \pi_{W}(\cdot; \Lambda)  \in \underset{\pi \in \Pi}{\argmax} \ W_{\Lambda}(\pi).    
\end{equation}
Under the worst-case welfare
function $W_\Lambda$, we define the confounding-robust welfare regret (CRW-regret) of a policy  $\pi \in \Pi$, relative to the best possible policy in $\Pi$,  as 
\begin{equation}\label{robust_welfare_regret}
\mathrm{Reg}_W(\pi) = \sup_{\pi^\prime \in \Pi}  W_{\Lambda}(\pi^\prime) - W_\Lambda(\pi).
\end{equation}

The second approach builds on the conditional average treatment effect (CATE). When $\Lambda = 1$, unconfoundedness holds and, the policy learning problem reduces to maximizing the policy improvement function $\mathbb{E}[\tau(X)\pi(X)]$.
To handle the more general case with $\Lambda \geq 1$, we extend this objective to the \textit{worst-case policy improvement function} $\Delta_{\Lambda}: \Pi \rightarrow \mathbb{R}$, defined as
\begin{equation}\label{eq:MMI_function}
\Delta_{\Lambda}(\pi)=\inf_{ Q \in \mathcal{P}(\Lambda) } \mathbb{E}_Q \left[ 
\pi(X) ( Y(1) - Y(0)  ) \right].
\end{equation}
The corresponding max-min improvement (MMI)  policy is given by
 \begin{equation}\label{equation: CATE-based robust PL}
\pi_{\Delta, \Lambda} \equiv \pi_{\Delta}(\cdot; \Lambda) \in \underset{ {\pi \in \Pi} }{\argmax} \ \Delta_{\Lambda}(\pi).
\end{equation}
Under the criterion $\Delta_\Lambda$, we define the confounding-robust policy improvement regret (CRI-regret) of a policy  $\pi \in \Pi$, relative to the best possible policy in $\Pi$,  as 
\begin{equation}\label{robust_improvement_regret}
\mathrm{Reg}_{{\Delta}}(\pi) = \sup_{\pi^\prime \in \Pi}  \Delta_{\Lambda}(\pi^\prime) - \Delta_\Lambda(\pi).
\end{equation}

The quantity $\Delta_{\Lambda}(\pi) = \inf_{Q \in \mathcal{P}(\Lambda)} \mathbb{E}_Q \left[Y(\pi(X)) - Y(0)\right]$ represents the worst-case welfare gain of policy $\pi$ relative to the baseline policy $\pi_0(x) = 0$.  {In fact}, this approach can be generalized to the policy improvement of $\pi$ against any given baseline policy $\pi_0$, with the optimal policy obtained by:
\begin{equation}\label{eq: baseline MMI}
    \max_{\pi \in \Pi}\inf_{Q\in \mathcal{P}(\Lambda)} \mathbb{E}_Q\left[ Y(\pi(X)) - Y\left(\pi_0(X) \right) \right].
\end{equation}
 {We refer to any solution to \eqref{eq: baseline MMI} as a baseline-relative MMI policy.  Since $Y(\pi(X)) - Y\left(\pi_0(X)\right) = \left(\pi(X)-\pi_0(X)\right)(Y(1)-Y(0))$, both \eqref{eq: baseline MMI} and the resulting optimal policy can be  {derived} using the similar identification and estimation strategy as in \eqref{equation: CATE-based robust PL}. This objective is closely related to that in \cite{kallus2021minimax}. Their method proceeds by constructing a conservative lower bound for the inner minimization in \eqref{eq: baseline MMI} based on an uncertainty set for putative propensity weights. See \eqref{thm:MMI-baseline} for the sharp characterization of \eqref{eq: baseline MMI}, and \cref{remark:A1,remark:A2} provides a detailed comparison with \cite{kallus2021minimax}.
}


\begin{remark}
When $\Lambda = 1$, the MMW policy in  {\cref{equation: max-min welfare PL}} coincides with the MMI policy in  {\cref{equation: CATE-based robust PL}}. However, in the presence of unobserved confounding (i.e. $\Lambda > 1$), the two policies may differ. A detailed discussion of the differences between these policies is deferred to \cref{section: Identification}. 
\end{remark}



\section{Identification}\label{section: Identification}
Within our framework, learning a confounding-robust policy requires identifying and estimating either the worst-case welfare function
$W_{\Lambda}(\pi)$ or the worst-case policy
improvement function $\Delta_{\Lambda}(\pi)$.   In this section, we formally characterize these criterion functions by leveraging partial identification results for conditional means and CATE under the MSM. We then develop doubly robust/orthogonal moment functions for identifying $W_{\Lambda}(\pi)$ and  $\Delta_{\Lambda}(\pi)$,  which enable the construction of efficient estimators.

For notational simplicity, we define two quantile functions $q^{\pm}_{\Lambda}(x,a)$ as
\[
\begin{aligned}
q^{+}_{\Lambda}(x,a) &= \inf \left\{ q: F(q|x,a)\geq \frac{\Lambda}{1+\Lambda} \right\},\\
q^{-}_{\Lambda}(x,a)& = \inf \left\{ q: F(q|x,a)\geq \frac{1}{1+\Lambda} \right\},
\end{aligned}
\]
where $F(\cdot|x,a)$ denotes the CDF of $Y$ given $X=x$ and $A=a$. Moreover, let \( e_a(x) = \mathbb{P}(A = a | X = x) \) for $a \in \{0,1\}$, so that \( e_1(x) = e(x) \) and \( e_0(x) = 1 - e(x) \).



\subsection{Identification of Robust Criteria}\label{section: Identification of Robust Criteria}

In this subsection, we identify and characterize the robust criterion functions $W_\Lambda(\pi)$ and $\Delta_\Lambda(\pi)$. Additionally, we derive the first-best policies optimized under these robust criteria. Let $\Pi_o$ denote the set of all measurable policies:
\[
\Pi_o = \left\{ \pi: \mathcal{X} \rightarrow [0,1]  \text{ is Borel measurable}  \right\}.
\]
A first-best policy refers to a policy that maximizes the criterion function over the unrestricted policy class  $\Pi_o$.

We begin by identifying the worst-case welfare $W_\Lambda(\pi)$. When $\Lambda >1$, the true conditional mean function $\mu_{o}(x,a) = \mathds{E}_{P_o}[ Y(a) | X =x ]$ is no longer point identified due to unobserved confounding. Instead, we characterize the sharp bounds for  $\mu_{o}(x,a)$  using \cref{proposition: pi_mean} in \cref{section: identification_Conditional_Mean_CATE}, which also implies a sharp lower bound for the true welfare function $\mathbb{E}_{P_o}[Y(\pi)]$. Specifically, the proposition provides closed-form expressions for the upper and lower bounds, denoted $\mu^{\pm}_{\Lambda}(x,a)$, which are given by
\[
  \mu^{\pm}_{\Lambda}(x,a) = \mathbb{E}\left[ Y \mathds{1}\{ A = a \} \left[1+\frac{1- e_a(X)}{e_a(X)}\Lambda^{\pm\text{sgn}\left(Y-q^{\pm}_{\Lambda}(X,a)\right)}\right] \Big|X=x \right], \\
\]
where $\text{sgn}(t)=1$ if $t\geq0$ and $-1$ otherwise.

\begin{theorem} \label{theorem: max-min welfare assignment rule}
Under \cref{assumption: Marginal sensitivity model},  for any policy $\pi \in \Pi$, we have
\[
\begin{aligned}
W_\Lambda(\pi)=\inf_{ Q \in \mathcal{P}(\Lambda) } \mathbb{E}_Q \left[ 
 Y(\pi (X) )\right]  & = \mathbb{E}\left[ \mu_\Lambda^-(X, 1) \pi (X) +    \mu_\Lambda^-(X, 0) \left(1-\pi (X) \right)  \right]. \\
\end{aligned}
\]
Moreover, a first-best MMW policy that solves \cref{equation: max-min welfare PL} with $\Pi = \Pi_o$ is given by 
\[
\pi_{W,\Lambda}^\star(x) = \mathds{1} \left\{ \mu_\Lambda^-(x, 1) -     \mu_\Lambda^-(x, 0) > 0  \right\}.
\]
\end{theorem}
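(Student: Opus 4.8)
The plan is to reduce the infimum over the infinite-dimensional set $\mathcal{P}(\Lambda)$ to a pointwise-in-$X$ minimization of conditional means, and then to invoke the sharp conditional-mean bounds of \cref{proposition: pi_mean}. First I would note that for any $Q \in \mathcal{P}(\Lambda)$, since $\pi$ is a function of $X$ alone and $Y(\pi(X)) = \pi(X)Y(1) + (1-\pi(X))Y(0)$, the tower property gives
\[
\mathbb{E}_Q[Y(\pi(X))] = \mathbb{E}_Q\!\left[\pi(X)\,\mu_Q(X,1) + (1-\pi(X))\,\mu_Q(X,0)\right],
\]
where $\mu_Q(x,a) = \mathbb{E}_Q[Y(a)\mid X=x]$. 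By condition (2) in the definition of $\mathcal{P}(\Lambda)$, the $X$-marginal of every $Q$ coincides with that of the observed distribution $P$, so the outer expectation is the same $\mathbb{E}_P$ for every feasible $Q$; only the conditional means $\mu_Q(\cdot,a)$ vary with $Q$.

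For the lower bound ($\geq$), I would argue for each feasible $Q$ separately. \cref{proposition: pi_mean} identifies $\mu_\Lambda^-(x,a)$ as the sharp lower bound of $\mu_Q(x,a)$ over $Q \in \mathcal{P}(\Lambda)$, hence $\mu_Q(x,a) \ge \mu_\Lambda^-(x,a)$ for $P$-almost every $x$ and every feasible $Q$. Because the weights $\pi(x)$ and $1-\pi(x)$ are nonnegative, the integrand is monotone in each conditional mean, so $\mathbb{E}_Q[Y(\pi(X))] \ge \mathbb{E}[\pi(X)\mu_\Lambda^-(X,1) + (1-\pi(X))\mu_\Lambda^-(X,0)]$ for every $Q$; taking the infimum over $Q$ preserves this inequality.

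The matching upper bound ($\le$) is where the real work lies: I must exhibit a single $Q^\star \in \mathcal{P}(\Lambda)$ that drives both conditional means to their arm-wise minima simultaneously, i.e. $\mu_{Q^\star}(x,1) = \mu_\Lambda^-(x,1)$ and $\mu_{Q^\star}(x,0) = \mu_\Lambda^-(x,0)$ for $P$-a.e. $x$. The subtlety — and the main obstacle — is that \cref{proposition: pi_mean} guarantees each arm's bound is attained by some completion, but a priori by different distributions, whereas here one $Q^\star$ must realize both. The key observation that unlocks this is that $\mathbb{E}_Q[Y(\pi(X))]$ depends on $Q$ only through the two marginal conditional laws of $Y(1)\mid X$ and $Y(0)\mid X$, never through their joint law; and the worst-case tilt for arm $1$ reshapes the counterfactual distribution of $Y(1)$ on the event $\{A=0\}$, while that for arm $0$ reshapes the counterfactual $Y(0)$ on the disjoint event $\{A=1\}$. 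I would therefore take the latent confounder rich enough to encode the potential outcomes (e.g. $U=(Y(1),Y(0))$), hold the factual conditionals fixed as required by condition (2), and couple $Y(1)$ and $Y(0)$ antitonically so that a single selection mechanism $e_o(x,u)$ — assigning the minimal admissible odds $e(x)/\big(\Lambda(1-e(x))\big)$ to the low-$Y(1)$/high-$Y(0)$ types and the maximal admissible odds $\Lambda e(x)/(1-e(x))$ to their complements — implements the two arm-specific extremal tilts at once without conflict. This mechanism saturates the MSM inequality \eqref{equation: marginal sensitivity model} and, by construction, lets the low outcome values (those below the quantiles $q_\Lambda^-(x,a)$) dominate each counterfactual mean, matching the closed form of $\mu_\Lambda^-$. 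Measurability of the resulting completion in $x$ follows from the explicit form of the extremal weights, and gluing these per-$x$ completions yields a global $Q^\star\in\mathcal{P}(\Lambda)$ because every constraint defining $\mathcal{P}(\Lambda)$ is imposed conditionally on $X$. Evaluating the objective at $Q^\star$ gives the reverse inequality, so the two bounds coincide and the stated identity for $W_\Lambda(\pi)$ follows.

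Finally, for the first-best policy I would rewrite the identified criterion as
\[
W_\Lambda(\pi) = \mathbb{E}\!\left[\mu_\Lambda^-(X,0)\right] + \mathbb{E}\!\left[\pi(X)\big(\mu_\Lambda^-(X,1)-\mu_\Lambda^-(X,0)\big)\right].
\]
Since the first term does not depend on $\pi$ and the integrand of the second is linear in $\pi(x)\in[0,1]$, maximizing over the unrestricted class $\Pi_o$ reduces to pointwise maximization: set $\pi(x)=1$ wherever $\mu_\Lambda^-(x,1)-\mu_\Lambda^-(x,0)>0$ and $\pi(x)=0$ wherever it is negative (the value on the zero set is immaterial). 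This yields $\pi_{W,\Lambda}^\star(x)=\mathds{1}\{\mu_\Lambda^-(x,1)-\mu_\Lambda^-(x,0)>0\}$, completing the argument; this last step is routine once the identification identity is in hand.
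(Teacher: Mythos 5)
Your proof is correct and follows the same skeleton as the paper's: the same tower-property decomposition into arm-wise conditional means, the same pointwise lower bound from \cref{proposition: pi_mean}, and the same recognition that the crux is exhibiting a single $Q^\star \in \mathcal{P}(\Lambda)$ attaining $\mu_\Lambda^-(x,1)$ and $\mu_\Lambda^-(x,0)$ simultaneously. The only substantive difference is how that attainment step is executed. The paper works at the level of putative propensity-score variables: it splices the two arm-specific extremal weights from the proof of \cref{proposition: pi_mean} into a single $E^- = A E_1^- + (1-A)E_0^-$ --- legitimate precisely because $E_1^-$ only matters on $\{A=1\}$ and $E_0^-$ only on $\{A=0\}$, so the MSM and balancing constraints are inherited --- and then invokes the completion machinery of \cite{dorn2023sharp} (their Proposition 1B and Section C.4.3) to lift $E^-$ to a distribution $P^- \in \mathcal{P}(\Lambda)$. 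You instead build the completion by hand: $U=(Y(1),Y(0))$, an antitone coupling aligning $\{Y(1) < q_\Lambda^-(X,1)\}$ with $\{Y(0) \geq q_\Lambda^-(X,0)\}$, and a two-valued selection mechanism saturating the MSM. These are two presentations of the same construction (yours is essentially what the cited machinery produces when unwound), resting on the identical insight that the two worst-case tilts act on disjoint events. Your version is more self-contained, but it asserts rather than verifies the one point on which it could founder: that the event alignment is feasible, i.e.\ that the two events have equal conditional probability under the respective extremal tilts. This does hold --- both probabilities equal $\tfrac{e(X)}{1+\Lambda} + \tfrac{\Lambda\left(1-e(X)\right)}{1+\Lambda}$, and the induced mechanism averages back to the nominal $e(X)$ --- but this mass-balance computation (plus the usual care with atoms at the quantiles) is exactly what the paper's citation to \cite{dorn2023sharp} buys it for free.
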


The identification of $\Delta_\Lambda(\pi)$ is established in \cref{theorem: CATE-based robust welfare}, which follows as a corollary of \cref{proposition: pi_CATE}. To formalize the result, define $\tau_\Lambda^-(x) = \mu^-_\Lambda(x,1) - \mu^+_\Lambda(x,0)$.

\begin{theorem}\label{theorem: CATE-based robust welfare}
Under \cref{assumption: Marginal sensitivity model},  for any policy $\pi \in \Pi$, we have
\[
\Delta_{\Lambda} (\pi) = \inf_{Q \in \mathcal{P}(\Lambda)} \mathbb{E}_{Q} \left[ (Y(1)-Y(0))  \pi (X) \right]  =   \mathbb{E}\left[  \tau^-_{\Lambda}(X) \pi (X)  \right] .
\]
Moreover, a first-best MMI policy that solves  \cref{equation: CATE-based robust PL} with $\Pi = \Pi_o$ is given by 
\[
\pi_{\Delta,\Lambda}^\star(x) = \mathds{1}\{ \tau^-_{\Lambda}(x) > 0 \} = \mathds{1} \left \{ \mu_\Lambda^-(x, 1) -     \mu_\Lambda^+(x, 0) > 0  \right\}.
\]
\end{theorem}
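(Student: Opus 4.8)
The plan is to reduce the distributionally robust functional $\Delta_\Lambda(\pi)$ to a pointwise (conditional-on-$X$) optimization and then invoke the sharp CATE bound of \cref{proposition: pi_CATE}, from which this statement is advertised as a corollary. The starting observation is that, by condition (2) in the definition of $\mathcal{P}(\Lambda)$, every $Q \in \mathcal{P}(\Lambda)$ shares the same marginal law of $X$ as the observed distribution $P$. Hence, writing $\tau_Q(x) = \mathbb{E}_Q[Y(1)-Y(0)\mid X=x]$ and applying the tower property,
\[
\mathbb{E}_Q\left[\pi(X)(Y(1)-Y(0))\right] = \mathbb{E}\left[\pi(X)\,\tau_Q(X)\right],
\]
where the outer expectation is taken under the fixed law of $X$. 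Because $\pi(x)\in[0,1]$ is nonnegative everywhere, minimizing the integrand pointwise minimizes the integral, which is the mechanism that lets me push the infimum inside the expectation.

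For the lower-bound direction I would argue as follows. \cref{proposition: pi_CATE} supplies, for each $x$, the sharp lower bound $\tau^-_\Lambda(x)=\mu^-_\Lambda(x,1)-\mu^+_\Lambda(x,0)$ on $\tau_Q(x)$ over $Q\in\mathcal{P}(\Lambda)$, so that $\tau_Q(x)\ge\tau^-_\Lambda(x)$ for every admissible $Q$ and $P$-almost every $x$. Multiplying by $\pi(x)\ge 0$ and integrating then gives
\[
\mathbb{E}_Q\left[\pi(X)(Y(1)-Y(0))\right] = \mathbb{E}\left[\pi(X)\,\tau_Q(X)\right] \ge \mathbb{E}\left[\pi(X)\,\tau^-_\Lambda(X)\right],
\]
and taking the infimum over $Q$ yields $\Delta_\Lambda(\pi)\ge \mathbb{E}[\pi(X)\tau^-_\Lambda(X)]$.

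The harder direction is sharpness, i.e.\ exhibiting $Q^\star\in\mathcal{P}(\Lambda)$ (or an approximating sequence) that attains the bound. Here I would use the explicit worst-case tilts underlying \cref{proposition: pi_CATE}: the lower bound $\mu^-_\Lambda(x,1)$ is achieved by the reweighting that assigns the largest admissible likelihood ratio to the lower tail of $Y$ in the $A=1$ stratum (through the quantile $q^-_\Lambda(x,1)$), while the upper bound $\mu^+_\Lambda(x,0)$ is achieved by the analogous reweighting on the $A=0$ stratum (through $q^+_\Lambda(x,0)$). The key point I must verify is \emph{variation independence}: these two tilts act on disjoint missing pieces — the imputed law of $Y(1)$ on $\{A=0\}$ and of $Y(0)$ on $\{A=1\}$ — so a single latent structure $U$ can realize both at once while keeping the propensity odds ratio inside $[1/\Lambda,\Lambda]$ $Q$-almost surely. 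A measurable-selection argument then patches the $x$-wise worst-case conditional laws into one $Q^\star\in\mathcal{P}(\Lambda)$ with $\tau_{Q^\star}(x)=\tau^-_\Lambda(x)$, producing the matching upper bound $\Delta_\Lambda(\pi)\le\mathbb{E}[\pi(X)\tau^-_\Lambda(X)]$ and hence equality. I expect this gluing-plus-variation-independence step to be the main obstacle, though it is largely inherited from \cref{proposition: pi_CATE} rather than re-proved here.

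Finally, for the first-best policy over $\Pi_o$ I would maximize $\mathbb{E}[\pi(X)\tau^-_\Lambda(X)]$ pointwise: since the objective is linear in $\pi(x)\in[0,1]$ with slope $\tau^-_\Lambda(x)$, the maximizer sets $\pi(x)=1$ exactly when $\tau^-_\Lambda(x)>0$, yielding $\pi^\star_{\Delta,\Lambda}(x)=\mathds{1}\{\tau^-_\Lambda(x)>0\}=\mathds{1}\{\mu^-_\Lambda(x,1)-\mu^+_\Lambda(x,0)>0\}$; any choice on the null set $\{\tau^-_\Lambda(x)=0\}$ is optimal, which is why the strict inequality is harmless.
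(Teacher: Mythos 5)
Your proposal is correct and takes essentially the same route as the paper: the paper likewise obtains the easy direction by the pointwise bound $\tau_Q(x)\ge\tau^-_\Lambda(x)$ together with $\pi(x)\ge 0$, and obtains sharpness by citing \cref{proposition: pi_CATE}, whose underlying construction (the combined tilt $E_\tau^- = A E_1^- + (1-A)E_0^+$, acting on the disjoint strata $\{A=1\}$ and $\{A=0\}$) is precisely the variation-independence/gluing step you identify as the crux. The first-best policy then follows by the same pointwise linear maximization you describe.
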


\Cref{theorem: max-min welfare assignment rule} characterizes the first-best MMW policy $\pi_{W,\Lambda}^\star(x)$, which assigns treatment by comparing the lower bounds $\mu_{\Lambda}^-(x,1)$ and $\mu_{\Lambda}^-(x,0)$. In contrast, \cref{theorem: CATE-based robust welfare} shows that the first-best MMI policy $\pi_{\Delta,\Lambda}^\star(x)$ assigns treatment by comparing $\mu_{\Lambda}^-(x,1)$ with the upper bound $\mu_{\Lambda}^+(x,0)$. As a result, the MMI policy is more conservative: it treats only when the worst-case treated outcome exceeds the best-case control outcome.  {Although the unrestricted policy class $\Pi_o$ permits randomized policies, the first-best MMW and MMI policies are deterministic and unique up to tie-breaking on indifference sets, because the corresponding robust criteria are affine functionals of $\pi$.\footnote{The first-best MMW policy is uniquely determined outside the indifference set $\{x:\mu_\Lambda^-(x,1)=\mu_\Lambda^-(x,0) \}$, whereas any  $\pi(x)\in[0,1]$ is optimal within this set. Consequently, if $\mathbb{P} \left[ \mu_\Lambda^-(X,1)=\mu_\Lambda^-(X,0) \right]=0$, the first-best MMW policy is unique up to null sets. The same argument applies to the MMI criterion.}}





We  {compare} our first-best MMW and  MMI policies with the Bayes decision rule (referred to as a policy in our framework) of \cite{pu2021estimating}, hereafter referred to as PZ-policy. 

\begin{remark}
The first-best MMW policy can also be interpreted through a CATE-based approach, as it assigns treatment by comparing the worst-case potential outcomes. Since this rule is confined to a comparison of identified lower bounds, it is uninformative about the magnitude of the true treatment effect. For this reason, we turn our focus to the MMI policy, and a comparison between the first-best MMI policy and  PZ-policy is more instructive.

\begin{figure}[htbp]
\centering

\begin{subfigure}[t]{0.45\textwidth}
    \centering
    \begin{tikzpicture}[>=stealth,scale=1.0]
        \coordinate (L) at (-2,0);
        \coordinate (mid) at (0.5,0);
        \coordinate (U) at (2,0);

        \draw[->,thick] (-2.5,0) -- (2.5,0);

        \draw (L) -- ++(0,-0.1) node[below] {$\tau_{\Lambda}^{-}$};
        \draw (mid) -- ++(0,-0.1) node[below] {$\frac{\tau_{\Lambda}^{-}+\tau_{\Lambda}^{+}}{2}$};
        \draw (U) -- ++(0,-0.1) node[below] {$\tau_{\Lambda}^{+}$};

        \coordinate (zeroL) at (-0.8,0);
        \draw (zeroL) -- ++(0,-0.1) node[below] {$0$};

        \draw[->,thick,dashed,blue] (L) ++(0,0.8) -- (L);
        \node[above=22pt] at (L) {\textcolor{blue}{MMI ($\pi =0$)}};

        \draw[->,thick,dashed,red] (mid) ++(0,0.8) -- (mid);
        \node[above=22pt] at (mid) { \textcolor{red}{PZ ($\pi =1$)}};
    \end{tikzpicture}
    \caption{When $\tau_\Lambda^- < 0 < \frac{1}{2}(\tau_\Lambda^+ + \tau_\Lambda^-)$, the PZ policy recommends treatment ($\pi = 1$), while the MMI policy recommends no treatment ($\pi = 0$).}
    \label{fig:left0}
\end{subfigure}
\hfill
\begin{subfigure}[t]{0.45\textwidth}
    \centering
    \begin{tikzpicture}[>=stealth,scale=1.0]
        \coordinate (L) at (-2,0);
        \coordinate (mid) at (0.3,0);
        \coordinate (U) at (2,0);

        \draw[->,thick] (-2.5,0) -- (2.5,0);

        \draw (L) -- ++(0,-0.1) node[below] {$\tau_{\Lambda}^{-}$};
        \draw (mid) -- ++(0,-0.1) node[below] {$\frac{\tau_{\Lambda}^{-}+\tau_{\Lambda}^{+}}{2}$};
        \draw (U) -- ++(0,-0.1) node[below] {$\tau_{\Lambda}^{+}$};

        \coordinate (zeroR) at (1.2,0);
        \draw (zeroR) -- ++(0,-0.1) node[below] {$0$};

             \draw[->,thick,dashed,blue] (L) ++(0,0.8) -- (L);
        \node[above=22pt] at (L) {\textcolor{blue}{MMI ($\pi =0$)}};

        \draw[->,thick,dashed,red] (mid) ++(0,0.8) -- (mid);
        \node[above=22pt] at (mid) {\textcolor{red}{PZ ($\pi =0$)}};
    \end{tikzpicture}
    \caption{When $\frac{1}{2}(\tau_\Lambda^+ +\tau_\Lambda^-) < 0$, both the PZ and MMI policies  {recommend} no treatment ($\pi=0$).}
    \label{fig:right0}
\end{subfigure}

\caption{Comparison of the MMI and PZ decision rules when applied to a given ambiguous CATE interval $[\tau_\Lambda^-, \tau_\Lambda^+]$.}
\label{fig:two_cases}
\end{figure}

While both approaches assign treatment based on CATE bounds, they differ in two aspects. First is the construction of the bounds: MMI policy uses the MSM framework, whereas the PZ-policy employs an IV approach. Second, and more importantly, the two policies differ in how they apply decision rules to these bounds. The first-best MMI policy is more conservative and follows a strict criterion: treatment is assigned only when the lower bound of the CATE exceeds zero. In contrast, the PZ-policy adopts a three-case decision rule: it treats if the lower bound is positive, withholds if the upper bound is negative,  {and} when the bounds straddle zero, treats if the upper bound has a larger absolute value than the lower bound. This contrast highlights why our MMI policy is more conservative: it avoids intervention in the presence of ambiguity, whereas the PZ-policy allows for treatment under uncertainty (see \cref{fig:two_cases}).
\end{remark}

\subsection{Doubly/Locally Robust Scores} \label{section: Identification and Estimation of the Worst-case Welfare Function}

To effectively learn the confounding-robust policy, it is essential to estimate the entire worst-case welfare function with minimal estimation error. To facilitate the use of modern ML methods, we derive doubly robust scores for the worst-case criteria,  $W_\Lambda (\pi)$ and $\Delta_\Lambda (\pi)$, such that the estimation of nuisance parameters has no first-order influence on the resulting policy evaluation. For notational convenience, we may write $\pi(a|x) = a \pi(x) + (1-a) \left(1-\pi(x)\right)$, which compactly represents the probability that treatment $a \in \{0,1\}$ is assigned under policy $\pi$ given covariate $x$.  {Moreover, we assume that the conditional distribution of $Y_i|X_i, A_i$ admits a bounded density, as in \cite{dorn2023sharp} and \cite{dorn2024doubly}.}

\begin{assumption}\label{assum: density}
  {   For each $(x,a) \in \mathcal{X} \times \{0,1\}$, the conditional distribution $F(y|x,a)$ is continuous with a uniformly bounded density $f(y|x,a)$ that is
positive on the interior of its support.}
\end{assumption}

\subsubsection*{Doubly Robust Score for $W_\Lambda(\pi)$}

We begin by presenting the doubly robust score for estimating $W_\Lambda (\pi)$. As shown in \cref{theorem: max-min welfare assignment rule}, this naturally leads to the following moment condition:
\[
\mathbb{E}\left[ \mu_\Lambda^-(X, 1) \pi (1|X) +    \mu_\Lambda^-(X, 0) \pi (0|X)  \right]  -  W_\Lambda(\pi) = 0.
\]
For $t \in \{0,1\}$, let
\begin{equation*}
 g_{t}\left(z;e,q_\Lambda^- \right) = y \mathds{1} \{a= t\} \left[1+\frac{1- e_t(x)}{e_t(x)}\Lambda^{-\text{sgn}\big(y-q_{\Lambda}^{-}(x,t)\big)}\right] .
\end{equation*}
Using explicit expressions for $\mu_\Lambda^{\pm}$ given in \cref{proposition: pi_mean}, the moment condition above can be rewritten as
\begin{equation}\label{equation: Moment_condition_W_Lambda}
\mathbb{E}\left[ \sum_{t\in\{0,1\}}g_{t}\left(Z;e,q_\Lambda^- \right)\pi(t|X) \right]  - W_\Lambda(\pi) = 0.
\end{equation}
Here, $W_\Lambda(\pi)$ is the parameter of interest, while $e(\cdot)$ and $q_{\Lambda}^-(\cdot, \cdot)$ are two unknown nuisance parameters that can be estimated using ML/nonparametric methods.

A plug-in estimator for $W_\Lambda(\pi)$ can be constructed by substituting estimates into the score function $g_t(Z; \cdot)$ and averaging over the sample, yielding $n^{-1} \sum_{i=1}^n\sum_{t\in\{0,1\}} g_t\left(Z_i; \widehat{e}, \widehat{q}^{-}_{\Lambda}\right)\pi(t|X_{i})$. However, this plug-in estimator is sensitive to estimators of the nominal propensity score and the quantile functions, and may suffer from severe bias. This motivates the doubly robust estimator proposed in this section.

Following \cite{chernozhukov2018double,chernozhukov2022locally}, we construct the doubly robust score for $W_\Lambda(\pi)$ as follows:
\begin{equation}\label{equation: phi_t^- definition}
    \begin{aligned}
         &\phi_{t}^{-} \left(z;e,q_\Lambda^-, \rho^-_{1,\Lambda}, \rho^-_{0,\Lambda}\right)\\
         & =   y \mathds{1} \{a= t\} \left[1+\frac{1- e_t(x)}{e_t(x)}\Lambda^{-\text{sgn}\big(y-q_{\Lambda}^{-}(x,t)\big)}\right]  \\
        &+ q_{\Lambda}^{-}(x,t) \mathds{1} \{a= t\}   \frac{1-e_t(x)}{e_t(x)}\left( \Lambda - \Lambda^{-1} \right) \left[ \frac{1}{1+\Lambda} - \mathbb{1}\left\{ y< q_{\Lambda}^{-}(x,t)\right\} \right]   \\
        & - \frac{1}{e_t(x)}\left[\Lambda\rho_{1,\Lambda}^{-}(x,t) + \Lambda^{-1} \rho_{0,\Lambda}^{-}(x,t)  \right] \left[ \mathds{1}\{ a =t \} -e_t(x) \right],
    \end{aligned}
\end{equation}
where
\begin{equation}\label{equation: rho}
    \begin{aligned}
        & \rho_{1,\Lambda}^{\pm}(x,t) = \mathbb{E}\left[ Y\mathbb{1}\left\{ Y < q_{\Lambda}^{\pm}(X,A) \right\} | X=x, A=t \right], \\
        & \rho_{0,\Lambda}^{\pm}(x,t) = \mathbb{E}\left[ Y\mathbb{1}\left\{ Y > q_{\Lambda}^{\pm}(X,A) \right\} | X=x, A=t \right]. \\
    \end{aligned}
\end{equation}
Let $\eta_{W, \Lambda} = \left(e,q_\Lambda^- ,\rho^-_{1,\Lambda}, \rho^-_{0,\Lambda} \right) $ denote the tuple of  {nuisance} functions, the doubly robust score for $W_\Lambda(\pi)$ is given by 
\[
\psi_{W}\left(z,\pi;\eta_{W, \Lambda} \right) = \sum_{t \in \{0,1\} } \phi_{t}^{-} \left(z; \eta_{W, \Lambda}  \right) \pi(t|x).
\]

The following \cref{proposition: doubly robust score for mean} establishes the Neyman orthogonality for the moment condition.
Moreover, the function $\psi_W(z, \pi; \eta_{W, \Lambda}) - W_\Lambda(\pi)$ serves as the efficient influence function for estimating $W_\Lambda(\pi)$.
\begin{proposition}\label{proposition: doubly robust score for mean}
Under  {Assumptions \ref{assumption: Marginal sensitivity model} and \ref{assum: density}}, for any policy $\pi \in \Pi$, the score function $\psi_{W}$ satisfies:
\begin{propenum}
    \item $\mathbb{E}\left[ \psi_{W}\left(Z,\pi;\eta_{W, \Lambda}  \right)  \right] = W_\Lambda(\pi)$. 
    \item  For any $\widetilde{\eta}_{W, \Lambda} = (\widetilde{e}, \widetilde{q}^{-}_{\Lambda}, \widetilde{\rho}_{1,\Lambda}^-, \widetilde{\rho}_{0,\Lambda}^- )$ with $\widetilde{e}: \mathcal{X} \rightarrow (0,1)$, $\widetilde{q}^-_{\Lambda}: \mathcal{X} \times \{0,1\} \rightarrow \mathbb{R}$, and $\widetilde{\rho}_{1,\Lambda}^-, \widetilde{\rho}_{0,\Lambda}^-: \mathcal{X} \times \{0,1\} \rightarrow \mathbb{R}$, the pathwise (or Gateaux) derivative satisfies 
    \begin{equation*}
        \frac{\mathrm{d}}{\mathrm{d} r}\mathbb{E}\left[ \psi_{W}\left(Z,\pi;\eta_{W, \Lambda}+r\left(\widetilde{\eta}_{W, \Lambda}-\eta_{W, \Lambda}\right)\right) \right]_{r=0} = 0.
    \end{equation*}
\end{propenum}
\end{proposition}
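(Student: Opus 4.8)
The plan is to treat the two claims separately, using the additive three-line structure of $\phi_t^-$ in \cref{equation: phi_t^- definition} together with the fact that the weights $\pi(t\mid x)$ carry no dependence on any nuisance. \textbf{Part (1).} I would first establish the conditional identity $\mathbb{E}\!\left[\phi_t^-(Z;\eta_{W,\Lambda})\mid X=x\right]=\mu_\Lambda^-(x,t)$ for each $t\in\{0,1\}$; the claim then follows by averaging over $X$, weighting by $\pi(t\mid x)$, and invoking \cref{theorem: max-min welfare assignment rule}. Conditioning on $X=x$, I would examine the three lines of \cref{equation: phi_t^- definition} in turn. The first line is exactly $g_t(z;e,q_\Lambda^-)$, whose conditional expectation reproduces the closed form for $\mu_\Lambda^-(x,t)$ from \cref{proposition: pi_mean}. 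In the second line, conditioning on $\{A=t\}$ yields the factor $e_t(x)$ and the residual $\mathbb{E}\!\left[\tfrac{1}{1+\Lambda}-\mathds{1}\{Y<q_\Lambda^-(x,t)\}\mid A=t,X=x\right]=\tfrac{1}{1+\Lambda}-F\!\left(q_\Lambda^-(x,t)\mid x,t\right)$, which vanishes because $q_\Lambda^-(x,t)$ is by construction the $\tfrac{1}{1+\Lambda}$-quantile of $F(\cdot\mid x,t)$. In the third line the bracket $\Lambda\rho_{1,\Lambda}^-+\Lambda^{-1}\rho_{0,\Lambda}^-$ is $\sigma(X)$-measurable, so it factors out of the conditional expectation and multiplies $\mathbb{E}\!\left[\mathds{1}\{A=t\}-e_t(x)\mid X=x\right]=0$. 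Hence only the first line contributes.

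\textbf{Part (2).} Because $\psi_W=\sum_{t}\phi_t^-\,\pi(t\mid x)$ with nuisance-free weights, it suffices to show that each map $r\mapsto\mathbb{E}[\phi_t^-(Z;\eta_{W,\Lambda}+r(\widetilde\eta_{W,\Lambda}-\eta_{W,\Lambda}))]$ has vanishing derivative at $r=0$; by linearity of the Gateaux derivative I would check the four nuisance directions one at a time, holding the true data distribution fixed while perturbing only the arguments inside $\phi_t^-$. Throughout, let $e_t^0$ denote the true propensity, through which the law of $A$ enters the expectation, as distinct from the nuisance argument $e_t$ appearing in $\phi_t^-$. The $\rho_{1,\Lambda}^-$ and $\rho_{0,\Lambda}^-$ directions are immediate: they occur only in the third line, whose conditional expectation is proportional to $\mathbb{E}[\mathds{1}\{A=t\}-e_t\mid X=x]=e_t^0-e_t$, which equals $0$ at the truth, so perturbing them has no first-order effect. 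For the propensity direction I would collect the dependence of the conditional expectation on $e_t$: the first line contributes a term in $\tfrac{1-e_t}{e_t}$ and the third a term in $\tfrac{1}{e_t}(e_t^0-e_t)$. Differentiating at $e_t=e_t^0$ gives $-B_1/e_t^0$ from the first line and $+(\Lambda\rho_{1,\Lambda}^-+\Lambda^{-1}\rho_{0,\Lambda}^-)/e_t^0$ from the third, with $B_1=\mathbb{E}[Y\Lambda^{-\text{sgn}(Y-q_\Lambda^-)}\mid A=t,X=x]$; the decomposition $\Lambda^{-\text{sgn}(Y-q_\Lambda^-)}=\Lambda\,\mathds{1}\{Y<q_\Lambda^-\}+\Lambda^{-1}\mathds{1}\{Y\ge q_\Lambda^-\}$ shows $B_1=\Lambda\rho_{1,\Lambda}^-+\Lambda^{-1}\rho_{0,\Lambda}^-$, so the two contributions cancel exactly.

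The delicate direction, and the main obstacle, is the quantile nuisance $q_\Lambda^-$, since it enters only through the non-smooth indicator $\mathds{1}\{Y<q_\Lambda^-\}$ (inside $\Lambda^{-\text{sgn}}$ in the first line and explicitly in the second). Here I would differentiate the conditional expectation, which smooths the discontinuity against the conditional density $f(\cdot\mid x,t)$ of $Y$. The first line yields $(1-e_t^0)(\Lambda-\Lambda^{-1})\,q_\Lambda^-\,f(q_\Lambda^-\mid x,t)$ via $\tfrac{d}{dq}\!\int^{q} y\,f(y\mid x,t)\,dy=q\,f(q\mid x,t)$, whereas the second line, whose conditional expectation equals $q(1-e_t^0)(\Lambda-\Lambda^{-1})\big[\tfrac{1}{1+\Lambda}-F(q\mid x,t)\big]$, differentiates to $(1-e_t^0)(\Lambda-\Lambda^{-1})\big\{[\tfrac{1}{1+\Lambda}-F(q\mid x,t)]-q\,f(q\mid x,t)\big\}$; evaluating at the true quantile annihilates the level term $\tfrac{1}{1+\Lambda}-F(q_\Lambda^-\mid x,t)=0$ and leaves precisely the negative of the first-line contribution. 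Making this rigorous requires a regularity condition guaranteeing that $F(\cdot\mid x,t)$ is continuously differentiable with a density near $q_\Lambda^-(x,t)$ and that differentiation under the integral is valid; this same continuity is what I would use silently in Part (1) to identify $q_\Lambda^-$ with the exact quantile and to equate the events $\{Y>q\}$ and $\{Y\ge q\}$ in the definition of $\rho_{0,\Lambda}^-$.
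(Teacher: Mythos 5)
Your proposal is correct and follows essentially the same route as the paper's proof: part (1) by taking conditional expectations of $\phi_t^-$ line by line (the second line vanishing because $q_\Lambda^-(x,t)$ is the $\tfrac{1}{1+\Lambda}$-quantile of $F(\cdot\mid x,t)$, the third because $\mathbb{E}[\mathds{1}\{A=t\}-e_t(X)\mid X]=0$), and part (2) by direction-wise Gateaux derivatives, with the quantile direction handled by smoothing the indicator against the conditional density so that the first- and second-line contributions cancel at the true quantile. The only difference is completeness: the paper writes out only the $q_\Lambda^-$ direction and declares the $e$, $\rho_{1,\Lambda}^-$, $\rho_{0,\Lambda}^-$ directions analogous, whereas you verify all four explicitly — in particular the propensity-direction cancellation via $\Lambda\rho_{1,\Lambda}^-+\Lambda^{-1}\rho_{0,\Lambda}^-=\mathbb{E}\bigl[Y\Lambda^{-\mathrm{sgn}(Y-q_\Lambda^-)}\mid X,A=t\bigr]$ — and you flag the regularity (a conditional density near $q_\Lambda^-$, permitting differentiation under the integral and equating $\{Y>q\}$ with $\{Y\ge q\}$) that both arguments use implicitly.
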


\begin{remark}
The framework of \citet{kallus2021minimax} relies on estimating the nominal propensity score at the parametric rate of $O_P(n^{-1/2})$ to achieve an $n^{-1/2}$ regret bound. However, this rate can only be attained under correct model specification, ruling out the use of  {ML} and nonparametric methods.  In contrast, our framework allows for the nuisance components to be estimated at a slower rate of $o_P(n^{-1/4})$, enabling the use of data-adaptive, model-agnostic estimation techniques.
\end{remark}

\subsubsection*{Doubly Robust Score for $\Delta_\Lambda(\pi)$}

We then construct the doubly robust score of $\Delta_{\Lambda}(\pi)$. \cref{theorem: CATE-based robust welfare} implies the moment condition as follows:
\[
\mathbb{E}\left[ \tau_{\Lambda}^{-} (X) \pi(X)  \right] - \Delta_{\Lambda}(\pi) = 0.
\]
Let $\eta_{\Delta, \Lambda} = \left(e,q_\Lambda^{\pm}, \rho^{\pm}_{1,\Lambda}, \rho^{\pm}_{0,\Lambda} \right)$ denote the tuple of nuisance parameters. The doubly robust score for $\Delta_\Lambda(\pi)$ can be constructed as
\begin{equation*}
    \psi_{\Delta} \left( z,\pi; \eta_{\Delta, \Lambda} \right) = \pi(x) \left[ \phi_{1}^{-} \left(z;e,q_\Lambda^-, \rho^-_{1,\Lambda}, \rho^-_{0,\Lambda}\right)  - \phi_{0}^{+} \left(  z;e,q^+_\Lambda
        ,\rho_{1,\Lambda}^+, \rho_{0,\Lambda}^+  \right) \right] ,
\end{equation*}
where
\begin{equation}\label{equation: phi_+}
    \begin{aligned}
        &\phi^{+}_t \left( z;e,q^+_\Lambda
        ,\rho_{1,\Lambda}^+, \rho_{0,\Lambda}^+ \right) \\
        & =   
           y \mathds{1}\{ a = t \} \left[1+\frac{1- e_t(x)}{e_t(x)}\Lambda^{\text{sgn}\big(y-q^{+}_{\Lambda}(x,t)\big)}\right] \\
        & -  q^{+}_{\Lambda}(x,t) \mathds{1}\{ a = t \}  \frac{1-e_t(x)}{e_t(x)}\left( \Lambda - \Lambda^{-1} \right) \left[ \frac{\Lambda}{1+\Lambda} - \mathbb{1}\left\{ y< q^{+}_{\Lambda}(x,t)\right\} \right]   \\
        & -   \frac{1}{e_t(x)}\left[\Lambda^{-1}\rho_{1,\Lambda}^{+}(x,t) + \Lambda\rho^{+}_{0,\Lambda}(x,t) \right]\left[\mathds{1}\{ a = t \} -e_t(x) \right].  \\
    \end{aligned}
\end{equation}
We summarize several crucial properties of $\psi_{\Delta}$ in the following proposition.
\begin{proposition}\label{proposition: doubly robust score for CATE}
    Under  {Assumptions \ref{assumption: Marginal sensitivity model} and \ref{assum: density}}, for any policy $\pi \in \Pi$, the score function $\psi_{\Delta}$ satisfies:
\begin{propenum}
    \item $\mathbb{E}\left[ \psi_{\Delta}\left(Z,\pi;\eta_{\Delta, \Lambda}  \right)  \right] = \Delta_\Lambda(\pi)$. 
    \item  For any $\widetilde{\eta}_{\Delta,\Lambda} = \left(\widetilde{e}, \widetilde{q}^{\pm}_{\Lambda}, \widetilde{\rho}_{1,\Lambda}^{\pm}, \widetilde{\rho}_{0,\Lambda}^{\pm} \right)$ with $\widetilde{e}: \mathcal{X} \rightarrow (0,1)$, $\widetilde{q}^{\pm}_{\Lambda}: \mathcal{X} \times \{0,1\} \rightarrow \mathbb{R}$, and $\widetilde{\rho}_{1,\Lambda}^{\pm}, \widetilde{\rho}_{0,\Lambda}^{\pm}: \mathcal{X} \times \{0,1\} \rightarrow \mathbb{R}$, the pathwise (or Gateaux) derivative satisfies 
    \begin{equation*}
        \frac{\mathrm{d}}{\mathrm{d} r}\mathbb{E}\left[ \psi_{\Delta}\left(Z,\pi;\eta+r\left(\widetilde{\eta}_{\Delta, \Lambda} -\eta_{\Delta, \Lambda} \right)\right) \right]_{r=0} = 0.
    \end{equation*}
\end{propenum}
\end{proposition}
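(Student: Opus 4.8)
The plan is to establish the two claims of \cref{proposition: doubly robust score for CATE} by reducing them to the corresponding properties of the single-treatment building blocks $\phi_t^{\pm}$. Since $\psi_{\Delta}(z,\pi;\eta_{\Delta,\Lambda}) = \pi(x)\left[\phi_1^-(z;\cdot) - \phi_0^+(z;\cdot)\right]$ is a fixed linear combination of $\phi_1^-$ and $\phi_0^+$ with coefficients depending only on $x$ (through $\pi(x)$), both the unbiasedness and the Neyman orthogonality will follow once I have the analogous statements for each $\phi_t^{\pm}$ separately. Thus the real content is to verify: (i) $\mathbb{E}[\phi_t^-(Z;\eta_{W,\Lambda}) \mid X=x] = \mu_\Lambda^-(x,t)$ and $\mathbb{E}[\phi_t^+(Z;\cdot)\mid X=x] = \mu_\Lambda^+(x,t)$, and (ii) each $\phi_t^{\pm}$ has vanishing Gateaux derivative in its nuisance arguments. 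I will treat $\phi_t^-$ in detail (this is essentially the per-treatment version of \cref{proposition: doubly robust score for mean}, whose proof I can invoke or mirror) and then note that $\phi_t^+$ is obtained by the symmetric substitution $\Lambda \mapsto \Lambda^{-1}$ together with sign flips, so the identical argument applies with $q^+$, $\rho^\pm$ playing the analogous roles.

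For the conditional unbiasedness in (i), the first term of $\phi_t^-$ already reproduces the integrand defining $\mu_\Lambda^-(x,t)$ from \cref{proposition: pi_mean}; the remaining two terms are correction terms designed to have conditional mean zero at the true nuisances. The last term, proportional to $\mathds{1}\{a=t\}-e_t(x)$, vanishes in conditional expectation given $X$ because $\mathbb{E}[\mathds{1}\{A=t\}\mid X] = e_t(X)$. For the middle term I will use the defining property of the quantile $q_\Lambda^-(x,t)$, namely $F(q_\Lambda^-(x,t)\mid x,t) = 1/(1+\Lambda)$, so that $\mathbb{E}[\mathds{1}\{Y < q_\Lambda^-(X,t)\}\mid X=x,A=t] = 1/(1+\Lambda)$, which makes the bracket $\left[\tfrac{1}{1+\Lambda} - \mathds{1}\{Y<q_\Lambda^-\}\right]$ mean-zero conditional on $A=t$; combined with $\mathbb{E}[\mathds{1}\{A=t\}\cdot(\cdots)\mid X]$, the whole middle term integrates to zero. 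Summing gives $\mathbb{E}[\phi_t^-\mid X=x]=\mu_\Lambda^-(x,t)$, and summing the $\phi_t$ weighted by $\pi(t|x)$ with the appropriate signs yields $\mathbb{E}[\psi_\Delta] = \mathbb{E}[\tau_\Lambda^-(X)\pi(X)] = \Delta_\Lambda(\pi)$ by \cref{theorem: CATE-based robust welfare}.

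For the Neyman orthogonality in (ii), I will differentiate $r \mapsto \mathbb{E}[\psi_\Delta(Z,\pi;\eta+r(\widetilde\eta-\eta))]$ at $r=0$ and show each directional derivative vanishes. By linearity in $\pi$ it suffices to handle $\phi_t^\pm$ term by term. The derivative splits into contributions along $\widetilde e - e$, along $\widetilde q_\Lambda^\pm - q_\Lambda^\pm$, and along $\widetilde\rho - \rho$. The $\rho$-direction and the $e$-direction are linear in the score, and the $\rho$-correction term was constructed precisely so that its derivative in $e$ cancels the $e$-derivative coming from the first term (this is the role of the $\Lambda\rho_1^- + \Lambda^{-1}\rho_0^-$ weighting matching the $\Lambda^{-\mathrm{sgn}}$ structure). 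The subtle direction is $\widetilde q - q$: the indicator $\mathds{1}\{y<q\}$ is non-smooth, so I cannot differentiate pointwise; instead I will differentiate the integrated (conditional-expectation) form, where $\mathbb{E}[Y\Lambda^{-\mathrm{sgn}(Y-q)}\mid x,t]$ and the bracket involving $\mathds{1}\{Y<q\}$ are smooth in $q$ through the CDF $F(\cdot|x,t)$. The first-order term in $q$ from the leading piece is proportional to $q\cdot(\Lambda-\Lambda^{-1})f(q|x,t)$, and the middle correction term is engineered so that its $q$-derivative produces exactly the negative of this, using again $F(q_\Lambda^-|x,t)=1/(1+\Lambda)$ to kill the remaining boundary contribution.

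The main obstacle, therefore, is the $q$-direction orthogonality: making rigorous the interchange of differentiation and the conditional expectation in the presence of the indicator, and verifying that the two $q$-dependent pieces cancel at first order. I expect to handle this by writing the relevant conditional expectations as Lebesgue--Stieltjes integrals against $F(\cdot\mid x,t)$, using that $q_\Lambda^\pm$ are the population quantiles (so the density/CDF evaluations line up), and invoking a dominated-convergence / Leibniz argument under a mild regularity assumption (e.g.\ continuity of $F(\cdot\mid x,t)$ near the quantile and integrability of $Y$). Once each $\phi_t^\pm$ is shown to be first-order insensitive to all three nuisance directions, orthogonality of $\psi_\Delta$ follows immediately from its linear structure, completing the proof.
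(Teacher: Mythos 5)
Your proposal is correct and takes essentially the same route as the paper: the paper proves \cref{proposition: doubly robust score for CATE} by declaring it identical to the proof of \cref{proposition: doubly robust score for mean}, which argues exactly as you do---part (1) from the identification theorem plus the conditional-mean-zero properties of the correction terms (the quantile condition $F(q^{-}_{\Lambda}|x,t)=1/(1+\Lambda)$ and $\mathbb{E}[\mathds{1}\{A=t\}-e_t(X)\,|\,X]=0$), and part (2) by direction-wise Gateaux differentiation in which the $q$-direction derivative of the leading $\Lambda^{-\mathrm{sgn}}$ term cancels against that of the middle correction term via the density at the quantile, with the $e$- and $\rho$-directions handled by the cancellation structure you describe. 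Your explicit reduction to the building blocks $\phi_1^-$ and $\phi_0^+$ with the symmetric $q^+_{\Lambda}$, $\Lambda/(1+\Lambda)$ substitution is precisely what the paper's ``identical proof'' shorthand leaves implicit.
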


\begin{remark}
The efficient influence functions for \( W_\Lambda(\pi) \) and \( \Delta_\Lambda(\pi) \) are given by 
\( \psi_W( {z}, \pi; \eta_{W, \Lambda}) - W_\Lambda(\pi) \) and 
\( \psi_\Delta( {z}, \pi; \eta_{\Delta, \Lambda}) - \Delta_\Lambda(\pi) \), respectively; see \cite{newey1994asymptotic}.
\end{remark}

\section{Algorithm and Performance Guarantees}\label{section: regret bounds}

In this section, we focus on the algorithms for estimating the MMW and MMI policies, as  defined in \cref{section: Policy Learning beyond Unconfoundedness}. Let  $\widehat{\pi}_{W,\Lambda}$ and $\widehat{\pi}_{\Delta,\Lambda}$ denote the estimated MMW and MMI policies, respectively.  We  also provide theoretical  guarantees for these estimated policies in the form of upper bounds on their associated regrets.  Specifically, we derive asymptotic upper bounds on the CRW-regret $\mathrm{Reg}_W\left(\widehat{\pi}_{W, \Lambda} \right)$ 
and the CRI-regret $\mathrm{Reg}_{\Delta}\left(\widehat{\pi}_{\Delta, \Lambda} \right)$, as defined in \cref{robust_welfare_regret} and \cref{robust_improvement_regret}.

\subsection{Algorithm}\label{section: algorithm}

We estimate the MMW and MMI policies using a  two-stage procedure: (1) estimation of the robust criterion, either  $W_\Lambda(\pi)$ or $\Delta_\Lambda(\pi)$, using $K$-fold cross-fitting; and (2) policy optimization based on the estimated objective.

 Given the doubly robust scores in \cref{section: Identification and Estimation of the Worst-case Welfare Function}, we first estimate $e(\cdot)$, $q^{\pm}_{\Lambda}(\cdot, \cdot)$ and $\rho_{t,\Lambda}^{\pm}(\cdot, \cdot)$ for $t\in\{0,1\}$.  To this end, we divide the sample into $K$ evenly-sized folds $\cup_{k=1}^{K}\mathcal{I}_{k}$. For each fold $k \in [K]$, we estimate $e, q^{\pm}_{\Lambda}$ and $\rho_{t,\Lambda}^{\pm}$ using  {the other $(K-1)$ folds}, and denote the resulting estimates by 
 \[
 \widehat{\eta}_{W,\Lambda}^{-k} =\left( \widehat{e}^{-k}, q^{-, -k}_{\Lambda}, \rho_{1,\Lambda}^{-, -k }, \rho_{0,\Lambda}^{-, -k } \right) \quad\text{and}\quad \widehat{\eta}_{\Delta,\Lambda}^{-k} =\left( \widehat{e}^{-k}, q^{\pm, -k}_{\Lambda}, \rho_{1,\Lambda}^{\pm, -k }, \rho_{0,\Lambda}^{\pm, -k } \right).
 \]
 Using these estimates, we construct the cross-fitted estimator for $W_\Lambda(\pi)$ and $\Delta_{\Lambda}(\pi)$ as
\begin{equation}\label{equation:W_Lambda}
 \begin{aligned}
     \widehat{W}_{\Lambda,n}(\pi) = \frac{1}{n} \sum_{k=1}^{K}\sum_{i\in\mathcal{I}_{k}} \psi_W\left(Z_i, \pi;\widehat{\eta}_{W,\Lambda}^{-k}  \right ) \ \text{and}\ \widehat{\Delta}_{\Lambda,n}(\pi) = \frac{1}{n} \sum_{k=1}^{K}\sum_{i\in\mathcal{I}_{k}} \psi_\Delta\left(Z_i, \pi;\widehat{\eta}_{\Delta,\Lambda}^{-k}  \right ).
 \end{aligned}
\end{equation}
The final policy is selected from a pre-specified policy class  $\Pi_n$ by maximizing the estimated objective, either  $\widehat{W}_{\Lambda,n}(\pi)$ or $\widehat{\Delta}_{\Lambda,n}(\pi)$. The complete procedures for estimating the MMW and MMI policies are summarized in \cref{algo:MMW_PL} and \cref{algo:MMI_PL}, respectively.

 \begin{algorithm}
\caption{Max-Min Welfare (MMW) Policy Learning }\label{algo:MMW_PL}
\begin{algorithmic}[1]
    \State {\bf Input:} Sample $\{Y_i,X_i, A_i\}_{i=1}^n$, sensitivity parameter $\Lambda \geq 1$, and a policy class $\Pi_n$

    \State Choose  $K \in \mathbb{N}^+$, and partition the sample into $K$ equally sized folds 
    $\cup_{k=1}^{K}\mathcal{I}_{k}$
    \State {\bf for} each $k = 1,\ldots,K$ {\bf do}

    \State Fit estimators for $\eta_{W, \Lambda} = \left(e,q_\Lambda^- ,\rho^-_{1,\Lambda}, \rho^-_{0,\Lambda} \right) $ using  {the other $(K-1)$ folds},  $\mathcal{I}_{k}^{c}\equiv[n]\backslash\mathcal{I}_{k}$, denoted by $\widehat{\eta}_{W,\Lambda}^{-k}$.

    \State {\bf end for}

    \State Estimate $W_\Lambda(\pi)$ as
    \[
    \widehat{W}_{\Lambda,n}(\pi) = \frac{1}{n} \sum_{k=1}^{K}\sum_{i\in\mathcal{I}_{k}} \psi_W\left(Z_i, \pi;\widehat{\eta}_{W,\Lambda}^{-k}  \right ).
    \]
    
   \State Return $\widehat{\pi}_{W,\Lambda} = \argmax_{\pi \in \Pi_n}  \widehat{W}_{\Lambda,n} (\pi)$. 
\end{algorithmic}
\end{algorithm}

 \begin{algorithm}[H]
\caption{Max-Min Improvement (MMI) Policy Learning }\label{algo:MMI_PL}
\begin{algorithmic}[1]
    \State {\bf Input:} Sample $\{Y_i,X_i, A_i\}_{i=1}^n$, sensitivity parameter $\Lambda \geq 1$, and a policy class $\Pi_n$

    \State Choose  $K \in \mathbb{N}^+$, and partition the sample into $K$ equally sized folds 
    $\cup_{k=1}^{K}\mathcal{I}_{k}$
    \State {\bf for} each $k = 1,\ldots,K$ {\bf do}

    \State Fit estimators for $\eta_{\Delta, \Lambda} = \left(e,q_\Lambda^{\pm} ,\rho^{\pm}_{1,\Lambda}, \rho^{\pm}_{0,\Lambda} \right) $ using  {the other $(K-1)$ folds},  $\mathcal{I}_{k}^{c}\equiv[n]\backslash\mathcal{I}_{k}$, denoted by $\widehat{\eta}_{\Delta, \Lambda}^{-k}$.

    \State {\bf end for}

    \State Estimate $\Delta_\Lambda(\pi)$ as
    \begin{equation}\label{equation: Delta_Lambda_estimation}
    \widehat{\Delta}_{\Lambda,n} (\pi) = \frac{1}{n} \sum_{k=1}^{K}\sum_{i\in\mathcal{I}_{k}} \psi_\Delta\left(Z_i, \pi;\widehat{\eta}_{\Delta,\Lambda}^{-k}  \right ).    
    \end{equation}
    
   \State Return $\widehat{\pi}_{\Delta,\Lambda} = \argmax_{\pi \in \Pi_n}  \widehat{\Delta}_{\Lambda,n}  (\pi)$. 
\end{algorithmic}
\end{algorithm}

\subsection{Assumptions about the Policy Class}\label{section: policy_class}

In policy design, it is crucial to account for multiple constraints, including budget, simplicity, interpretability, and functional form. Statistically speaking, to achieve regret bounds that decay at the rate of $n^{-1/2}$, one must control the complexity of the policy class. In the remainder of the paper, we allow the policy class $\Pi\equiv \Pi_n$  to vary with  $n$. Consequently, we restrict $\Pi_n$ to be VC-subgraph class;  see \cite{vaart2023empirical, gine2021mathematical} for further details. 

\begin{assumption}\label{assumption: policy size}
The policy class \( \Pi_n \) is VC- {subgraph} with VC dimension satisfying \( \mathrm{VC}(\Pi_n) \leq n^{\zeta_{\Pi}} \), where \( 0 < \zeta_{\Pi} < \tfrac{1}{2} \).
\end{assumption}

\begin{remark}
In the statistical learning literature,  $\mathrm{VC}(\Pi_n)$ is often referred to as the pseudo-dimension, which generalizes the classical Vapnik-Chervonenkis (VC) dimension from binary-valued function classes to real-valued ones; see \cite{anthony2009neural,bartlett2019nearly}. For binary-valued function classes, the VC dimension and pseudo-dimension are identical. In this work, we do not distinguish between the classical VC dimension and the pseudo-dimension, and refer to both simply as the VC dimension.
\end{remark}

Various  {ML} models can be used as policy classes. Below, we present several examples of such models along with their corresponding VC dimensions, covering both deterministic and randomized policy classes. 

\begin{example}[Linear Policies]\label{example: Linear policies}
 The class of (deterministic)  linear policies is defined as
\[
\Pi_{n} = \Big\{  \mathbb{1} \{ T(x)^\prime\beta>0 \}:\beta\in\mathbb{R}^{d_{n}}   \Big\},
\]
where $T(x) \in \mathbb{R}^{d_n}$ denotes a vector of transformed covariates constructed from the raw features $X$ using basis expansions such as polynomial terms, B-splines, or interaction terms. The VC dimension of linear policy class $\Pi_{n}$ is given by $d_{n}+1$.
\end{example}

\begin{example}[Decision Trees] A decision tree is a classifier $\pi:\mathcal{X}\to\{0,1\}$ that recursively partitions the feature space $\mathcal{X}$ into disjoint rectangular regions, assigning a label to each partition. The class of depth-$L$ decision trees in $\mathbb{R}^{d}$ has VC dimension bounded on the order of $O\left( 2^{L}\log{d} \right)$.
\end{example} 

Next, we provide several examples of randomized policy classes.

\begin{example}[Neural Networks] Deep neural networks have achieved remarkable success in solving complex classification and regression problems. Formally, a neural network defines a class of functions mapping from  $\mathcal{X}$ to $\mathbb{R}$. Such networks can be employed to represent both deterministic and randomized policy classes. Formally, neural networks model the relationship between inputs and outputs through layers of interconnected computational units (neurons), inspired by biological neural systems. The class of networks with $L$ layers, $p$ parameters and a piecewise linear activation function has a VC dimension bounded on the order of $O\left( Lp\log{p} \right)$; see \cite{bartlett2019nearly}.
\end{example}

\begin{example}[Logistic Policies]\label{example:Logistic Policies}
The class of logistic policies is defined as
\[
\Pi_n = \left \{ \sigma\left( T(x)^\prime \beta \right) : \beta \in \mathbb{R}^{d_n}  \right  \} ,
\]
where $\sigma$ is standard logistic function and \( T(x) \) denotes a transformation of covariates \( x \) as in \cref{example: Linear policies}. It is noted that  $\sigma$ is strictly increasing, and $\Pi_n = \sigma \circ \{ T(x)^\prime \beta : \beta \in \mathbb{R}^{d_n} \} $. As a result,   Theorem 2.6.18 in \cite{vaart2023empirical} implies $\mathrm{VC}(\Pi_n) = d_n + 1$.
\end{example}

\subsection{Nuisance Estimators and Uniform Coupling}

In this section, we present a key lemma demonstrating that the estimation error of the nuisance parameters becomes asymptotically negligible when the objective functions $W_{\Lambda}(\cdot)$ and $\Delta_{\Lambda}(\cdot)$ are estimated using the doubly robust score
with cross-fitting.

\begin{assumption}\label{assumption: bounded support}
Suppose that  $Y$ and $q^{\pm}_{\Lambda}(X, a)$, for $a\in \{0,1\}$ have finite second moments; that is, $\mathbb{E}|Y|^2< \infty$ and $\mathbb{E}\left|q^{\pm}_{\Lambda}(X, a)\right|^2 < \infty$.
\end{assumption}

 
 \begin{assumption}\label{assumption: strict overlap}
    There exist $\kappa\in\left(0, 1/2\right)$ such that the nominal propensity score satisfies $e(x)\in(\kappa,1-\kappa)$ for all $x\in \mathcal{X}$.
\end{assumption}

\Cref{assumption: bounded support} is straightforward to verify. \Cref{assumption: strict overlap} is the standard strict overlap condition in the literature, which is essential for establishing the  {asymptotic} regret upper bounds. Moreover, we adopt an agnostic stance on how the nuisance components, $\widehat{e}$, $ \widehat{q}_{\Lambda}$ and $\widehat{\rho}_{t, \Lambda}$, are estimated. Rather than specifying particular estimation procedures, we impose high-level assumptions on their convergence ratio, as stated below.

\begin{assumption}\label{assumption: nuisance parameter estimation error}
Suppose we have uniformly consistent estimators of the nuisance parameters such that
\begin{equation}\label{eq:sup_consistency}
\begin{aligned}
& \sup_{x} \left| \widehat{e} \left(x\right) - e(x)\right|, \quad  \sup_{x ,a } \left|   \widehat{q}^{\pm}_{\Lambda} (x,a) - {q}^{\pm}_{\Lambda} (x,a) \right|, \quad  \sup_{x, a, t}  \left| \widehat{\rho}_{t,\Lambda}^{\pm}(x,a) - \rho_{t,\Lambda}^{\pm}(x,a) \right|  \xrightarrow{P} 0. \\
\end{aligned}
\end{equation}
Furthermore, there exist constants $\zeta_e, \zeta_q, \zeta_\rho \geq 1/2$ and a sequence $b(n)=o(1)$ such that
\begin{equation}\label{eq:L2_consistency}
\begin{aligned}
& \mathbb{E}  \left[ \left| \widehat{e} \left(X_i\right) - e(X_i)\right|^2\right] \leq  
 \frac{b(n)}{n^{\zeta_e} }, \quad \quad  \mathbb{E} \left[ \left|   \widehat{q}^{\pm}_{\Lambda} (X_i, A_i) - {q}^{\pm}_{\Lambda} (X_i, A_i) \right|^2 \right] \leq  
 \frac{b(n)}{n^{\zeta_q} } , \\
& \mathbb{E} \left[  \left| \widehat{\rho}_{t,\Lambda}^{\pm}(X_i,A_i) - {\rho}_{t,\Lambda}^{\pm}(X_i,A_i) \right|^2 \right] \leq 
 \frac{b(n)}{n^{\zeta_\rho} }.
\end{aligned}
\end{equation}
\end{assumption}

\begin{remark}\label{Remark: examples for estimation of nuisance parameters}
The regression and quantile functions, $e(x)$ and  $q^{\pm}(x,a)$ can be estimated using kernel and  {sieve methods}; see \citep{li2007nonparametric,chen2007large,belloni2019conditional}. Recent advances have widely adopted  {ML} methods for estimating nuisance parameters, with desirable asymptotic properties well established in the literature; see, e.g., Lasso-based approaches \citep{belloni2011ℓ, belloni2014high, belloni2017program} and deep neural network estimators \citep{schmidt2020nonparametric,farrell2021deep, kohler2021rate,padilla2022quantile}. For estimating $\rho_{t,\Lambda}^{\pm}$, which corresponds to the conditional value at risk (CVaR) or expected shortfall, a variety of methods have been developed, including kernel, local linear, and deep neural network approaches; see \citep{fissler2023deep, olma2021nonparametric, yu2025estimation}.
\end{remark}

\begin{remark}
 {\cref{assumption: nuisance parameter estimation error} is similar to Assumption 2 in \citet{athey2021policy}. First, the condition \eqref{eq:L2_consistency} closely follows standard semiparametric assumptions, which typically require an $n^{-1/4}$-convergence rate in $L^2$-norm for nuisance parameters \citep{belloni2017program,chernozhukov2018double,chernozhukov2022locally}.\footnote{Note that when $\zeta_e=\zeta_q=\zeta_\rho=1/2$, the condition \eqref{eq:L2_consistency} corresponds to the $o(n^{-1/4})$ convergence rate in $L^2$-norm. Note that \eqref{eq:L2_consistency} is sufficient for pointwise welfare estimation, as this rate renders nuisance errors stochastically negligible for asymptotic normality.} Second, the condition \eqref{eq:sup_consistency}  is essential in  policy learning {\it with observational data} to withstand the policy optimization over large class $\Pi_n$. Specifically, this ensures that the welfare estimator with plug-in nuisance parameter estimates uniformly approximates its oracle counterpart, as established in \cref{lemma: nuisance parameter estimation error}. }
\end{remark}


The use of doubly robust scores combined with cross-fitting ensures that errors from nuisance parameter estimation become asymptotically negligible for policy learning, provided $\mathrm{VC}(\Pi_n)$ does not grow too quickly with $n$. To formalize this, suppose  the true nuisance functions $\eta_{W, \Lambda}$ and $\eta_{\Delta, \Lambda}$ defined in \cref{section: Identification and Estimation of the Worst-case Welfare Function} are known, the oracle estimators for $W_\Lambda(\pi)$ and $\Delta_\Lambda(\pi)$  {are given by}
\[
\begin{aligned}
W_{\Lambda,n}(\pi) & = \frac{1}{n} \sum_{i=1}^{n} \psi_{W}\left(Z_i,\pi; \eta_{W, \Lambda} \right), \\
\Delta_{\Lambda,n}(\pi) & = \frac{1}{n} \sum_{i=1}^{n} \psi_{\Delta}\left(Z_i,\pi; \eta_{\Delta, \Lambda} \right).
\end{aligned}
\]
We conclude this subsection by demonstrating that $\widehat{W}_{\Lambda,n}(\pi)$ is a valid approximation to $W_{\Lambda,n}(\pi)$, and similarly, $\widehat{\Delta}_{\Lambda,n}(\pi)$ approximates $\Delta_{\Lambda,n}(\pi)$, both with a convergence rate faster than $n^{-1/2}$.

\begin{lemma}\label{lemma: nuisance parameter estimation error}
Under   {Assumptions \ref{assum: density},  \ref{assumption: policy size}, \ref{assumption: bounded support}, \ref{assumption: strict overlap} and \ref{assumption: nuisance parameter estimation error}}, we have 
\[
\begin{aligned}
\mathbb{E} \left[ \sup_{\pi \in \Pi_n} \left|\widehat{W}_{\Lambda,n}(\pi) - W_{\Lambda,n}(\pi) \right| \right] &= o(n^{-1/2}), \\
\mathbb{E} \left[ \sup_{\pi \in \Pi_n} \left|\widehat{\Delta}_{\Lambda,n}(\pi) - \Delta_{\Lambda,n}(\pi) \right| \right] &= o(n^{-1/2}).
\end{aligned}
\]
\end{lemma}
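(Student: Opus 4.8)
The plan is to prove the bound for $\widehat{W}_{\Lambda,n}$; the argument for $\widehat{\Delta}_{\Lambda,n}$ is identical after substituting $\psi_\Delta$, $\eta_{\Delta,\Lambda}$, and \cref{proposition: doubly robust score for CATE} for $\psi_W$, $\eta_{W,\Lambda}$, and \cref{proposition: doubly robust score for mean}. Since $K$ is fixed, it suffices to control each fold separately. Writing $\Delta_k\psi_W(\cdot,\pi) := \psi_W(\cdot,\pi;\widehat{\eta}_{W,\Lambda}^{-k}) - \psi_W(\cdot,\pi;\eta_{W,\Lambda})$ and letting $\mathbb{P}_{n,k}$ denote the empirical measure on fold $\mathcal{I}_k$ and $P$ the population law, I would first decompose, for each $k$,
\[
\frac1n\sum_{i\in\mathcal{I}_k}\Delta_k\psi_W(Z_i,\pi) = \frac{|\mathcal{I}_k|}{n}\,\bigl(\mathbb{P}_{n,k}-P\bigr)\Delta_k\psi_W(\cdot,\pi) + \frac{|\mathcal{I}_k|}{n}\,P\,\Delta_k\psi_W(\cdot,\pi),
\]
and condition on the out-of-fold data so that $\widehat{\eta}_{W,\Lambda}^{-k}$ is a fixed function independent of $\{Z_i\}_{i\in\mathcal{I}_k}$. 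The two summands are the bias term and the empirical-process term, respectively.

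\emph{(Bias term.)} Here I would invoke Neyman orthogonality. By the first part of \cref{proposition: doubly robust score for mean}, $P\,\psi_W(\cdot,\pi;\eta_{W,\Lambda}) = W_\Lambda(\pi)$, and by the second part the Gateaux derivative along $\eta_{W,\Lambda}+r(\widehat{\eta}_{W,\Lambda}^{-k}-\eta_{W,\Lambda})$ vanishes at $r=0$. A second-order Taylor expansion in $r$ of $r\mapsto P\,\psi_W(\cdot,\pi;\eta_{W,\Lambda}+r(\widehat{\eta}_{W,\Lambda}^{-k}-\eta_{W,\Lambda}))$ then gives $\bigl|P\,\Delta_k\psi_W(\cdot,\pi)\bigr| \lesssim \|\widehat{\eta}_{W,\Lambda}^{-k}-\eta_{W,\Lambda}\|_{L^2}^2$, uniformly in $\pi$ because $\pi(t\mid x)\in[0,1]$ enters $\psi_W$ linearly. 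Expanding this square into cross-products of the individual nuisance errors and combining Cauchy-Schwarz with the rates $\zeta_e,\zeta_q,\zeta_\rho\ge 1/2$ of \cref{assumption: nuisance parameter estimation error} bounds each product by $b(n)\,n^{-(\zeta_a+\zeta_b)/2}\le b(n)\,n^{-1/2}=o(n^{-1/2})$; the bounded second derivative of the remainder uses a bounded conditional density of $Y$ near the quantiles together with \cref{assumption: bounded support} and the overlap from \cref{assumption: strict overlap}.

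\emph{(Empirical-process term.)} Conditional on the out-of-fold data, $\{z\mapsto\Delta_k\psi_W(z,\pi):\pi\in\Pi_n\}$ is VC-type, since $\psi_W$ is linear in $\pi(t\mid x)$ and $\Pi_n$ is VC-subgraph (\cref{assumption: policy size}); a maximal inequality for VC classes bounds the conditional expected supremum by $\lesssim \sqrt{\mathrm{VC}(\Pi_n)\log n/n}\,\|\Delta_k\psi_W\|_{L^2}$. The crux of the proof is establishing $\|\Delta_k\psi_W\|_{L^2}\lesssim\|\widehat{\eta}_{W,\Lambda}^{-k}-\eta_{W,\Lambda}\|_{L^2}$ despite the non-smooth indicator $\mathds{1}\{y<q_\Lambda^-\}$ in $\phi_t^-$: the naive envelope only yields $\|\Delta_k\psi_W\|_{L^2}\lesssim\|\widehat{q}_\Lambda^- - q_\Lambda^-\|_{L^2}^{1/2}$, which would force $\zeta_\Pi<1/4$. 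The resolution is the cancellation built into the orthogonal score—the indicator jump in the first term of $\phi_t^-$ combines with the quantile-correction term so that the jumps enter only through $(y-\widehat{q}_\Lambda^-)\bigl(\mathds{1}\{y<\widehat{q}_\Lambda^-\}-\mathds{1}\{y<q_\Lambda^-\}\bigr)$, which is pointwise bounded by $|\widehat{q}_\Lambda^- - q_\Lambda^-|$ on the small event where the indicators disagree. Hence this contribution is $O(\|\widehat{q}_\Lambda^- - q_\Lambda^-\|_{L^2})$ or smaller, while dependence on $e$ and $\rho_{t,\Lambda}^-$ is Lipschitz under \cref{assumption: strict overlap}. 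With $\|\Delta_k\psi_W\|_{L^2}\lesssim\sqrt{b(n)}\,n^{-1/4}$ and $\mathrm{VC}(\Pi_n)\le n^{\zeta_\Pi}$, the term is of order $\sqrt{b(n)\log n}\,\,n^{(2\zeta_\Pi-3)/4}=o(n^{-1/2})$ exactly because $\zeta_\Pi<1/2$.

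Taking expectations, combining the two bounds uniformly over $\pi\in\Pi_n$, and summing over the finitely many folds yields $\mathbb{E}\bigl[\sup_{\pi\in\Pi_n}|\widehat{W}_{\Lambda,n}(\pi)-W_{\Lambda,n}(\pi)|\bigr]=o(n^{-1/2})$, and the same scheme applied to $\psi_\Delta$ gives the statement for $\widehat{\Delta}_{\Lambda,n}$. The main obstacle is the empirical-process step: verifying the $L^2$-Lipschitz continuity $\|\Delta_k\psi_W\|_{L^2}\lesssim\|\widehat{\eta}_{W,\Lambda}^{-k}-\eta_{W,\Lambda}\|_{L^2}$, which hinges on the non-trivial cancellation among the indicator and correction terms of $\phi_t^\pm$ and on a bounded-density regularity condition; without it the admissible policy complexity would contract from $\zeta_\Pi<1/2$ to $\zeta_\Pi<1/4$.
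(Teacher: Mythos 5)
Your proposal is correct, and it rests on the same two pillars as the paper's own argument: (i) conditioning on the out-of-fold data so that $\widehat{\eta}_{W,\Lambda}^{-k}$ acts as a fixed function relative to fold $\mathcal{I}_k$, and (ii) the crucial cancellation whereby the indicator jumps enter only through $(y-q_\Lambda^-)\bigl(\mathds{1}\{y<\widehat{q}_\Lambda^-\}-\mathds{1}\{y<q_\Lambda^-\}\bigr)$ — you correctly identify this as the step that rescues the admissible complexity from $\zeta_\Pi<1/4$ to $\zeta_\Pi<1/2$. The organization, however, is genuinely different. The paper never invokes \cref{proposition: doubly robust score for mean} inside this proof; instead it expands $\phi_1^-(\cdot;\widehat{\eta}_{W,\Lambda})-\phi_1^-(\cdot;\eta_{W,\Lambda})$ into five explicit terms $D_1,\dots,D_5$, pairs them so that the combinations $D_1^{(2)}+D_3$ (the jump cancellation) and $D_1^{(1)}+D_4$, as well as $\widetilde D_2$ and $D_5$, are exactly conditionally mean-zero given the nuisances, applies a separate maximal inequality (Corollary 3 of \cite{athey2021policy}) to each centered piece with its own variance bound, and disposes of the residual products of errors by Cauchy--Schwarz. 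Your version integrates these computations: the term-by-term conditional-mean calculations collapse into one appeal to Neyman orthogonality plus a Taylor remainder, and the term-by-term variance bounds collapse into one global $L^2$-Lipschitz property $\|\Delta_k\psi_W\|_{L^2}\lesssim\|\widehat{\eta}_{W,\Lambda}^{-k}-\eta_{W,\Lambda}\|_{L^2}$. This buys brevity and transparency, at the cost of a slightly stronger implicit regularity requirement: a literal second-order Taylor expansion with a bounded second derivative needs more than a bounded conditional density of $Y$ near the quantiles, since differentiating terms like $q_r\,f_Y(q_r\mid x,t)$ twice in $r$ produces the density derivative, so you would want $f_Y$ Lipschitz (or instead use a first-order expansion with an explicitly controlled remainder, which needs only boundedness of $f_Y$, as in the paper's direct computation). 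This is a minor point rather than a gap: the paper itself silently uses bounded conditional density in its mean-value-theorem bound for $\widehat{V}_{3,k}$, a condition appearing nowhere in \cref{assumption: bounded support}, \cref{assumption: strict overlap}, or \cref{assumption: nuisance parameter estimation error}, so both proofs lean on the same unstated regularity — yours simply flags it more honestly. (Both proofs also share a second, unflagged integrability gloss in terms like $q_\Lambda^-(X,1)\bigl(1/\widehat{e}(X)-1/e(X)\bigr)$, whose $L^2$ control strictly requires either a bounded quantile function, fourth moments, or a sup-norm rate on $\widehat{e}$.)
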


\subsection{Asymptotic Regret Bounds}



In this subsection, we will demonstrate that, under appropriate bounds on potential hidden confounding, the CRW-regret and CRI-regret of the learned MMW and MMI policies decay at a rate that is upper bounded by  $\sqrt{\mathrm{VC}(\Pi_n)/n}$. Recall that the CRW-regret and CRI-regret of the learned optimal policies are defined as 
\[
\begin{aligned}
 \mathrm{Reg}_W \left( \widehat{\pi}_{W,n}  \right) & =   \sup_{\pi \in \Pi_n} W_\Lambda ( \pi) - W_\Lambda ( \widehat{\pi}_{W,n} ), \\
  \mathrm{Reg}_\Delta \left( \widehat{\pi}_{\Delta,n}  \right) & =   \sup_{\pi \in \Pi_n} \Delta_\Lambda ( \pi) - \Delta_\Lambda ( \widehat{\pi}_{\Delta,n} ),
\end{aligned}
\]
where $\widehat{\pi}_{W,n}$ and $ \widehat{\pi}_{\Delta,n}$ are learned from \cref{algo:MMW_PL} and \cref{algo:MMI_PL}.

\begin{theorem}\label{theorem: asymptotic welfare loss}
Under  {Assumptions \ref{assumption: Marginal sensitivity model}, \ref{assum: density}, \ref{assumption: policy size}, \ref{assumption: bounded support}, \ref{assumption: strict overlap} and \ref{assumption: nuisance parameter estimation error}}, then 
\begin{align}
\mathbb{E} \left[  \mathrm{Reg}_W \left( \widehat{\pi}_{W,n}  \right)  \right] &= O\left(  \sqrt{  \mathrm{VC} (\Pi_n) \big / n }    \right), \label{equation:regret_MMW}  \\
\mathbb{E} \left[  \mathrm{Reg}_{\Delta}  \left( \widehat{\pi}_{\Delta,n}  \right)  \right]   &= O\left(  \sqrt{  \mathrm{VC} (\Pi_n) \big / n }    \right).
\label{equation:regret_MMI} 
\end{align}
\end{theorem}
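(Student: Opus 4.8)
The plan is to bound each regret by the standard two-part decomposition that is ubiquitous in the empirical-welfare-maximization literature (e.g.\ \citet{athey2021policy,zhou2023offline}): an approximation error from using the estimated objective instead of the true criterion, and a uniform-deviation (empirical-process) term that controls the fluctuation of the oracle objective around its population counterpart. I focus on $\mathrm{Reg}_W(\widehat{\pi}_{W,n})$; the argument for $\mathrm{Reg}_\Delta$ is verbatim with $\psi_W,W_\Lambda$ replaced by $\psi_\Delta,\Delta_\Lambda$, since both scores are built from the same bounded nuisance ingredients.

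First I would write, for the maximizer $\widehat{\pi}_{W,n}=\argmax_{\pi\in\Pi_n}\widehat{W}_{\Lambda,n}(\pi)$ and any competitor $\pi'\in\Pi_n$, the chain
\begin{align*}
W_\Lambda(\pi') - W_\Lambda(\widehat{\pi}_{W,n})
&= \bigl[W_\Lambda(\pi') - \widehat{W}_{\Lambda,n}(\pi')\bigr]
+ \bigl[\widehat{W}_{\Lambda,n}(\pi') - \widehat{W}_{\Lambda,n}(\widehat{\pi}_{W,n})\bigr] \\
&\quad + \bigl[\widehat{W}_{\Lambda,n}(\widehat{\pi}_{W,n}) - W_\Lambda(\widehat{\pi}_{W,n})\bigr].
\end{align*}
The middle bracket is $\leq 0$ by optimality of $\widehat{\pi}_{W,n}$, so taking the supremum over $\pi'$ yields $\mathrm{Reg}_W(\widehat{\pi}_{W,n}) \leq 2\sup_{\pi\in\Pi_n}\bigl|\widehat{W}_{\Lambda,n}(\pi)-W_\Lambda(\pi)\bigr|$. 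Next I insert the oracle objective $W_{\Lambda,n}(\pi)$ and apply the triangle inequality to split this uniform gap into $\sup_\pi|\widehat{W}_{\Lambda,n}(\pi)-W_{\Lambda,n}(\pi)|$ and $\sup_\pi|W_{\Lambda,n}(\pi)-W_\Lambda(\pi)|$. The first of these is handled directly by \cref{lemma: nuisance parameter estimation error}, which gives $o(n^{-1/2})$ in expectation and is therefore absorbed into the stated $O(\sqrt{\mathrm{VC}(\Pi_n)/n})$ rate.

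The heart of the proof is the remaining oracle empirical-process term $\mathbb{E}[\sup_{\pi\in\Pi_n}|W_{\Lambda,n}(\pi)-W_\Lambda(\pi)|]$. Since $W_{\Lambda,n}(\pi)-W_\Lambda(\pi)=\tfrac1n\sum_i\bigl(\psi_W(Z_i,\pi;\eta_{W,\Lambda})-\mathbb{E}[\psi_W]\bigr)$ with fixed (true) nuisances, this is a centered empirical process indexed by $\Pi_n$. Writing $\psi_W(z,\pi;\eta_{W,\Lambda})=\sum_{t}\phi_t^-(z;\eta_{W,\Lambda})\pi(t|x)$ linearly in $\pi$, I would bound the process via a symmetrization/maximal-inequality argument (Dudley entropy integral or the Gin\'e--Koltchinskii bound in \citet{vaart2023empirical}): the envelope of the class $\{\phi_t^-(\cdot;\eta_{W,\Lambda})\}$ is square-integrable under \cref{assumption: bounded support} and \cref{assumption: strict overlap} (overlap bounds $1/e_t(x)$, while the finite second moments of $Y$ and $q^{\pm}_\Lambda$ control the rest), and the class $\Pi_n$ is VC-subgraph of dimension $\mathrm{VC}(\Pi_n)$ by \cref{assumption: policy size}. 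A VC-subgraph class composed with a fixed bounded weighting preserves the polynomial covering-number bound, so the uniform-entropy integral is finite and the maximal inequality delivers $\mathbb{E}[\sup_{\pi\in\Pi_n}|W_{\Lambda,n}(\pi)-W_\Lambda(\pi)|]=O(\sqrt{\mathrm{VC}(\Pi_n)/n})$.

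I expect the main obstacle to be the empirical-process step, specifically verifying that the $\pi$-weighted score class inherits a usable covering-number / VC bound with a square-integrable envelope. The linearity $\psi_W=\sum_t\phi_t^-\pi(t|x)$ is what makes this tractable: because $\pi$ enters only through the bounded multiplier $\pi(t|x)\in[0,1]$ and $\phi_t^-(\cdot;\eta_{W,\Lambda})$ is a single fixed function (the nuisances are the truth, not estimates), the indexing class is essentially $\{\pi(\cdot):\pi\in\Pi_n\}$ scaled by an integrable envelope, so its entropy is governed entirely by $\mathrm{VC}(\Pi_n)$. Combining the two bounds — the $o(n^{-1/2})$ nuisance term and the $O(\sqrt{\mathrm{VC}(\Pi_n)/n})$ oracle term — through the factor-of-two reduction above gives \cref{equation:regret_MMW}, and repeating the identical steps for $\psi_\Delta$ (which differs only in using $q^{\pm}_\Lambda,\rho^{\pm}_{t,\Lambda}$ in place of their one-sided analogues, all still bounded in $L^2$) gives \cref{equation:regret_MMI}.
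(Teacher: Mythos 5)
Your proposal is correct and follows essentially the same route as the paper's proof: the paper likewise reduces the regret to $2\sup_{\pi\in\Pi_n}\left|W_{\Lambda,n}(\pi)-W_\Lambda(\pi)\right|$ plus cross-fitted-versus-oracle terms controlled by \cref{lemma: nuisance parameter estimation error}, and then bounds the oracle empirical process by symmetrization, an Orlicz-norm/Dudley entropy-integral maximal inequality, and the VC-subgraph covering-number bound (with the square-integrable envelope $|\phi_0^-|+|\phi_1^-|$ furnished by \cref{assumption: bounded support} and \cref{assumption: strict overlap}, and with constants verified to be uniform in $n$ since $\Pi_n$ varies with $n$). The only cosmetic difference is bookkeeping: the paper routes the decomposition through the oracle maximizer $\pi_{W,n}$ of $W_{\Lambda,n}$ and the population maximizer $\pi^{*}_{W,n}$ (invoking an $\varepsilon$-approximate-optimizer argument when exact maximizers fail to exist), which yields the same bound as your direct three-term split followed by the triangle inequality.
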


\cref{theorem: asymptotic welfare loss} improves upon the result in \cite{kallus2021minimax}, where Proposition 6 shows that estimating the nominal propensity score leads to a regret bound that exceeds the oracle regret bound (i.e., the regret bound assuming a known nominal propensity score)  by an additional term of the order
\[
\frac{1}{n} \sum_{i=1}^n \left|  1/\widehat{e}\left(X_i\right) - 1/e(X_i) \right|.
\]
With the application of the doubly robust score and cross-fitting, the regrets of the learned MMW and MMI policies decay at the rate  $\sqrt{\mathrm{VC}(\Pi_n)/n}$, and the estimation errors of the nuisance parameters have no asymptotic effect on the regret bound.

\begin{remark}
Let us analyze \cref{equation:regret_MMW} in \cref{theorem: asymptotic welfare loss} in detail; the analysis of \cref{equation:regret_MMI} proceeds in the same fashion.  To be more precise,  there is a universal constant $K > 0$ not depending on $\Lambda$  such that 
\begin{equation}\label{equation: Upper_bound_W}
\limsup_{n \rightarrow \infty}  \frac{\mathrm{Reg}_W \left( \widehat{\pi}_{W,n}  \right) }{  \sqrt{  \mathrm{VC} (\Pi_n) / n }  } \leq K \sigma_W(\Lambda)  
\end{equation}
where $\sigma_W(\Lambda)^2 =  \mathbb{E} \left| \left(\phi_0^- - \phi_1^-\right) (Z, \eta_{W, \Lambda})\right|^2$, and  $\phi_t^-$ for $t \in \{0,1\}$ are defined in \cref{equation: phi_t^- definition}.

The asymptotic upper bound given in \cref{equation: Upper_bound_W} depends on the universal constant $K$ and a variance complexity term $\sigma_W(\Lambda)$ that captures the second moments of $\phi_{t}^- (Z, \eta_{W, \Lambda})$ for $t \in \{0,1\}$. As shown in the proof of \cref{theorem: asymptotic welfare loss},  the constant $K$ is universal in the sense that it  does not depend on sample size $n$, the sensitivity parameter $\Lambda$, the nuisance function $\eta_{W, \Lambda}$, or even  the underlying distribution of $P_o\in \mathcal{P}(\Lambda)$. The variance complexity term $\sigma_W(\Lambda)$ reflects the level of uncertainty induced by the unobserved confounding. When $\Lambda = 1$, then $\sigma_W^2(1)$ coincides with the semiparametric efficient variance for the average treatment effect under the selection-on-observables assumption; see \cite{newey1994asymptotic, robins1994estimation}.
\end{remark}

\section{Simulation Studies}\label{section: simulation}

In this section, we conduct simulation studies to evaluate the performance of the MMW and MMI policies, which are learned using Algorithm~\ref{algo:MMW_PL} and \ref{algo:MMI_PL}, respectively. To facilitate comparison with the results in \citet{kallus2021minimax}, we use the logistic policies given in \cref{example:Logistic Policies}.

\subsection{Simulation Design}
\label{section:simulation-design}

The data-generating process (DGP) is specified as follows. Let $\log \Lambda^* = 1.5$, which implies that the true sensitivity parameter is $\Lambda^* = 4.482$. The observed covariates $X \in \mathbb{R}^2$, treatment assignment $A \in \{0,1\}$, and outcome $Y \in \mathbb{R}$ are generated according to:
\[
\begin{aligned}
X &\sim \mathcal{N}(\mu_X, I_2), \quad U \mid X \sim \mathrm{Bern}\left( \frac{\Lambda^*}{1 + \Lambda^*}e(X)  +  \frac{1}{1 +\Lambda^*}\big( 1-e(X) \big) \right), \\
A & \mid U,X \sim  \mathrm{Bern}\left( \frac{e(X)}{e(X)+{\Lambda^*}^{(1- 2U)}\big( 1-e(X) \big)} \right),
\end{aligned}
\]
where $\mu_X = [-1, 1]'$ and $I_2\in \mathbb{R}^{2\times 2}$ is the identity matrix. The nominal propensity score $e(X)$ is defined as:
\[
e(X) = \mathbb{P}(A=1 \mid X) = \sigma\left(  \zeta(X)^\prime\theta \right), \quad \text{with} \quad \sigma(z) = \frac{1}{1 + e^{-z}},
\]
where the parameter vector $\theta$ and the nonlinear feature map $\zeta(X)$ are given by
\[
\theta = \left[0.2, 0.4, 0.1, -0.1, 0.5, -0.5\right], \quad \zeta(X) = \left[\max(x_1, 0), \, \frac{x_1 x_2^2}{10}, \, \sin(x_2^2), \, x_1, \, x_2, \, 1\right]'.
\]

It can be readily verified that $U$ perturbs the nominal odds ratio $e(X)/(1 - e(X))$ by a multiplicative factor of ${\Lambda^*}^{\pm 1}$, ensuring the DGP satisfies the MSM with $\Lambda^* = 4.482$. Finally, the potential outcome is generated as:
\[
Y(A) = \beta_{\text{cons}} + \beta_A A + X' \beta_X + (A \cdot X)'\beta_{X,A}  + \beta_U U + \epsilon, \quad \epsilon \sim \mathcal{N}(0,1),
\]
where:
\[
\beta_{\text{cons}} = -0.2, \quad \beta_A = -0.1, \quad \beta_X = [1, -1]', \quad \beta_{X,A} = [0.2, 0.4]', \quad \beta_U = 1.5.
\]

We generate $\mathrm{i.i.d.}$ samples $\{ X_i, A_i, U_i, Y_i(1), Y_i(0)\}_{i=1}^{n}$ according to the DGP described above, with $n=2{,}000$. The observed data $\{ X_i, A_i, Y_i \}_{i=1}^{n}$ are used to learn the policies. The conditional quantile functions $q_{ {\Lambda}}^{\pm}$ are estimated using gradient boosted trees and the nominal propensity score $e$ and CVaR $\rho_{t,  {\Lambda}}^{\pm}$ are estimated using random forests.

We evaluate the performance of estimated policy $\widehat{\pi}$  using three metrics: (i) the expected welfare, defined in \cref{equation: standard policy learning}; (ii) the worst-case welfare $W_\Lambda(\widehat{\pi})$, as given in Theorem~\ref{theorem: max-min welfare assignment rule}; and (iii) the worst-case policy improvement $\Delta_\Lambda(\widehat{\pi})$ given in Theorem~\ref{theorem: CATE-based robust welfare}. 
All three metrics are estimated using sample averages computed on an independent out-of-sample dataset of 100,000 randomly drawn observations. To assess the stability of our method, we evaluate each metric over 100 repeated experiments. We report the mean from these repetitions and visualize the variability using shaded bands that represent the 95\% confidence bands, calculated from the standard deviation across the 100 experiments.

\subsection{Simulation Analysis}
\label{section:simulation-analysis}

We report the performance of the MMW and MMI policies, as well as the robust IPW-based policy of \citet{kallus2021minimax} (KZ for short). As a benchmark, we include the policy proposed by \citet{athey2021policy} (AW), which assumes no unobserved confounding $(\Lambda =1)$. To evaluate the MMW, MMI, and KZ policies, we conduct a sensitivity analysis by varying the sensitivity parameter over $\log \Lambda \in \{ 0.1,0.2,\ldots, 3.5 \}$. This assesses their robustness to miscalibration relative to the true level of unobserved confounding in the true DGP with $\log \Lambda^* = 1.5$. By design, the MMW and MMI policies reduce to the AW policy at the $\Lambda=1$ setting. In contrast, the KZ policy remains different from AW even when $\Lambda=1$, as its objective function does not include a bias-correction term for the estimated propensity scores.

\Cref{fig:fix_dgplambda_treated_probs} shows how each policy adjusts its treatment probability in response to the assumed level of unobserved confounding, and helps explain their relative performance in terms of expected welfare, presented in \Cref{fig:fix_dgplambda_average_welfare}. The AW policy, assuming no unobserved confounding($\Lambda = 1$), yields an aggressive strategy, assigning treatment to 95\% of individuals. In contrast, MMI, MMW and KZ are sensitivity-aware, adopting a conservative approach that is responsive to $\Lambda$. As $\log \Lambda$ increases, they systematically reduce treatment rates to hedge against selection bias. However, their conservative dynamics differ significantly. The KZ policy is overly conservative, exhibiting a precipitous drop in assignments due to its reliance on the loose lower bound on policy improvement from \cite{zhao2019sensitivity}. In contrast, the MMI policy is precisely conservative;  {by identifying the sharp lower bound of the CATE}, it establishes a sharper bound on policy improvement, leading to a more calibrated and gradual reduction in treatment. Finally, the MMW policy's aggressiveness under high uncertainty is a direct consequence of its decision rule, illustrated in \cref{section: Identification and Estimation of the Worst-case Welfare Function}, which favors treatment whenever  $\mu_{\Lambda}^{-}(x,1)$ exceeds $\mu^{-}_\Lambda(x,0)$.

\Cref{fig:fix_dgplambda_average_welfare} shows the resulting expected welfare. The MMI policy achieves its peak expected welfare at $\log \Lambda = 1$. It outperforms the MMW policy for assumed confounding levels below the true value of $\log \Lambda^* = 1.5$. In regimes of greater assumed uncertainty $\log \Lambda > 1.5$, however, the MMW policy's performance is superior, yielding significantly higher welfare. Their performance crossover occurs precisely at this true value, confirming their complementary nature. This divergence occurs because, in high-uncertainty regimes, the MMI policy becomes overly cautious, whereas the MMW policy adopts a more risk-tolerant strategy.

Finally, \Cref{fig:fix_dgplambda_worst_welfare,fig:fix_dgplambda_worst_improvement}  validate that the observed behaviors are consequences of each policy's specific design. As expected, \Cref{fig:fix_dgplambda_worst_welfare} shows the MMW policy achieving the highest worst-case welfare, and \Cref{fig:fix_dgplambda_worst_improvement} confirms the MMI policy's success in maximizing the worst-case policy improvement. This confirms that each policy effectively optimizes its intended objective, leading to the distinct and complementary performance profiles we identified.

\begin{figure}[H]
    \centering
    \begin{subfigure}[t]{0.48\textwidth}
        \centering
        \includegraphics[width=\linewidth]{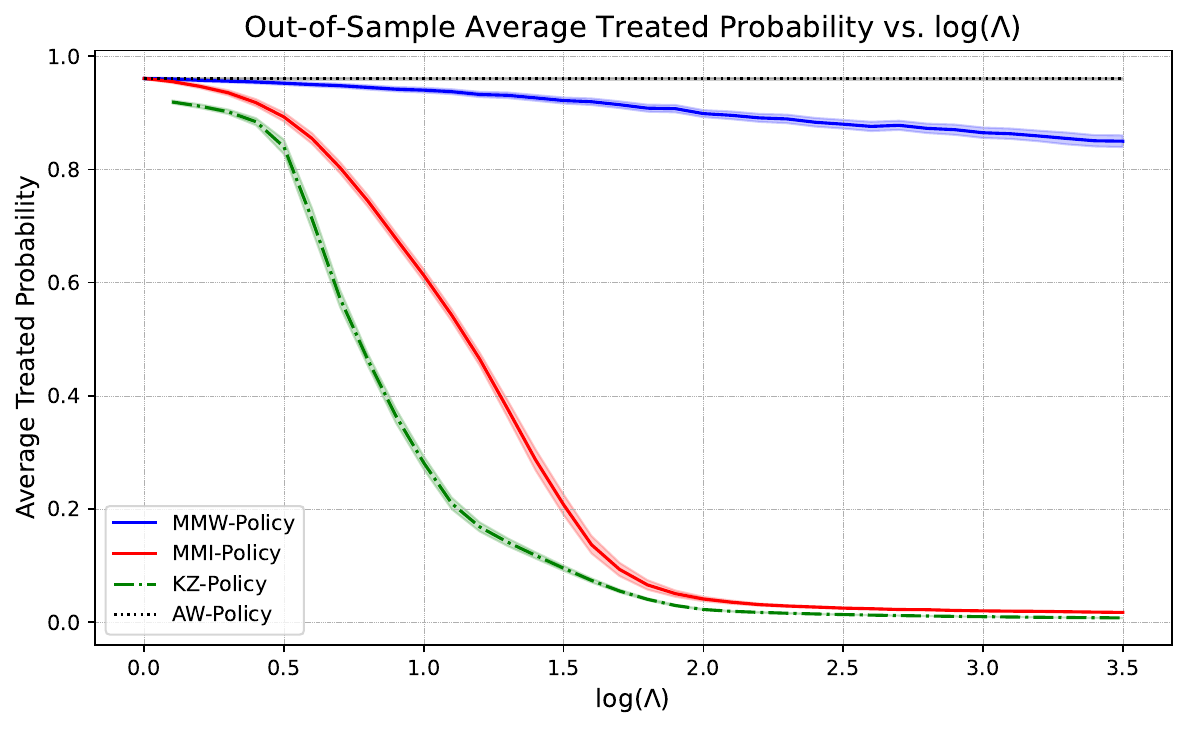}
        \caption{Average Treatment Probability}
        \label{fig:fix_dgplambda_treated_probs}
    \end{subfigure}
    \hfill
    \begin{subfigure}[t]{0.48\textwidth}
        \centering
        \includegraphics[width=\linewidth]{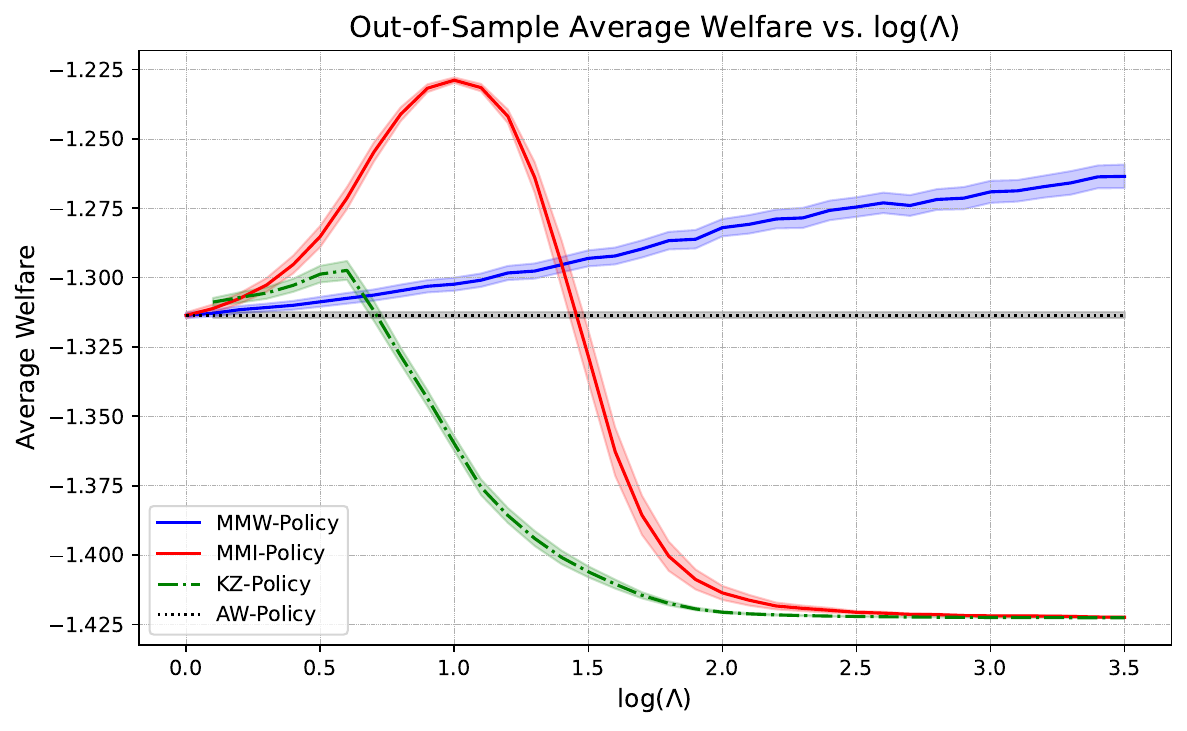}
        \caption{Expected Welfare}
        \label{fig:fix_dgplambda_average_welfare}
    \end{subfigure}

    \caption{Average treatment probability and expected welfare as a function of the sensitivity parameter $\log(\Lambda)$. The MMI and MMW policies are compared against the KZ and AW benchmarks. Shaded areas are 95\% confidence bands.}
\end{figure}

\begin{figure}[H]
    \centering
    \begin{subfigure}[b]{0.48\linewidth}
        \centering
        \includegraphics[width=\linewidth]{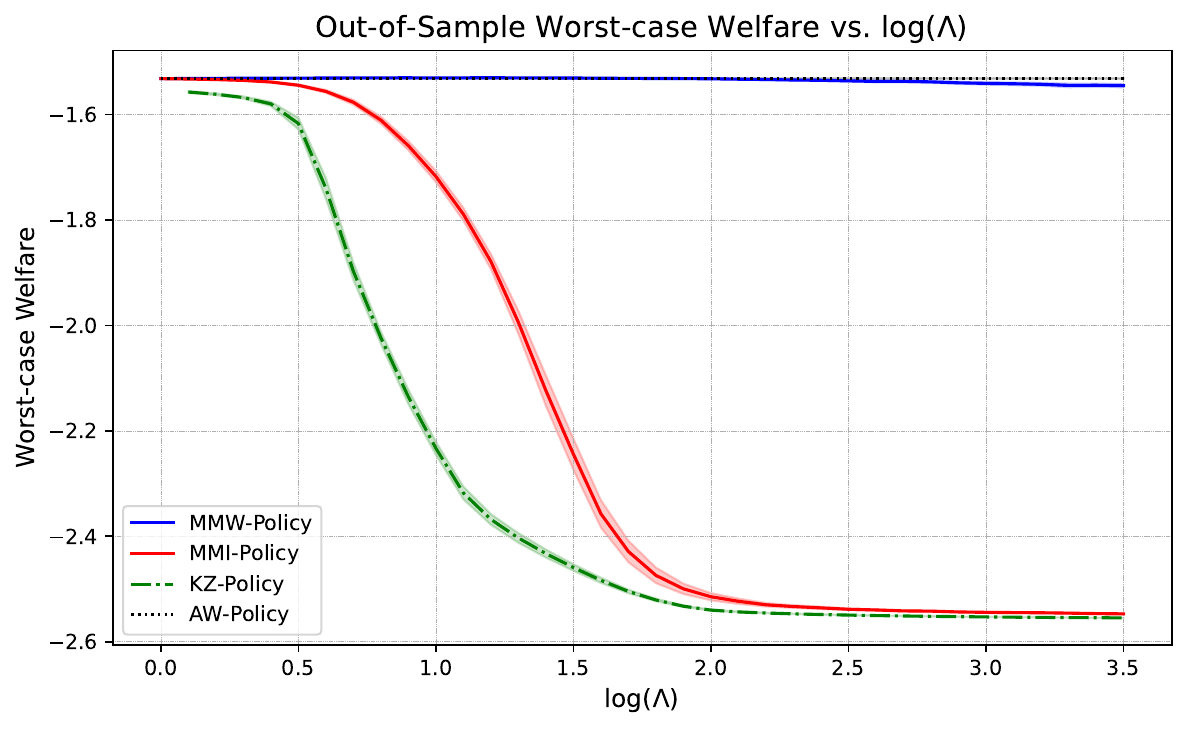}
        \caption{Worst-case Welfare}
        \label{fig:fix_dgplambda_worst_welfare}
    \end{subfigure}
    \hfill
    \begin{subfigure}[b]{0.48\linewidth}
        \centering
        \includegraphics[width=\linewidth]{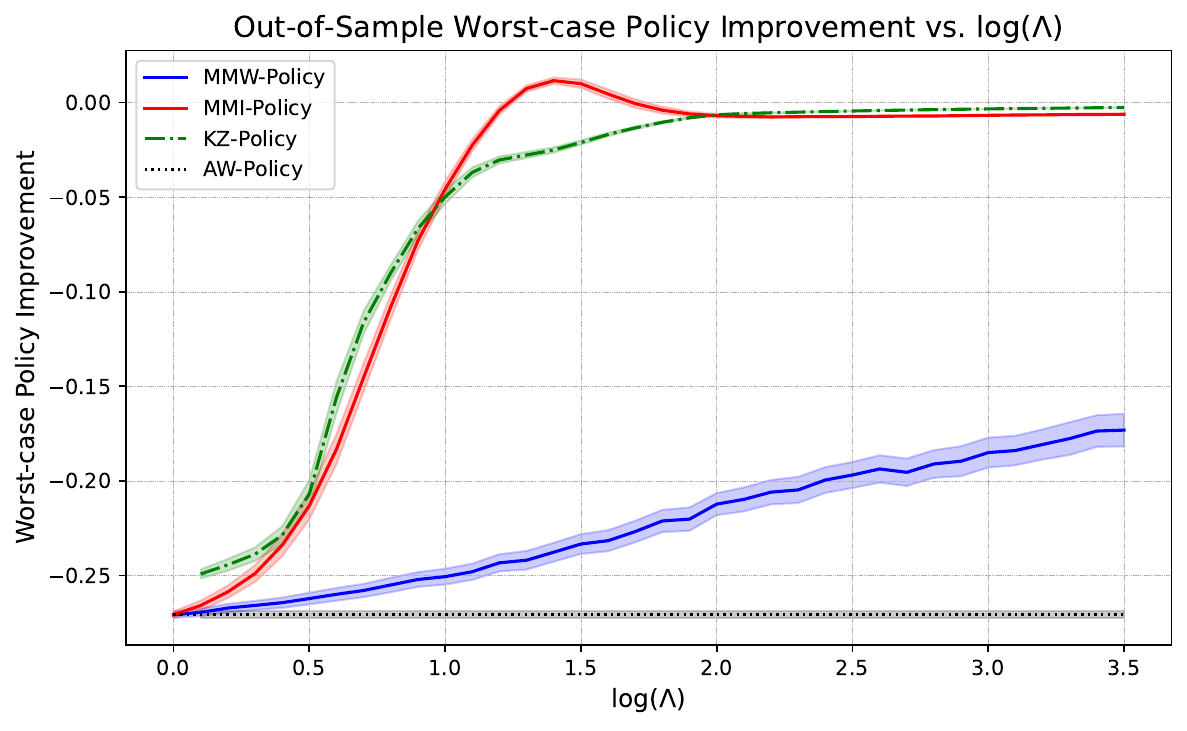}
        \caption{Worst-case Policy Improvement}
        \label{fig:fix_dgplambda_worst_improvement}
    \end{subfigure}
    \caption{ Performance on Intended Objectives. The figures plot (a) the MMW policy's objective of worst-case welfare and (b) the MMI policy's objective of worst-case policy improvement against $\log(\Lambda)$. Shaded areas denote 95\% confidence bands.}
\end{figure}

\section{Empirical Application}\label{section: empirical}

In this section, we apply our method to two empirical applications. \cref{section: JTPA} revisits the JTPA study and examines assignment to job training programs, while \cref{section: headstart} explores enrollment decisions in the Head Start program. In both settings, unobserved confounders likely give rise to endogenous selection.

\subsection{The JTPA Study}\label{section: JTPA}
We apply the MMW and MMI policy learning methods to the National Job Training Partnership Act (JTPA) Study, a large-scale randomized controlled trial evaluating training services for disadvantaged adults \citep{bloom1997benefits}. In the experiment, eligibility for services was randomly assigned, and participants' earnings were tracked over the 30-month period after the assignment. This dataset was  used by \cite{kitagawa2018a} to estimate an intent-to-treat optimal policy for 9,223 applicants using their Empirical Welfare Maximization (EWM) approach, based on covariates such as education and prior earnings.

While eligibility was randomly assigned, compliance in the JTPA study was imperfect: approximately 23\% of individuals did not adhere to their assigned eligibility status, as shown in \cref{tab:jtpa_compliance}. This imperfect compliance introduces self-selection into actual participation decisions, raising concerns about unobserved confounding. For instance, eligible individuals who opt into the program may exhibit stronger job search motivation or face fewer personal constraints, such as childcare responsibilities, health issues, or transportation barriers, than those who decline participation. These unobserved factors likely influence both program participation and post-program earnings, violating the unconfoundedness assumption required by conventional policy learning methods.

\begin{table}[ht]
\centering
\caption{Joint distribution of eligibility and participation, JTPA study}
\label{tab:jtpa_compliance}
\begin{tabular}{lcccc}
\toprule
 & \multicolumn{3}{c}{Eligibility (IV)} & \\
\cmidrule(lr){2-4}
Participation ($A_i$) & 0 & 1 & Total \\
\midrule
0 & 3047 & 2118 & 5165 \\
1 & 43 & 4015 & 4058 \\
\midrule
Total & 3090 & 6133 &  {9223} \\
\bottomrule
\end{tabular}
\vspace{0.5em}
\footnotesize
\\
Data source: \cite{kitagawa2018a} and \citet{abadie2002instrumental}.
\end{table}

We focus on actual program participation, as the policy-relevant objective is to target individuals who are most likely to benefit from job training and to allocate participation accordingly. Building on the perspective of \cite{d2021orthogonal}, who frame policy learning as deriving optimal treatment rules under potential unobserved confounding, we apply the MMW and MMI methods, both of which are explicitly designed to address such confounding. Unlike \cite{d2021orthogonal}, who use a binary instrumental variable (eligibility) to construct partial identification intervals for the CATE, we adopt a complementary strategy by employing with a pre-specified sensitivity parameter $\Lambda$.

\paragraph{Calibration of Sensitivity Parameter.}
 {We calibrate the  $\Lambda$ following the breakdown frontier perspective of \citet{masten2020inference}. The IV bounds discussed in \citet{d2021orthogonal} provide a  credible benchmark for this purpose. 
Let $\tau_{\mathrm{IV}}^{-}(x)$ be the  lower bound for CATE $\tau(x)$, and $\mu^-_{\mathrm{IV}}(x,a)$ be the  lower bound for the conditional response function $\mu(x,a)$, constructed using IV.\footnote{IV-based bounds can be constructed under different sets of identification  assumptions. Following \citet{d2021orthogonal}, we use the Balke-Pearl bounds as the IV-based benchmark in this application.}
We  select $\Lambda$ as the smallest relaxation level at which the MSM lower bounds are at least as conservative as the IV lower bounds.
In practice, we select $\Lambda$ based on the coverage share, that is, the fraction of sample observations for which the MSM-implied lower bound is no larger than the IV-based lower bound.
For the MMI policy, the coverage share is computed using $\tau^{-}_{\Lambda}(x)$ and $\tau^{-}_{\mathrm{IV}}(x)$. We select the smallest value of $\Lambda$ for which this coverage share reaches the pre-specified threshold. For the MMW policy, we apply the same rule to $\mu^{-}_{\Lambda}(x,a)$ and $\mu^{-}_{\mathrm{IV}}(x,a)$ for $a\in{0,1}$.  As shown in \Cref{fig:lambda_calibration_mmw_mmi_combined}, targeting an 80\% coverage share yields $\Lambda = 2.5$ for the MMI policy and $\Lambda = 4.5$ for the MMW policy.}

\begin{figure}[htb]
    \centering    \includegraphics[width=0.95\linewidth]{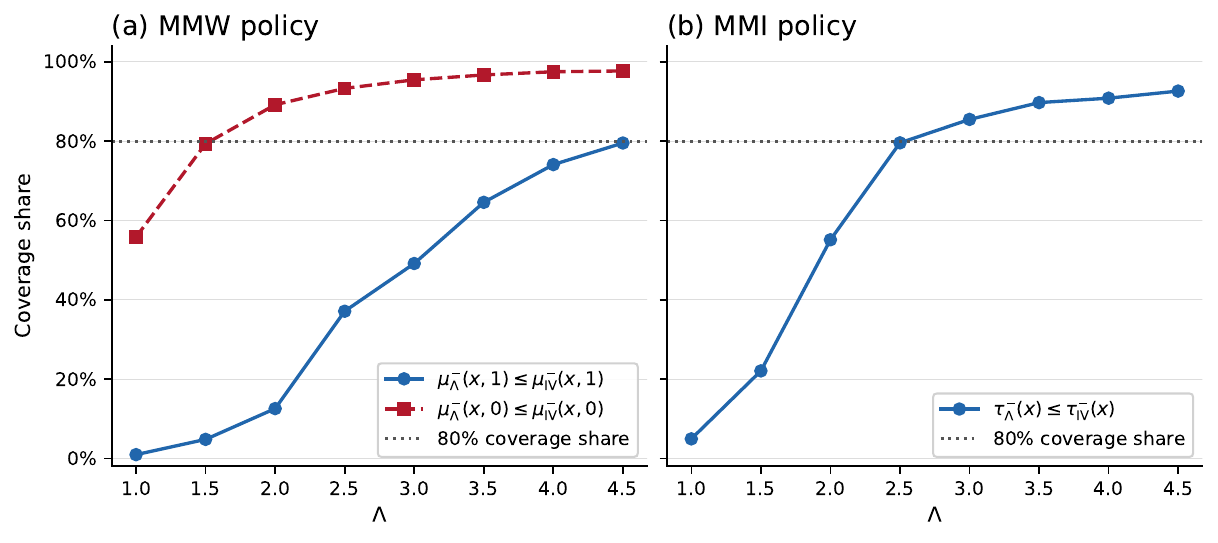}
    \caption{ {Calibration of the sensitivity parameter $\Lambda$ against the IV-based lower bounds. Panel (a) plots, as functions of $\Lambda$,  the coverage shares $n^{-1}\sum_{i=1}^{n}
\mathds{1}\left\{
\mu^{-}_{\Lambda}(x_i,a)
\le
\mu^{-}_{\mathrm{IV}}(x_i,a)
\right\}$ for $a \in \{0,1\}$. Panel (b) plots the coverage share $n^{-1}\sum_{i=1}^{n}
\mathds{1}\left \{
\tau^{-}_{\Lambda}(x_i)
\le
\tau^{-}_{\mathrm{IV}}(x_i)
\right\}$. The plotted coverage shares are computed using the estimated MSM and IV lower bounds. The horizontal dotted line marks the $80\%$ target threshold, attained at $\Lambda = 4.5$ for the MMW policy, determined by the treated-arm comparison, and $\Lambda = 2.5$ for the MMI policy.}
    }    \label{fig:lambda_calibration_mmw_mmi_combined}
\end{figure}

 We estimate the optimal MMW and MMI policies using Algorithms~\ref{algo:MMW_PL} and~\ref{algo:MMI_PL}, with $K=10$ folds for cross-fitting. The conditional quantile functions $q_{ {\Lambda}}^{\pm}$ are estimated using gradient boosted trees, and the nominal propensity score $e$ and CVaR $\rho_{t,  {\Lambda}}^{\pm}$ are estimated using random forests.  {The nominal propensity score estimates range from $0.27$ to $0.73$, so we retain the full sample without trimming.} To account for program costs, we subtract \$1,216 from the outcomes of treated individuals. This amount corresponds to the average cost of services per actual participant, as reported in Table 5 of~\citet{bloom1997benefits}. Following the approach of~\cite{kitagawa2018a} and~\citet{d2021orthogonal}, we consider the class of quadrant treatment policies due to its simplicity and interpretability:
\[
\Pi \equiv \left\{  \mathds{1} \left\{ s_1(\text{edu} - t_1) > 0, \; s_2(\text{earnings} - t_2) > 0 \right\} : s_1, s_2 \in \{-1, 1\}, \; t_1, t_2 \in \mathbb{R} \right\}.
\]
 {
Given the cross-fitted doubly robust scores, we maximize the estimated robust criterion by searching over all values of $(s_1,s_2)$ and all threshold pairs $(t_1,t_2)$ induced by the observed values of education and pre-program earnings, following the implementation in \citet{kitagawa2018a}.}

\begin{figure}[htb]
    \centering
    \includegraphics[width=0.95\linewidth]{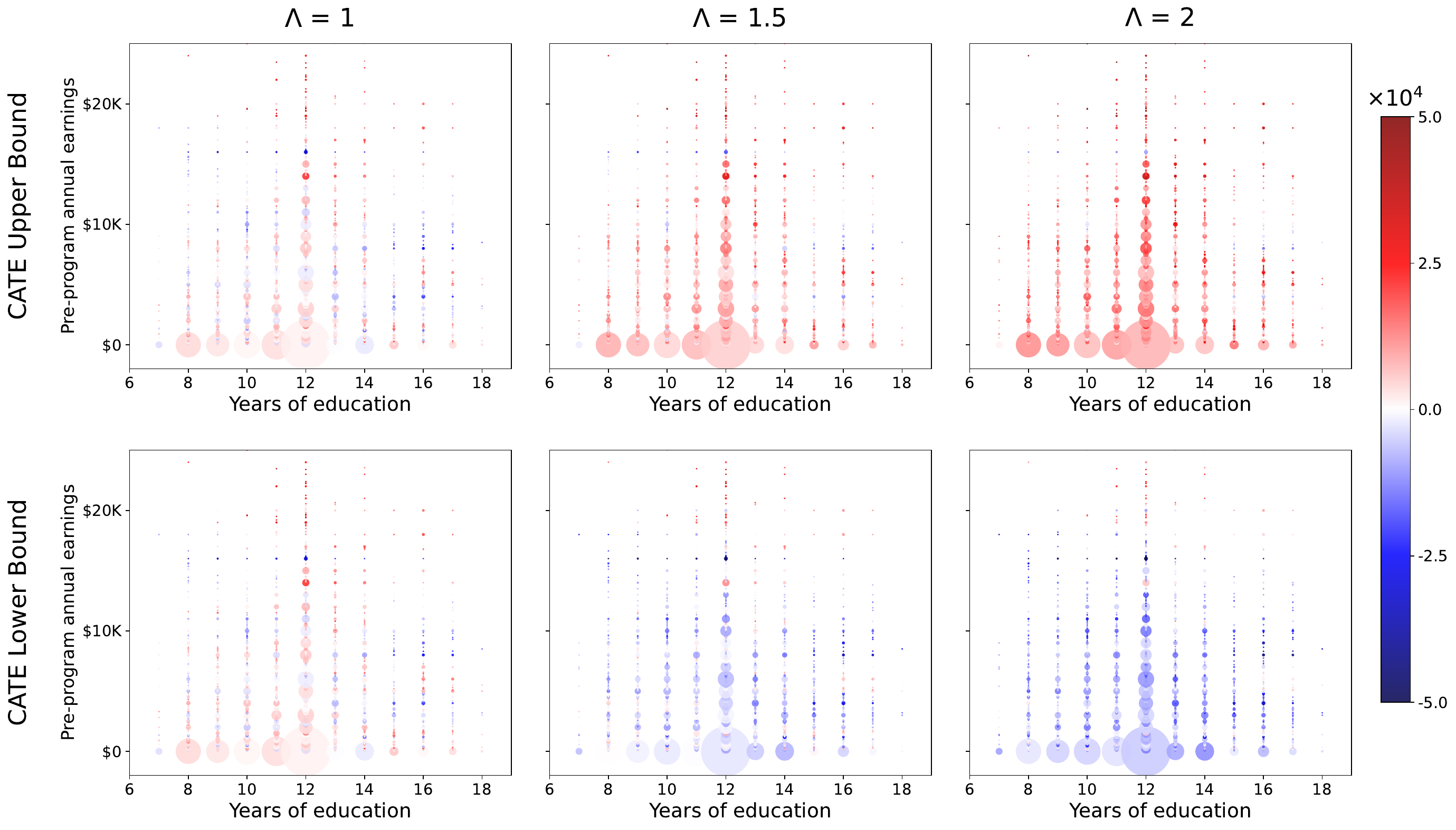}
    \caption{
        Upper and lower bounds of the CATE under uncertainty levels $\Lambda = 1, 1.5, 2$.
    }
    \label{fig:CATE_hi_lo_bounds}
\end{figure}

\Cref{fig:CATE_hi_lo_bounds} presents the estimated upper and lower bounds of the CATE under the MSM framework, evaluated across different values of the sensitivity parameter $\Lambda \in \{1, 1.5, 2\}$. When $\Lambda = 1$,  the estimated upper and lower bounds coincide. Choosing a larger value of $\Lambda$ can yield more robust policy targeting by ensuring that the identified set contains the true causal effect. However, this robustness comes at the cost of producing less informative estimates and more conservative policy intervention. Therefore, we recommend that practitioners examine results across a range of $\Lambda$ values to assess the sensitivity of their conclusions to unobserved confounding.

\Cref{fig:MMW_MMI_quadrant} illustrates the estimated optimal quadrant policies for MMW and MMI {, using the IV minimax-regret (MMR) policy of \cite{d2021orthogonal} as a benchmark}. When $\Lambda = 1$, both the MMW and MMI policies  {are} exactly the AW policy, but with a treated fraction approaching one.  As sensitivity parameter $\Lambda$ increases, the MMI policy becomes increasingly conservative, with the treatment rate dropping rapidly to near zero when $\Lambda \geq 1.5$. The extremely low treatment rates under the MMI policy arise because the estimated lower bounds of the CATE fall below zero for nearly all individuals when $\Lambda \geq 1.5$. Consequently, our subsequent analysis focuses on the MMW policy. When $\Lambda = 1$, the quadrant policy targets individuals with  {more than 7.5 years of education} and pre-program earnings below \$40{,}800. In contrast, at $\Lambda = 2$, the MMW quadrant policy assigns treatment to those with at least 10.5 years of education and earnings below \$35{,}144. As the sensitivity parameter $\Lambda$ increases, the estimated MMW policy becomes more selective, prioritizing individuals with higher educational level and lower earnings. {\footnote{The MMW policy becomes excessively conservative when $\Lambda \geq 2.5$. The treatment rate drops to roughly 4.6\%, which limits its usefulness for practical targeting.}}  {In sharp contrast, the IV MMR policy assigns treatment to roughly 96\% of the sample. Such an excessively high treatment rate fails to provide meaningful targeting guidance, offering little practical relevance for selective policy design.}

\begin{figure}[htb]
	\centering
	\includegraphics[width=0.95\linewidth]{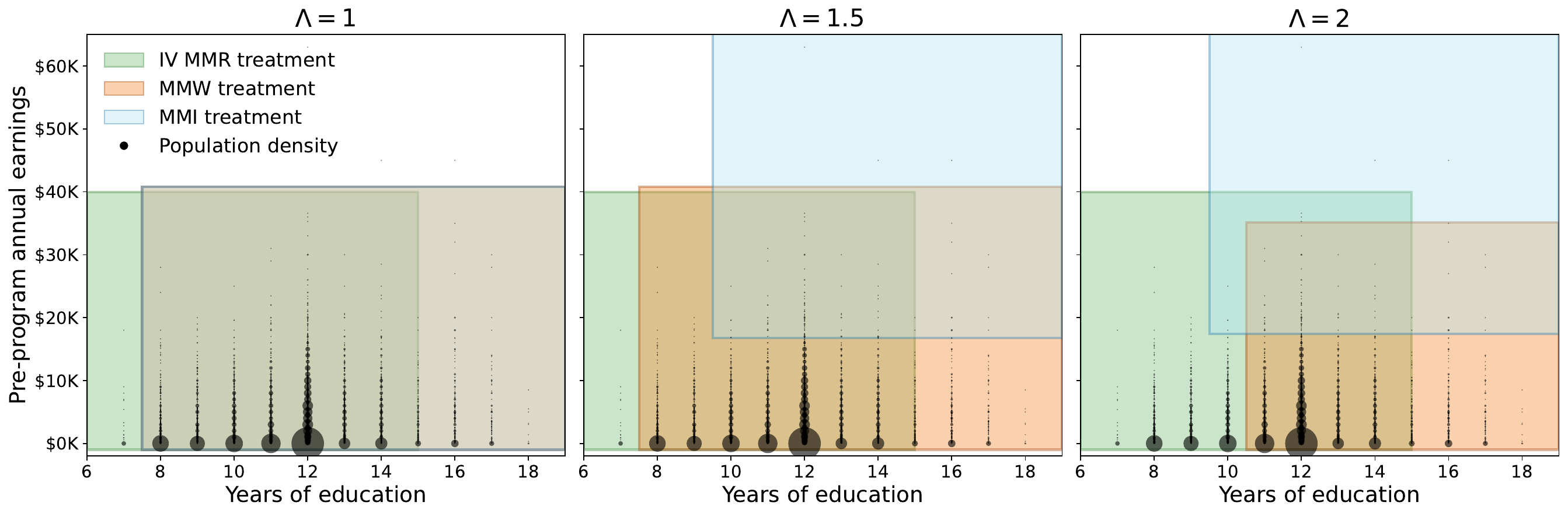}
	\caption{Optimal MMW and MMI quadrant policies under different uncertainty levels: $\Lambda = 1, 1.5$, $2$ {, compared with the $\Lambda$-invariant IV MMR policy}. }
	\label{fig:MMW_MMI_quadrant}
\end{figure}


\subsection{The Head Start Program}\label{section: headstart}
We further demonstrate our confounding-robust policy learning method by applying it to enrollment decisions for the Head Start program. The objective is to develop a policy that improves academic outcomes by targeting admission to children most likely to benefit. This application presents a multi-valued treatment setting with three alternatives, Head Start, other preschools, or no preschool, necessitating the use of our MMW approach extended for multiple treatments as detailed in \cref{appendix: discrete treatment}.

Head Start is a federal matching grant initiative designed to foster the cognitive, social, and health development of children from low-income families, with the aim of preparing them to enter school on a more equal footing with their more advantaged peers. A substantial body of research has evaluated the impact of Head Start participation on outcomes such as academic achievement, physical health, and long-term socioeconomic status; see, e.g., \citep{currie1993does, deming2009early, ludwig2007does, walters2015inputs}.

A key challenge in evaluating Head Start's impact is that enrollment is not randomly assigned. First, families self-select into Head Start or other preschool programs based on eligibility and perceived benefits. Those who value early education may also make unobserved investments in their children, such as fostering enriched home environments or providing supplemental learning. Second, limited program capacity in many areas gives administrators discretion over admissions, leading to selection based on both observed and unobserved characteristics, some of which also affect child outcomes. As a result, failing to account for such confounding can lead to targeting policies that misidentify the children most likely to benefit.

We draw on data from the National Longitudinal Survey of Youth (NLSY) and its child supplement, the National Longitudinal Survey of Youth 1979 Children and Young Adults (NLSCYA). The NLSY began in 1979 with a nationally representative sample of young women, and the NLSCYA has tracked their children since 1986. Our analysis sample is constructed by linking child-level outcomes from the NLSCYA with maternal background characteristics from the original NLSY cohort.

We use the child's standardized score on the Peabody Picture Vocabulary Test (PPVT), a widely used measure of early verbal ability, as a proxy for our policy objective of improving academic achievement. The pre-treatment covariates used in our analysis capture a range of child, maternal, and household characteristics. These include: (1) the percentile rank of average net family income between ages 0 and 4 (\textit{Income Pctl}); (2) child's gender  {(\textit{Gender})}; (3) birth weight (\textit{Birth Wt}); (4) weight at preschool entry (\textit{Preschool Entry Wt}); (5) firstborn status  {(\textit{Firstborn})}; (6) mother's highest completed grade  {(\textit{Mother's Grade})}; (7) mother's AFQT percentile score  {(\textit{Mother's AFQT})}; (8) number of the mother's biological siblings  {(\textit{Bio.\ Siblings})}; (9-10) number of household members with less than 12 years  {(\textit{HH Edu ${<}12$})} and at least 16 years of education  {(\textit{HH Edu ${\geq}16$})}; and (11) race  {(\textit{White}, \textit{Hispanic}, \textit{Black})}. After excluding observations with missing data, the final sample consists of 3,826 children: 755 attended Head Start, 2,020 enrolled in other preschool programs, and 1,051 did not attend any preschool. 

\paragraph{Calibration of Sensitivity Parameter.}  {Since valid instruments are unavailable in the Head Start study, we cannot calibrate the sensitivity parameter $\Lambda$ using the IV-based procedure in \cref{section: JTPA}. Following \citet{hsu2013calibrating} and \cite{kallus2021minimax}, we instead calibrate $\Lambda$ using observed covariates by computing the effect of omitting each observed
covariate on the odds ratio of the propensity score. Specifically, for each treatment arm $a\in\{0,1,2\}$,  we estimate two propensity score models using random forests: a full model $\widehat{e}_{a}(x)$ that includes all covariates, and a leave-one-out model $\widehat e_a^{-j}(x_{-j})$ that omits the $j$-th covariate. The individual-level calibrated sensitivity parameter associated with omitting the $j$-th covariate in arm $a$ is defined as
\[
    \Lambda_{j,a}(x)
    = \max\!\left(
        \frac{\widehat{e}_{a}(x)/(1-\widehat{e}_{a}(x))}
             {\widehat e_a^{-j}(x_{-j})/(1-\widehat e_a^{-j}(x_{-j}))},\;
        \frac{\widehat e_a^{-j}(x_{-j})/(1-\widehat e_a^{-j}(x_{-j}))}
             {\widehat{e}_{a}(x)/(1-\widehat{e}_{a}(x))}
    \right).
\]
We take the 95th percentile of $\Lambda_{j,a}(X_i)$ for $i \in [n]$ as the calibrated value associated with omitting covariate $j$ for arm $a$, denoted as $\Lambda_{j,a}$. We then compute the covariate-specific calibrated sensitivity parameter as $\Lambda_{j} = \max_a \Lambda_{j,a}$, that is reported in \Cref{tab: lambda_heatmap_P95}. Finally, taking the maximum across covariates gives the overall calibrated sensitivity parameter $\Lambda = 3.6$. The largest covariate-specific value is associated with the family income percentile.}

\begin{table}[ht]
\centering
\caption{ {Calibrated sensitivity parameters for each covariate (no propensity truncation).}}
\label{tab: lambda_heatmap_P95}
\resizebox{\textwidth}{!}{%
\begin{tabular}{l*{12}{>{\centering\arraybackslash}p{1.5cm}}}
\toprule
 & \makecell{\textit{Firstborn}}
 & \makecell{\textit{Birth}\\\textit{Wt}}
 & \makecell{\textit{Preschool}\\\textit{Entry Wt}}
 & \makecell{\textit{Gender}}
 & \makecell{\textit{Mother's}\\\textit{AFQT}}
 & \makecell{\textit{Mother's}\\\textit{Grade}}
 & \makecell{\textit{Bio.}\\\textit{Siblings}}
 & \makecell{\textit{HH Edu}\\${\geq}16$}
 & \makecell{\textit{HH Edu}\\${<}12$}
 & \makecell{\textit{Income}\\\textit{Pctl}}
 & \makecell{\textit{Hispanic}}
 & \makecell{\textit{Black}} \\
\midrule
$\Lambda_{j}$
 & 1.746 & 1.802 & 1.833 & 1.485 & 2.550
 & 1.655 & 1.718 & 1.683 & 1.576 & 3.596
 & 1.691 & 2.049 \\
\bottomrule
\end{tabular}%
}
\end{table}


 {We estimate the nuisance functions following the same estimation procedures used in  Sections \ref{section:simulation-analysis} and \ref{section: JTPA}. The initial propensity score estimates range from \(0.0005\) to \(0.95\), so we truncate them to \([\alpha,1-\alpha]\) with \(\alpha=0.05\) to enforce the strict overlap condition.}  While the full set of pre-treatment covariates is used to estimate doubly robust scores, we restrict the variables used for policy optimization to a small, interpretable subset: percentile rank of family income, mother's AFQT score, birth weight, and weight at preschool entry. These are selected based on fairness and ethical considerations.  
To ensure interpretability, we restrict the policy class to depth-2 decision trees,  {using  \texttt{R}-package \texttt{policytree}.}

The policy tree estimated under $\Lambda = 1$ serves as our benchmark and
reproduces the recommendation that would be obtained by methods such as
\citep{athey2021policy,zhou2023offline}, which assume no unobserved confounding. An analysis of this benchmark policy reveals several features that present challenges for direct implementation.  {First, it completely excludes Head Start, contradicting the program's mandate to serve disadvantaged households. Second, it misallocates these economically vulnerable households to other tuition-charging preschools rather than the subsidized public program.} This motivates our subsequent sensitivity analysis to assess how the policy recommendations change when accounting for unobserved confounding.  {Once we account for unobserved confounding, the policy recommendation changes markedly. Specifically, at the calibrated value $\Lambda = 3.6$, the robust policy targets Head Start toward a clearly defined disadvantaged group, while taking a more conservative non-intervention stance for the majority.\footnote{This targeted recommendation is stable as \(\Lambda\) varies over \(\{1.5,2,\ldots,4\}\) and is insensitive to the choice of the propensity score truncation level \(\alpha\).}}

\begin{figure}[htb]
    \centering
    \includegraphics[width=\linewidth]{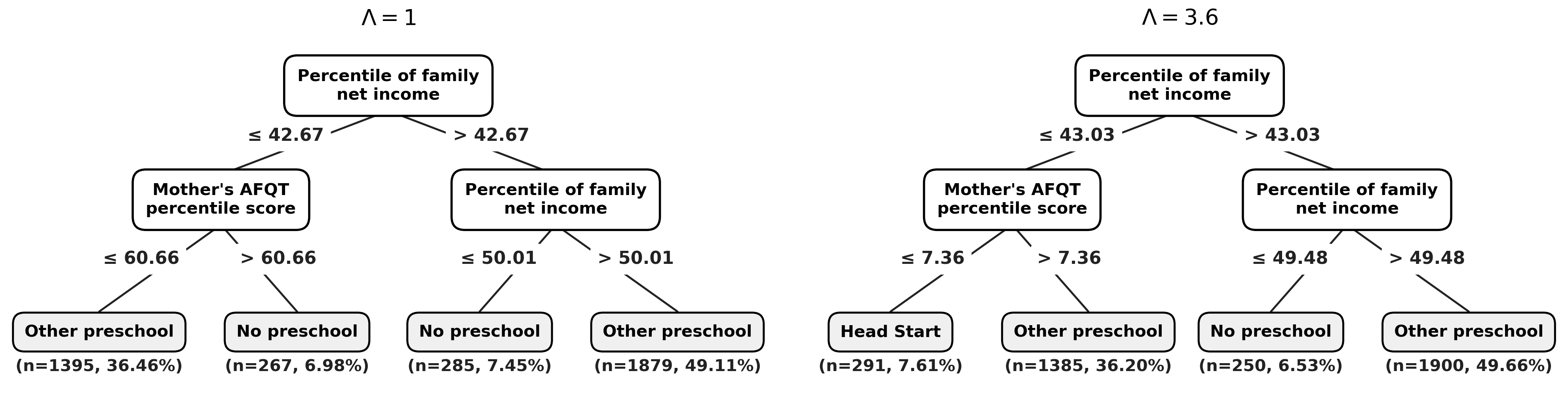}
    \caption{ {This figure presents optimal policy trees under unconfoundedness and calibrated value $\Lambda=3.6$. While the unconfoundedness benchmark completely excludes Head Start, incorporating unobserved confounding enables the optimal policy to effectively target this program toward the disadvantaged subgroup.}
    }
    \label{fig:policy_tree_lam1to2dot5}
\end{figure}



\section{Concluding Remarks}

Formulating effective policies from observational data is challenged by unobserved confounding, which can lead to biased and potentially harmful policy intervention. This paper proposes a robust and efficient framework for policy learning that explicitly accounts for such endogeneity. Our contributions are twofold: first, we derive sharp, closed-form identification results for robust welfare criteria under the Marginal Sensitivity Model; second, we develop doubly robust scores for their estimation. This innovation enables the use of flexible  {ML} methods for nuisance estimation within our framework. Building on these components, we establish regret bounds for the resulting confounding-robust policies.

Our work also offers insights into the practical challenges of applying policy learning methods. While powerful, ML-based approaches that optimize a single objective can yield policies that are difficult to trust and interpret, particularly when built on fragile assumptions. Our work underscores that sensitivity analysis is a critical step for examining a policy's foundational reliability. 

Finally, several directions remain open for future research. First, while our framework primarily focuses on binary treatments, extending it to multivalued or continuous treatments is a promising direction that would broaden its applicability and introduce new methodological challenges. Second, the choice of the sensitivity parameter 
$\Lambda$ remains an important yet unresolved issue.  {Although our empirical analysis illustrates two calibration strategies, namely observed-covariate calibration following \citet{hsu2013calibrating} and IV-based calibration, the choice of $\Lambda$ in a given application may still require substantive domain knowledge. Developing more systematic, data-driven methods for selecting $\Lambda$ remains a key avenue for future investigation.}

\begin{appendices}

\section{ {Policy Learning with Self-Selection as an Option}}\label{appendix: self-selection}

Most policy-learning work studies compulsory treatment rules, using observable characteristics to mandate program participation or exclusion \citep{kitagawa2018a,athey2021policy}.  However, individuals often possess private information, unobserved by the social planner, regarding their potential benefits from the program. So, self-selection serves as a valuable source of information for policy targeting \citep{manski2013public,alatas2016self,ito2023selection}. Building on these insights, \citet{ida2022choosing} integrate targeting by observables with targeting through self-selection. Specifically, they use (quasi)-experimental data to assign individuals to one of three arms: compulsory treatment, compulsory un-treatment, or self-selection, fulfilling the policymaker's objective.

\cref{section: Identification} studies robust policy learning under unobserved confounding, focusing on compulsory intervention. This section extends this framework to incorporate self-selection as a policy option. In observational studies, unmeasured confounding makes it harder to assess precisely the welfare gains from counterfactual compulsory assignments. Therefore, absent robust evidence that mandated assignments improve welfare, preserving the status quo self-selection is a natural policy option.

Motivated by this, we formalize the planner's goal as assigning each individual to one of three arms: compulsorily treated (indexed as $1$), compulsorily untreated (indexed as $0$), and self-selection (indexed as $\mathrm{S}$). An individual assigned to $1$ or $0$ is exposed to or excluded from the program with no opt-out or opt-in option, whereas an individual assigned to $S$ retains the autonomy to choose whether to take it up. For clarity, we focus on deterministic policies mapping $\mathcal{X}$ to $\{0,1,\mathrm{S}\}$ in this section. Let $\Pi^\dagger_o$ denote the deterministic policy class incorporating the self-selection option 
\[
\Pi^\dagger_o =  \left\{ \pi: \mathcal{X}\rightarrow \{0,1,\mathrm{S}\} \text{ is Borel measurable}  \right \}.
\]
Since the observational data reflects the status quo of self-selection, the counterfactual outcome under arm $\mathrm{S}$ coincides with the realized outcome $Y = A Y(1) + (1-A)Y(0)$.  For any policy $\pi^\dagger \in \Pi^\dagger_o $, the resulting post-treatment outcome $Y(\pi^\dagger)$ can be expressed as:
\begin{equation}\label{eq:post-treatment outcome}
\begin{aligned}
Y(\pi^\dagger) &= \mathds{1}\{ \pi^\dagger(X) = 1 \} Y(1) + \mathds{1}\{ \pi^\dagger(X) = 0 \} Y(0) + \mathds{1}\{ \pi^\dagger(X) = \mathrm{S} \} Y\\
& = \left[  \mathds{1}\{ \pi^\dagger(X) = 1 \}  + A \mathds{1}\{ \pi^\dagger(X) = \mathrm{S} \}   \right] Y(1) \\
& +  \left[  \mathds{1}\{ \pi^\dagger(X) = 0 \}  + (1- A) \mathds{1}\{ \pi^\dagger(X) = \mathrm{S} \}   \right] Y(0).
\end{aligned}
\end{equation}

We now generalize the MMW and MMI criteria from \cref{section: Policy Learning beyond Unconfoundedness} to an augmented policy class $\Pi^\dagger_o$ that incorporates self-selection. The  worst-case welfare $W_{\Lambda}$ and policy improvement $\Delta_{\Lambda}$ functions are defined as:
\begin{align*}
W_{\Lambda}(\pi^\dagger) = \inf_{Q \in \mathcal{P}(\Lambda)} \mathbb{E}_Q \big[ Y(\pi^\dagger) \big] \quad \text{and} \quad
\Delta_{\Lambda}(\pi^\dagger) = \inf_{Q \in \mathcal{P}(\Lambda)} \mathbb{E}_Q \big[ Y(\pi^\dagger) - Y(0) \big].
\end{align*}
Over the policy class $\Pi_o^\dagger$, the first best MMW and  MMI policies are given, respectively, by
\begin{subequations}\label{eq:AMMW-AMMI-policy}
\begin{align}
  \pi^\dagger_{W,\Lambda}
  &\in
  \underset{\pi^\dagger \in \Pi^\dagger_o}{\argmax}
  \ W_{\Lambda}(\pi^\dagger),
  \label{eq:AMMW-policy}\\
  \pi^\dagger_{\Delta,\Lambda}
  &\in
  \underset{\pi^\dagger \in \Pi^\dagger_o}{\argmax}
  \ \Delta_{\Lambda}(\pi^\dagger).
  \label{eq:AMMI-policy}
\end{align}
\end{subequations}

\cref{thm:AMMW} characterizes the worst-case welfare when self-selection is incorporated as a feasible policy option, subsequently deriving the first-best MMW policy. To this end, recall the notation $\mu_{\Lambda}^{\pm}(X, a)$ for $a \in \{0,1\}$ as defined in \cref{section: Identification of Robust Criteria}. Moreover, we define the worst-case conditional average treatment effect on the untreated (ATU) and the best-case conditional average treatment effect on the treated (ATT):
\[
\begin{aligned}
\mathrm{ATU}^-_{\Lambda}(x) & \equiv \inf_{Q \in \mathcal{P}(\Lambda)} \mathbb{E}_Q \left[ Y(1)- Y(0) |X=x, A =0 \right],\\
\mathrm{ATT}^+_{\Lambda}(x)  & \equiv \sup_{Q \in \mathcal{P}(\Lambda)} \mathbb{E}_Q \left[ Y(1)- Y(0) |X=x, A =1 \right].
\end{aligned}
\]
These quantities are point-identified  through the following closed-form relations:
\[
\begin{aligned}
\mathrm{ATU}^-_{\Lambda}(x)  = \frac{\mu_{\Lambda}^{-}(x, 1) - \mathbb{E}[Y|X=x]}{1 - e(x)} \ \ \  \text{and} \ \ \
\mathrm{ATT}^+_{\Lambda}(x) = \frac{\mathbb{E}[Y|X=x] - \mu_{\Lambda}^{-}(x, 0)}{e(x)}.
\end{aligned}
\]

\begin{theorem}\label{thm:AMMW}
 Under \cref{assumption: Marginal sensitivity model}, for any policy $\pi^\dagger \in \Pi^\dagger_o$, the worst-case welfare is given by
 \[
 \begin{aligned}
 W_{\Lambda}(\pi^\dagger) & = \mathbb{E} \left[ \mathds{1}\{ \pi^\dagger(X) = 1 \} \mu_{\Lambda}^{-} (X,1) + \mathds{1}\{ \pi^\dagger(X) = 0 \} \mu_{\Lambda}^{-} (X,0)\right]\\
 &+ \mathbb{E}\left[ \mathds{1}\{ \pi^\dagger(X) = \mathrm{S} \} \mathbb{E}[ Y| X ] \right].
 \end{aligned}
 \]
 Moreover, a first-best MMW policy that solves \cref{eq:AMMW-policy} is given by
\begin{equation}\label{eq:MMW_SS}
 \pi_{W,\Lambda}^{\dagger} (x) = \begin{cases}
1, & \text{if}\ \ \mu_{\Lambda}^{-} (x,1) > \mu_{\Lambda}^{-} (x,0)  \ \ \text{and}\ \ \mathrm{ATU}^-_{\Lambda}(x) >0, \\
0, & \text{if}\ \ \mu_{\Lambda}^{-} (x,0) > \mu_{\Lambda}^{-} (x,1)  \ \ \text{and}\ \ \mathrm{ATT}^+_{\Lambda}(x)  < 0, \\
\mathrm{S}, &  \text{if}\ \  \mathrm{ATU}^-_{\Lambda}(x)  \leq 0  \ \ \text{and}\ \ \mathrm{ATT}^+_{\Lambda}(x)  \geq 0   . \\
\end{cases}
\end{equation}
\end{theorem}

\begin{remark}\label{remark:selection_driven_targeting}
\cref{thm:AMMW} establishes a confounding-robust selection-driven targeting rule analogous to \cite{ida2022choosing}. The social planner assigns individuals to  $\{0,1,\mathrm{S}\}$, based on robust, subpopulation-specific treatment effects. Compared to the selection-absent MMW policy in \cref{theorem: max-min welfare assignment rule}, the selection-driven MMW policy in \cref{thm:AMMW} establishes a higher threshold for intervention. 

Specifically, compulsory treatment is assigned only when $\mathrm{ATU}_{\Lambda}^{-}(x)>0$, ensuring a guaranteed benefit even for non-takers under the most adversarial confounding.\footnote{Takers (non-takers) refer to individuals who currently choose to participate in (opt out of) the treatment under the status quo self-selection regime.} Conversely, compulsory un-treatment is assigned only when $\mathrm{ATT}_{\Lambda}^{+}(x) < 0$, implying that treatment may be harmful for takers even under the most favorable confounding scenario. When neither compulsory assignment can be robustly justified, the social planner defaults to self-selection, recognizing that there is no sufficient basis to override individual discretion.
\end{remark}

\cref{thm:AMMI} characterizes the selection-driven MMI policy. Define the worst-case conditional average treatment effect on the treated (ATT):
\[
\mathrm{ATT}^-_{\Lambda}(x) \equiv \inf_{Q \in \mathcal{P}(\Lambda)} \mathbb{E}_Q \left[ Y(1)- Y(0) | X=x, A =1 \right],
\]
which is point-identified as:
\[
\mathrm{ATT}^-_{\Lambda}(x) = \frac{\mathbb{E}[Y | X=x] - \mu_{\Lambda}^{+}(x, 0)}{e(x)}.
\]

\begin{theorem}\label{thm:AMMI}
 Under \cref{assumption: Marginal sensitivity model}, for any policy $\pi^\dagger \in \Pi^\dagger_o$, we have that
\[
\begin{aligned}
 \Delta_{\Lambda}(\pi^\dagger) &  = \mathbb{E} \left[ \mathds{1}\{ \pi^\dagger(X) = 1 \} \mu_{\Lambda}^{-} (X,1) + \mathds{1}\{ \pi^\dagger(X) = 0 \} \mu_{\Lambda}^{+} (X,0)\right]\\
  &  +    \mathbb{E}\left[ \mathds{1}\{ \pi^\dagger(X) = \mathrm{S} \} \mathbb{E}[Y|X] \right] -  \mathbb{E} \left[ \mu_{\Lambda}^{+} (X,0) \right],
  \end{aligned}
  \]
Moreover, a first-best MMI policy that solves \cref{eq:AMMI-policy}  is given by
 \begin{equation}\label{eq:AMMI}
     \pi_{\Delta,\Lambda}^{\dagger} (x) = \begin{cases}
1, & \text{if}\ \ \mu_{\Lambda}^{-} (x,1) > \mu_{\Lambda}^{+} (x,0)  \ \ \text{and}\ \ \mathrm{ATU}^-_{\Lambda}(x) >0 , \\
0, & \text{if}\ \ \mu_{\Lambda}^{+} (x,0) > \mu_{\Lambda}^{-} (x,1)  \ \ \text{and}\ \ \mathrm{ATT}^-_{\Lambda}(x)  < 0, \\
\mathrm{S}, &  \text{if}\ \  \mathrm{ATU}^-_{\Lambda}(x)  \leq 0  \ \ \text{and}\ \ \mathrm{ATT}^-_{\Lambda}(x)  \geq 0   . \\
\end{cases}
 \end{equation}
\end{theorem}

\begin{remark} Following \cref{remark:selection_driven_targeting}, the selection-driven MMI policy in \cref{eq:AMMI} imposes a stricter criterion for compulsory intervention than the MMW policy in \cref{eq:MMW_SS}. By evaluating the expected outcome under self-selection $\mathbb{E}[Y|X=x]$  against the best-case control outcome $\mu_{\Lambda}^{+}(x, 0)$, it further expands the ambiguity zone   and defaults a larger share of the population to self-selection. 
\end{remark}

\section{ {Baseline-Relative MMI and Minimax Regret}} \label{appendix: baseline policy and mmr}

This appendix collects two extensions of the robust policy learning framework: a baseline-relative version of the MMI criterion introduced at the end of \cref{section: Policy Learning beyond Unconfoundedness}, and a minimax regret criterion.

\subsection{Baseline-Relative Max-Min Improvement}

In this subsection, we investigate the baseline-relative policy improvement criterion introduced in \cref{eq: baseline MMI}. For an arbitrary baseline policy $\pi_{0} \in \Pi$, we define
\[
\Delta_{\Lambda}(\pi,\pi_{0}) = \inf_{Q\in \mathcal{P}(\Lambda)} \mathbb{E}_Q\left[ Y(\pi(X)) - Y\left(\pi_0(X) \right) \right].
\]
Given $\pi_{0}\in\Pi$, the optimal policy under this criterion is defined as
\begin{equation}\label{eq:MMI-baseline}
\pi_{\Delta,\Lambda}(\cdot;\Lambda,\pi_{0}) \in \mathop{\operatorname{argmax}}_{\pi \in \Pi} \Delta_{\Lambda}(\pi,\pi_{0}).
\end{equation}
The following corollary provides a sharp characterization of $\Delta_{\Lambda}(\pi,\pi_{0})$ and its associated first-best policy.

\begin{corollary}\label{thm:MMI-baseline}  Under \cref{assumption: Marginal sensitivity model}, for any policies $\pi,\pi_{0}\in \Pi$, we have
    \begin{align*}
        \Delta_{\Lambda}(\pi,\pi_{0})
         = \mathbb{E} \left[  \tau_{\Lambda}^{-}(X) \max \left\{ \pi(X) - \pi_{0}(X), 0 \right\} - \tau_{\Lambda}^{+}(X) \max \left\{ \pi_{0}(X) - \pi(X), 0 \right\} \right].
    \end{align*}
    Moreover, a first-best MMI policy relative to the baseline $\pi_0$, which solves \cref{eq:MMI-baseline} over the unconstrained policy space $\Pi=\Pi_{o}$, is given by
    \[
    \pi_{\Delta,\Lambda}^{\star}(x;\pi_{0}) = \begin{cases}
    1, & \text{if}\ \  \tau_{\Lambda}^-(x) \geq 0, \\
    \pi_{0}(x), & \text{if}\ \  \tau_{\Lambda}^-(x) < 0 < \tau_{\Lambda}^+(x), \\
    0, & \text{if}\ \  \tau_{\Lambda}^+(x) \leq 0. \\
\end{cases}
    \]    
\end{corollary}

\Cref{thm:MMI-baseline} shows that, under ambiguity, the first-best baseline-relative MMI policy has a baseline-preserving property: whenever the identified interval for the CATE contains zero, the optimal rule retains the baseline decision $\pi_0(x)$. Since the standard no-treatment baseline, $\pi_0(x)\equiv 0$, is nested as a special case, \cref{thm:MMI-baseline} directly generalizes \cref{theorem: CATE-based robust welfare}.

\subsection{Minimax Regret}\label{subsection:MMR}

Following \citet{manski2004statistical}, we consider an alternative criterion that minimizes the worst-case regret over the MSM uncertainty set. For any distribution $Q \in \mathcal{P}(\Lambda)$, the regret of a policy $\pi$ is defined as its welfare shortfall relative to the optimal policy within the class $\Pi$:
\[
\mathrm{Reg}_Q (\pi) = \sup_{\pi^\prime \in \Pi} \mathbb{E}_Q \left[ Y(\pi^\prime(X)) \right] - \mathbb{E}_Q \left[ Y(\pi(X)) \right].
\]
The MMR policy, denoted by $\pi^\star_{R,\Lambda}$, minimizes this welfare loss uniformly over the ambiguity set $\mathcal{P}(\Lambda)$. Formally, it is defined as the solution to the following minimax optimization problem:
\begin{equation}\label{equation: min-max regret PL}
\pi^\star_{R,\Lambda} \in \mathop{\operatorname{argmin}}_{\pi \in \Pi} \sup_{Q \in \mathcal{P}(\Lambda)} \mathrm{Reg}_Q (\pi).
\end{equation}

\begin{theorem}\label{theorem: minimax regret policy}
Under \cref{assumption: Marginal sensitivity model}, we have
\begin{equation}\label{equation: minimax regret expression - minimax formula}
\inf_{\pi \in \Pi} \sup_{Q \in \mathcal{P}(\Lambda)} \Reg_Q(\pi) =  \inf_{\pi \in \Pi} \sup_{\pi^\prime \in \Pi}   \mathbb{E} \left[ b_{\Lambda}(\pi, \pi^\prime) (X) \right],
\end{equation}
where
\[
b_{\Lambda}(\pi, \pi^\prime)(x) = \tau_{\Lambda}^{+}(x) \max \left\{ \pi'(x) - \pi(x), 0 \right\} - \tau_{\Lambda}^{-}(x) \max \left\{ \pi(x) - \pi'(x), 0 \right\}.
\]
In particular, if $\Pi =\Pi_o$, the first-best MMR policy
\[
\pi^{\star}_{R,\Lambda} (x) =  \begin{cases}
    1, & \text{if}\ \   \tau^-_\Lambda(x) \geq 0, \\
    \tau^+_\Lambda(x)/ \left(\tau^+_\Lambda(x) - \tau^-_\Lambda(x) \right), & \text{if}\ \  \tau^-_\Lambda(x) < 0 < \tau^+_\Lambda(x) \\
    0, & \text{if}\ \  \tau^+_\Lambda(x) \leq 0, \\
\end{cases}, 
\]
solves the minimax regret problem of \cref{equation: minimax regret expression - minimax formula}.
\end{theorem}

\begin{remark}
\Cref{theorem: minimax regret policy} highlights a distinction between the MMR criterion and the max-min criteria considered in the main text. If $\tau^-_\Lambda(x)>0$ or $\tau^+_\Lambda(x)<0$, the sign of the treatment effect is robustly determined, and the first-best MMR policy is deterministic. If $0 \in \left(\tau^-_\Lambda(x),\tau^+_\Lambda(x) \right)$, neither treatment nor control uniformly dominates, and randomization arises as a way to balance the worst-case regret from choosing the wrong action. Additionally, when $\Pi$ is restricted to be deterministic, the corresponding first-best MMR policy in this scenario becomes $ \mathds{1}\left\{ \tau^+_\Lambda(x) + \tau^-_\Lambda(x) > 0 \right\}$.
\end{remark}

\section{ {Additional Empirical Analyses}}\label{section: Additional Empirical Analyses}

This section provides supplementary empirical results and robustness checks to complement the results in \cref{section: empirical}. Specifically, we extend our confounding-robust policy learning framework to incorporate self-selection as an explicit policy option for both the JTPA and Head Start studies. We then conduct sensitivity analyses to verify that our empirical conclusions remain robust to alternative choices of the truncation threshold $\alpha$ and the sensitivity parameter $\Lambda$.

\subsection{JTPA Application}\label{appendix: JTPA}

We revisit the JTPA application from \cref{section: JTPA} and incorporate the self-selection as a policy option. Specifically, we apply the MMW policy with self-selection, as defined in \cref{eq:AMMW-policy} and characterized by \cref{thm:AMMW}. Relative to \cref{section: JTPA}, the only additional component is  $\mathbb{E}[Y|X=x]$, which is estimated using random forests.  We restrict the policy class to depth-2 decision trees over \(\{0,1,\mathrm{S}\}\), and estimate the MMW policy  using \texttt{policytree} \texttt{R} package. \cref{fig:jtpa-3arm-share-welfare}  reports assignment fractions across policy arms and the corresponding welfare performance over the $\Lambda$-grid \( \{1,1.5,\ldots,5\}\), which includes the sensitivity level calibrated using IV-bounds, \(\Lambda = 4.5\).

Panel~(a) of \Cref{fig:jtpa-3arm-share-welfare} reports the
assignment fraction of the estimated depth-2 policy tree across the
specified $\Lambda$-grid.  At \(\Lambda = 1\), the policy mirrors the unconfoundedness benchmark, assigning 93.5\% of applicants to compulsory treatment and 6.5\% to compulsory control, with no assignment to the self-selection arm. For \(\Lambda>1\), the policy assigns most applicants to the self-selection arm: \(98.5\%\) at \(\Lambda=1.5\), and more than \(99.6\%\) for all \(\Lambda \geq 3\). This pattern is consistent with the robust decision rule in \Cref{remark:selection_driven_targeting}. As \(\Lambda\) increases, the worst-case welfare bounds for compulsory assignments become less informative, and the resulting MMW policy tends to favor the self-selection arm, allowing applicants to make their own take-up decisions.

Panel~(b) reports welfare gains relative to the compulsory un-treatment-only baseline. We compare compulsory treatment only, self-selection only, selection-absent targeting, and selection-driven targeting, with the compulsory un-treatment-only policy normalized to zero. We evaluate these policies using the worst-case welfare criterion $W_{\Lambda}$ characterized in \Cref{thm:AMMW}. When $\Lambda = 1$, selection-driven and selection-absent targeting coincide  and deliver higher welfare gains than the self-selection-only policy. For $\Lambda \geq 1.5$, the selection-driven policy assigns almost all applicants to the self-selection arm, so its welfare gain becomes close to that of the self-selection-only policy.

\begin{figure}[ht]
  \centering
  \includegraphics[width=\textwidth]{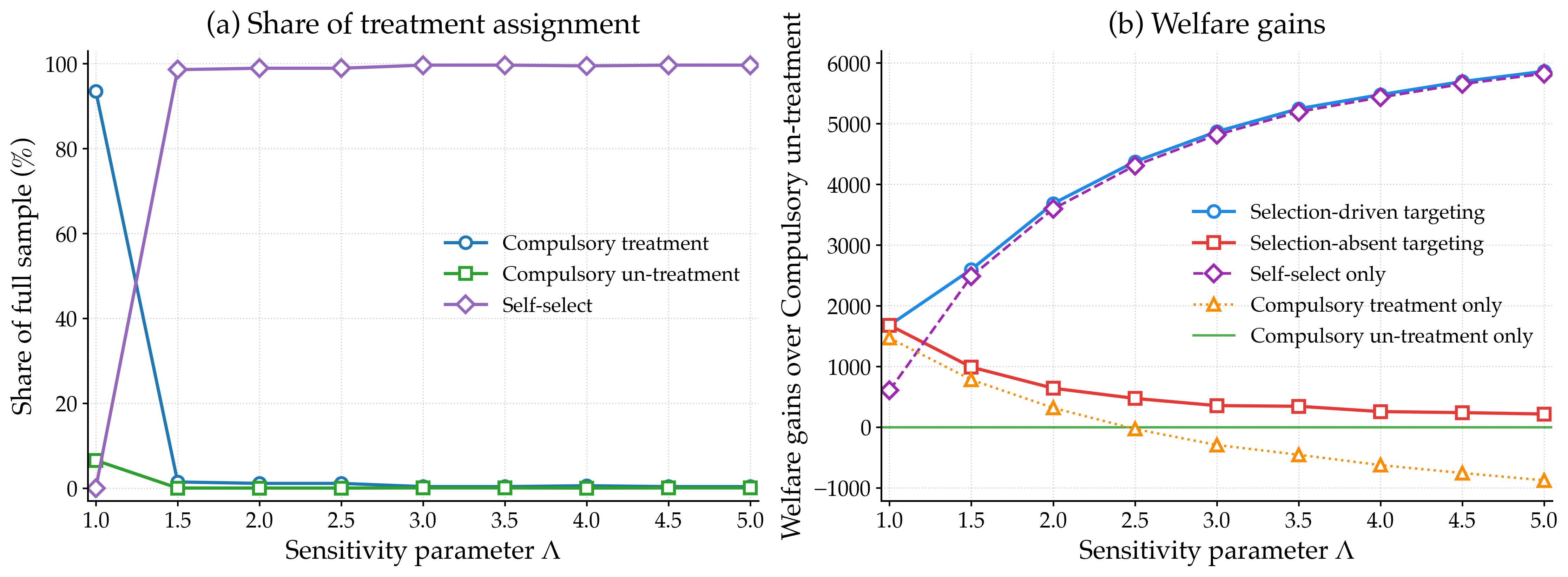}
\caption{Policy assignment fractions and welfare performance across values of \(\Lambda\). Panel~(a) plots assignment fractions across policy arms. Panel~(b) plots welfare gains of four candidate policies relative to the compulsory un-treatment-only baseline.}
  \label{fig:jtpa-3arm-share-welfare}
\end{figure}

\subsection{Head Start Application}\label{section: HS_4arm}

\subsubsection{Robustness Checks}\label{appendix: head start 3arm}

In this subsection, we examine the robustness of the Head Start results to two implementation choices: the propensity-score truncation level and the sensitivity parameter in the MSM. We first assess robustness to the propensity-score truncation level $\alpha$ by re-running the empirical analysis under $\alpha \in \{0,0.01,0.05,0.1\}$, holding all other implementation details fixed.

Table~\ref{tab: clip counts} reports the number and percentage of observations with estimated propensity scores outside  $[\alpha,1-\alpha]$, both in the full sample and separately by treatment arms. The table shows that only a small fraction of observations have estimated propensity scores outside the truncation interval. Specifically, this fraction is 0.03\% when $\alpha=0.01$, 0.39\% under our baseline choice $\alpha=0.05$, and 3.24\%  under the more conservative truncation level $\alpha=0.1$. These small fractions suggest that the estimated policy tree and the subsequent empirical findings are not sensitive to the particular choice of $\alpha$.

\begin{table}[H]
\centering
\small
\caption{Number and percentage of observations outside the propensity score truncation interval across choices of $\alpha$ with sample size $n=3{,}826$.}
\label{tab: clip counts}
\begin{tabular}{l rrrr}
\toprule
& $\alpha = 0$ & $\alpha = 0.01$ & $\alpha = 0.05$ & $\alpha = 0.1$ \\
\midrule
Head Start         & 0 & 1 & 14 & 51  \\
Other preschool    & 0 & 0 & 1  & 53  \\
No preschool       & 0 & 0 & 0  & 20  \\
\midrule
Full sample & 0 & 1 & 15 & 124 \\
Percentage           & 0\% & 0.03\% & 0.39\% & 3.24\% \\
\bottomrule
\end{tabular}
\end{table}

We next turn to the robustness check regarding the sensitivity parameter $\Lambda$. We re-estimate the depth-2 MMW policy tree over the grid $\Lambda\in\{1,1.5,2,2.5,3,3.5,3.6,4\}$; the resulting trees are displayed in Figure~\ref{fig:HS-policy-tree-Lambda-grid}. For \(\Lambda \geq 1.5\), the learned policy trees share the same qualitative structure: Head Start is targeted to children from lower-income families whose mothers have relatively low AFQT percentile scores. The main change across \(\Lambda\) is the AFQT cutoff, which rises from \(4.31\) to \(7.36\), increasing the Head Start assignment share from \(5.10\%\) to \(7.61\%\). For larger values of $\Lambda$, the MMW policy assigns a  larger fraction of children to Head Start. In particular, the Head Start assignment share increases from 5.10\% under \(\Lambda \in \{1.5,2,2.5,3\}\) to 7.61\% under \(\Lambda \in \{3.5,3.6,4\}\).

\begin{figure}[htb]
    \centering
    \includegraphics[width=\linewidth]{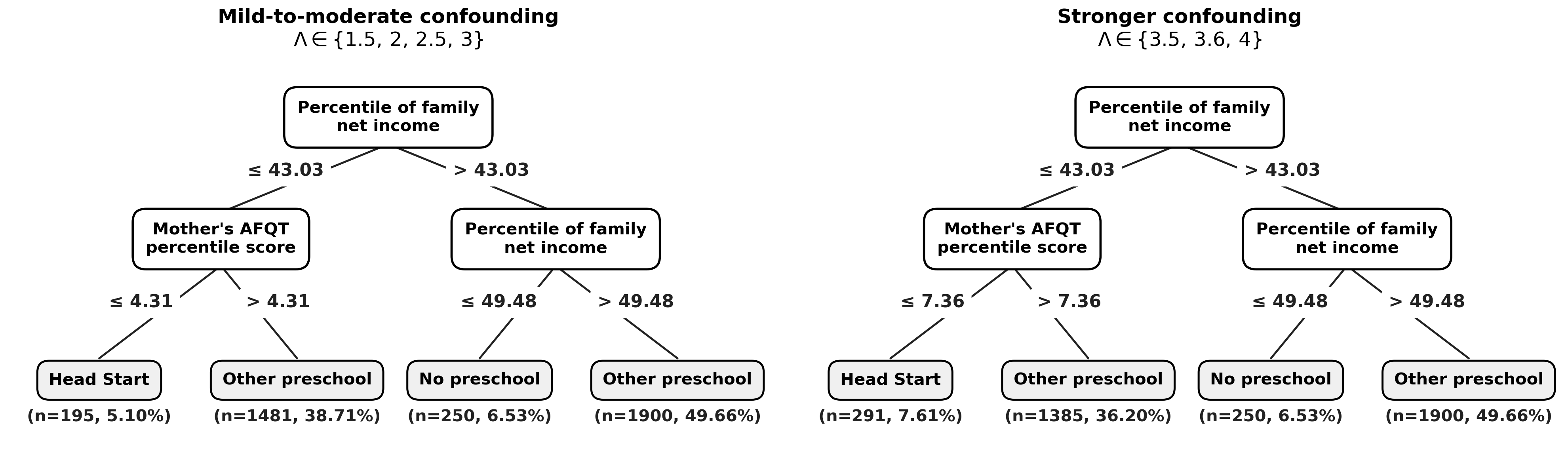}
    \caption{Sensitivity analysis of learned depth-2 policy trees across values of \(\Lambda\). The trees have the same qualitative structure across the reported values of \(\Lambda\), with changes only in the splitting thresholds. For larger values of \(\Lambda\), the higher AFQT cutoff expands the Head Start assignment probability.}

    \label{fig:HS-policy-tree-Lambda-grid}
\end{figure}

\subsubsection{Head Start program with self-selection as an option}

We revisit the Head Start application in \cref{section: headstart}, now allowing self-selection as a policy option. We use the same sample, covariates, and estimations as in \cref{section: headstart}; the additional component, \(\mathbb{E}[Y|X]\), is estimated by random forest regression. We report results over \(\Lambda \in \{1,1.5,\ldots,4\}\), a grid that includes the sensitivity level \(\Lambda=3.6\) calibrated using the method of \cite{hsu2013calibrating}.

Panel~(a) of \Cref{fig:headstart-4arm-share-welfare} reports the assignment fractions of the estimated depth-2 policy tree across values of \(\Lambda\). At \(\Lambda=1\), the policy assigns \(1.23\%\) of children to Head Start, \(86.83\%\) to other preschool, and \(11.94\%\) to no preschool, with no assignment to the self-selection arm. For values of \(\Lambda\) above the unconfoundedness benchmark, the policy assigns most children to self-selection: \(91.19\%\) at \(\Lambda=1.5\), and \(95.71\%\) for all \(\Lambda \geq 2\).

Panel~(b) reports welfare performance. We compare six candidate policies: no preschool only, Head Start only, other preschool only, self-selection only, selection-absent targeting, and selection-driven targeting. As in the JTPA application, welfare gains are measured relative to the no-preschool-only baseline using the worst-case welfare criterion. The pattern is similar to that in the JTPA application. For values of \(\Lambda\) above the unconfoundedness benchmark, the selection-driven policy assigns most children to the self-selection arm, so its welfare gain is closed to that of the self-selection-only policy and exceeds that of selection-absent targeting.

\begin{figure}[htb]
    \centering    \includegraphics[width=\linewidth]{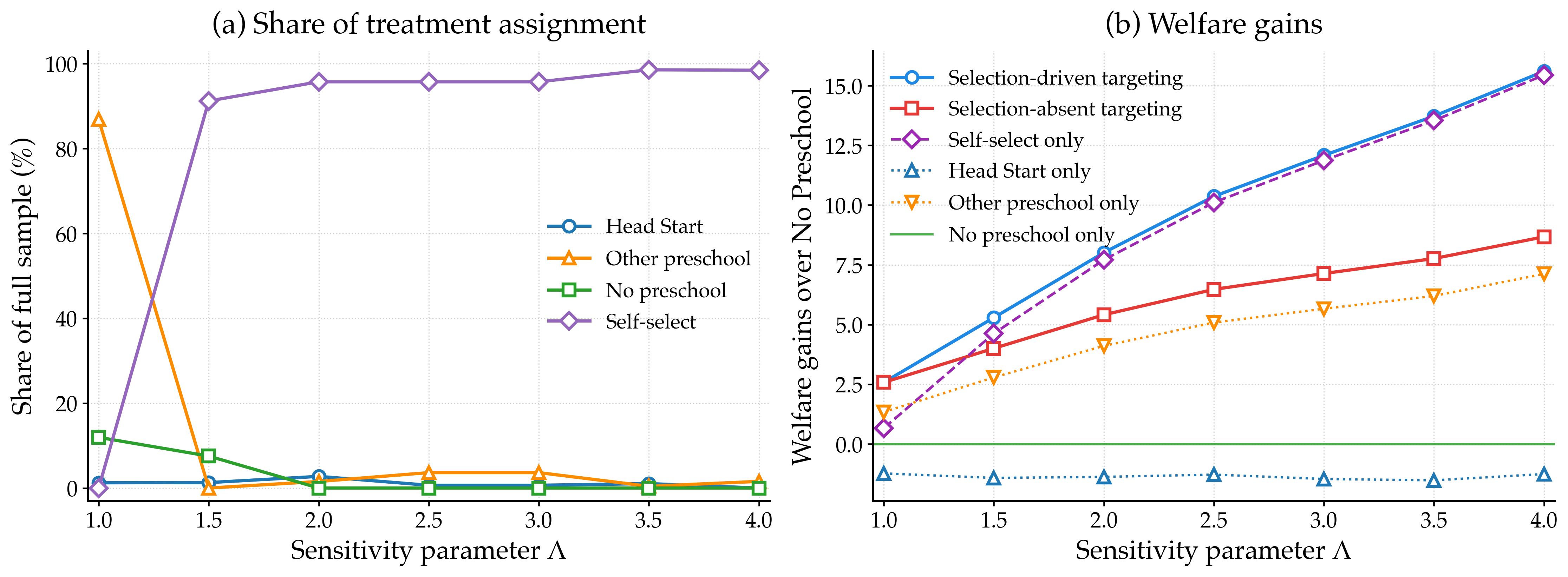}
    \caption{Policy assignment shares and welfare performance across $\Lambda$-grid. The figures plot (a) the share of treatment allocation for each policy arm, and (b) the welfare gains of candidate policies relative to the no-preschool-only.}
    \label{fig:headstart-4arm-share-welfare}
\end{figure}

\end{appendices}

\bibliographystyle{apalike} 
\bibliography{reference}

\newpage

\begin{appendices}
	\section{Identification Results under MSM}\label{section: identification_Conditional_Mean_CATE}


	In this section, we present partial identification results for the conditional mean and CATE under the MSM, drawing primarily from \citep{dorn2023sharp,oprescu2023b,dorn2024doubly}. Recall that $\mu_{o}(x,a)=\mathbb{E}_{P_{o}}\left[ Y(a) | X=x \right]$ denotes the conditional mean of the potential outcome, and the true CATE is defined as $\tau_o(x) = \mu_o(x,1)  - \mu_o(x,0)$.

	To formally characterize partial identification, we describe the identified sets of these functions under distributional uncertainty, where the true counterfactual distribution may deviate within an uncertainty set $\mathcal{P}(\Lambda)$. For any distribution $Q \in \mathcal{P}(\Lambda)$,  define $\mu_{Q}(x,a) = \mathbb{E}_{Q}[Y(a)|X=x]$ and $\tau_{Q}(x) = \mu_{Q}(x,1)-\mu_{Q}(x,0)$, which represent the conditional mean and CATE under $Q$. The identified sets for these functions are then  
	\[
	\begin{aligned}
		\Theta_{\mu, \Lambda}(x,a)  & \equiv \left \{ \mu_Q(x, a) : Q \in \mathcal{P}(\Lambda)  \right \},  \\
		\Theta_{\tau,\Lambda}(x)  & \equiv  \left\{ \tau_Q(x): Q \in \mathcal{P}(\Lambda)  \right \},
	\end{aligned}
	\]
	for any  $x \in \mathcal{X}$ and  $a \in \{0,1\}$. When $\Lambda=1$, the uncertainty set $\mathcal{P}(\Lambda)$ collapses to a singleton, implying point identification: $\mu_{o}(x,a)$ and $\tau_o(x)$ are uniquely determined by $\Theta_{\mu, \Lambda}(x,a)$ and $\Theta_{\tau,\Lambda}(x)$.
	
	For general $\Lambda$, let $ \mu^{\pm}_{\Lambda}(x,a)$ denote the endpoints of $\Theta_{\mu, \Lambda}(x,a)$, i.e.,
	\[
	\mu^{-}_{\Lambda}(x,a)  = \inf_{ Q \in \mathcal{P}(\Lambda) } \mu_{Q}(x,a)  \quad \text{and} \quad
	\mu^{+}_{\Lambda}(x,a)  = \sup_{ Q \in \mathcal{P}(\Lambda) } \mu_{Q}(x,a).
	\]
	The following \cref{proposition: pi_mean,proposition: pi_CATE} provide sharp upper and lower bounds for both $\Theta_{\mu, \Lambda}(x,a)$ and $\Theta_{\tau,\Lambda}(x)$.
	
	\begin{proposition}\label{proposition: pi_mean}
		Under \cref{assumption: Marginal sensitivity model}, there are distributions $P^{\pm}_{\mu,a} \in \mathcal{P}(\Lambda)$  such that   $\mu_{P_{\mu,a}^+}(x,a) = \mu^{+}_{\Lambda}(x,a)$ and 
		$\mu_{P_{\mu,a}^-}(x,a) = \mu^{-}_{\Lambda}(x,a)$, almost surely. In particular, these sharp bounds can be expressed as:  
		\begin{equation}~\label{equation: eqsharpbounds}
			\mu^{\pm}_{\Lambda}(x,a) = \mathbb{E}\left[ Y \mathds{1}\{ A = a \} \left[1+\frac{1- e_a(X)}{e_a(X)}\Lambda^{\pm\text{sgn}\left(Y-q^{\pm}_{\Lambda}(X,a)\right)}\right] \Big|X=x \right], \\
		\end{equation}
		where $\text{sgn}(t)=1$ if $t\geq0$ and $-1$ otherwise.
	\end{proposition}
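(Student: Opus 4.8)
The plan is to fix a covariate value $x$ and, without loss of generality, treat the case $a=1$; the argument for $a=0$ is symmetric. Since $\Theta_{\mu,\Lambda}(x,1)$ is a bounded interval, I would characterize its endpoints $\mu^{\pm}_\Lambda(x,1)$ by recasting the problem as a one–dimensional reweighting (tilting) problem. By the law of iterated expectations, $\mu_Q(x,1)=e(x)\,\mathbb{E}[Y\mid X=x,A=1]+(1-e(x))\,\mathbb{E}_Q[Y(1)\mid X=x,A=0]$. The first term is pinned down because $Q$ preserves the observed law of $(X,Y,A)$ and $Y=Y(1)$ on $\{A=1\}$, while the second term is the only free object. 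I would express it as $\mathbb{E}_Q[Y(1)\mid X=x,A=0]=\int y\,w(y)\,dF(y\mid x,1)$, where $w=dP_{Y(1)\mid x,A=0}/dP_{Y(1)\mid x,A=1}$ is the Radon--Nikodym derivative between the counterfactual law of $Y(1)$ on the untreated and the observed law on the treated.

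The second step translates \cref{assumption: Marginal sensitivity model} into bounds on $w$. By Bayes' rule $w(y)$ equals the nominal odds $e(x)/(1-e(x))$ divided by the conditional odds $P(A=1\mid X=x,Y(1)=y)/P(A=0\mid X=x,Y(1)=y)$, and the latter is a ratio of conditional expectations of $e_o(x,U)$ and $1-e_o(x,U)$ given $(X=x,Y(1)=y)$. The pointwise inequality $e_o/(1-e_o)\in[\Lambda^{-1}e(x)/(1-e(x)),\,\Lambda\,e(x)/(1-e(x))]$ is preserved under such ratios of expectations (if $p\le c\,(1-p)$ pointwise then $\mathbb{E}[p]\le c\,\mathbb{E}[1-p]$), which yields the clean constraint $w(y)\in[1/\Lambda,\Lambda]$ for $F(\cdot\mid x,1)$-a.e.\ $y$. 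Being a likelihood ratio between probability measures, $w$ also satisfies the normalization $\int w\,dF(\cdot\mid x,1)=1$. This gives the outer-bound direction.

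With these constraints, $\mu^{+}_\Lambda(x,1)$ reduces to maximizing the linear functional $\int y\,w(y)\,dF(y\mid x,1)$ over $w\in[1/\Lambda,\Lambda]$ subject to $\int w\,dF=1$. The optimum is bang--bang: put $w=\Lambda$ above a threshold $q$ and $w=1/\Lambda$ below it, with $q$ fixed by the normalization. Solving $\Lambda^{-1}F(q\mid x,1)+\Lambda\,(1-F(q\mid x,1))=1$ gives $F(q\mid x,1)=\Lambda/(1+\Lambda)$, i.e.\ $q=q^{+}_\Lambda(x,1)$; the minimization is symmetric and produces the threshold $q^{-}_\Lambda(x,1)$ at $F=1/(1+\Lambda)$. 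Substituting $w(y)=\Lambda^{\text{sgn}(y-q^{+}_\Lambda(x,1))}$ (resp.\ $\Lambda^{-\text{sgn}(y-q^{-}_\Lambda(x,1))}$) and recombining the treated and untreated pieces reproduces the claimed closed form, using $\mathbb{E}[Y\mathds{1}\{A=1\}\mid X=x]=e(x)\mathbb{E}[Y\mid x,1]$ and $\tfrac{1-e(x)}{e(x)}\cdot e(x)=1-e(x)$.

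The main obstacle is sharpness: exhibiting $P^{\pm}_{\mu,a}\in\mathcal{P}(\Lambda)$ that \emph{attain} these endpoints rather than merely bounding them. For this I would construct the extremal distribution explicitly, taking the confounder as rich as the outcome, $U=Y(1)$, and defining the conditional treatment odds to be $\tfrac{e(x)}{1-e(x)}/w^{\ast}(y)$ for the optimal tilt $w^{\ast}$; since $w^{\ast}\in[1/\Lambda,\Lambda]$, the odds-ratio bound holds $P_o$-a.s., and one checks that the resulting $Q$ preserves the observed law of $(X,Y,A)$ and satisfies $(Y(1),Y(0))\indep A\mid(X,U)$, so that $Q\in\mathcal{P}(\Lambda)$ with $\mu_Q(x,1)=\mu^{+}_\Lambda(x,1)$. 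The delicate point is verifying that matching the observed margins and realizing the extremal tilt are simultaneously feasible, together with the boundary case in which $F(\cdot\mid x,1)$ has an atom at $q^{\pm}_\Lambda$, where mass at the quantile must be split to meet the normalization exactly; the remaining computations are routine.
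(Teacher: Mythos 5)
Your proposal is correct and reaches the same closed form, but by a genuinely different route than the paper. The paper parametrizes the identified set by putative propensity scores: it invokes Proposition 1B of \cite{dorn2023sharp} to set up a correspondence between distributions $Q \in \mathcal{P}(\Lambda)$ and random variables $E \in \mathcal{E}(\Lambda)$ obeying the odds constraint and the conditional balancing condition, optimizes the resulting IPW functional $\mathbb{E}\left[ Y\mathds{1}\{A=a\}/\left(aE+(1-a)(1-E)\right) \mid X=x \right]$ by the same add-and-subtract-the-quantile (bang--bang) argument you use, and then cites the same Proposition 1B again to turn the optimal $E^{\pm}_a$ into attaining distributions. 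You instead parametrize by the outcome tilt $w = dP_{Y(1)\mid x, A=0}/dP_{Y(1)\mid x, A=1}$, obtain the feasibility constraints $w \in [1/\Lambda, \Lambda]$ and $\int w\, dF(\cdot\mid x,1) = 1$ directly from Bayes' rule plus the MSM, and replace the black-box existence lemma by the explicit construction $U = Y(1)$ with treatment odds $\tfrac{e(x)}{1-e(x)}/w^{\ast}(y)$; this is sound, since choosing the law of $Y(1)\mid X=x$ as $dG = \bigl(e(x) + (1-e(x))w^{\ast}\bigr)\, dF(\cdot \mid x, 1)$ (a probability measure precisely because $\int w^{\ast}\, dF = 1$) reproduces the observed law of $(X,Y,A)$ and the nominal propensity $e(x)$, and the construction is global in $x$, so it delivers the almost-sure (uniform-in-$x$) attainment the proposition asserts. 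The optimization kernel is identical in both proofs; the difference is what surrounds it. The paper's propensity parametrization buys reusability: $\mathcal{E}(\Lambda)$ and the Dorn--Guo correspondence are recycled in the proofs of \cref{theorem: max-min welfare assignment rule} and \cref{proposition: pi_CATE}, where weights must be spliced across treatment arms (e.g., $E^{-} = AE^{-}_{1} + (1-A)E^{-}_{0}$). Your tilt parametrization buys transparency and self-containedness: the MSM is revealed to be exactly a $[1/\Lambda,\Lambda]$ bound on an outcome likelihood ratio, and the worst case is exhibited concretely as the confounder revealing $Y(1)$ itself. One caveat is shared by both arguments: the closed form with $\Lambda^{\pm\text{sgn}(Y - q^{\pm}_{\Lambda})}$ satisfies the normalization exactly only when $F(\cdot\mid x,a)$ is continuous at $q^{\pm}_{\Lambda}(x,a)$; you flag this atom-splitting issue explicitly, whereas the paper's proof passes over it in asserting that $E^{\pm}_{a} \in \mathcal{E}$ is straightforward to verify.
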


	We next characterize the upper and lower bounds for $\Theta_{\tau,\Lambda}(x)$. Specifically, the endpoints of $\Theta_{\tau,\Lambda}(x)$ are defined as
	\[
	\tau^{-}_{\Lambda}(x) = \inf_{ Q \in \mathcal{P}(\Lambda) } \tau_{Q}(x) \quad \text{and} \quad \tau^{+}_{\Lambda}(x) = \sup_{ Q \in \mathcal{P}(\Lambda) } \tau_{Q}(x).
	\]
	The following result, \cref{proposition: pi_CATE}, builds on \cref{proposition: pi_mean}.
	
	\begin{proposition}\label{proposition: pi_CATE}
		Under \cref{assumption: Marginal sensitivity model}, there are distributions $P^{\pm}_\tau \in \mathcal{P}(\Lambda)$ such that   $ {\tau}_{P_\tau^+}(x) = \tau^+_{\Lambda}(x)$ and 
		$\tau_{P_\tau^-}(x) = \tau^-_{\Lambda}(x)$, almost surely. Moreover, the sharp bounds for the CATE satisfy:
		\begin{equation}\label{eqsharpcatebound}
			\tau^{\pm}_\Lambda(x) = \mu^{\pm}_{\Lambda}(x,1) - \mu^{\mp}_{\Lambda}(x,0).
		\end{equation}
	\end{proposition}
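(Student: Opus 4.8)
The plan is to prove \cref{eqsharpcatebound} by showing that the identified set of the pair $\bigl(\mu_Q(x,1),\mu_Q(x,0)\bigr)$ over $Q\in\mathcal{P}(\Lambda)$ is exactly the rectangle $[\mu^-_\Lambda(x,1),\mu^+_\Lambda(x,1)]\times[\mu^-_\Lambda(x,0),\mu^+_\Lambda(x,0)]$. Because $Q\mapsto\tau_Q(x)=\mu_Q(x,1)-\mu_Q(x,0)$ is affine and $\mathcal{P}(\Lambda)$ is convex, $\Theta_{\tau,\Lambda}(x)$ is an interval, so it suffices to (i) bound $\tau_Q(x)$ above and below uniformly over $Q$, and (ii) exhibit distributions $P^{\pm}_\tau\in\mathcal{P}(\Lambda)$ attaining those two extremes; their difference then pins down the entire set.

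The outer bound is immediate. For any $Q\in\mathcal{P}(\Lambda)$, \cref{proposition: pi_mean} gives $\mu_Q(x,1)\le\mu^+_\Lambda(x,1)$ and $\mu_Q(x,0)\ge\mu^-_\Lambda(x,0)$, whence $\tau_Q(x)\le\mu^+_\Lambda(x,1)-\mu^-_\Lambda(x,0)$; the reversed inequalities give $\tau_Q(x)\ge\mu^-_\Lambda(x,1)-\mu^+_\Lambda(x,0)$. Taking the supremum and infimum over $Q$ yields one direction of \cref{eqsharpcatebound}.

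The substantive step is attainability, and here the key structural observation is that the two arm-means respond to disjoint pieces of the counterfactual law. Writing $\mu_Q(x,1)=e(x)\,\mathbb{E}[Y\mid A=1,X=x]+\bigl(1-e(x)\bigr)\mathbb{E}_Q[Y(1)\mid A=0,X=x]$ and symmetrically for $\mu_Q(x,0)$, the first terms are fixed by condition (2) of $\mathcal{P}(\Lambda)$, so $\mu_Q(x,1)$ depends on $Q$ only through the imputed law of $Y(1)$ on the $\{A=0\}$ subpopulation, while $\mu_Q(x,0)$ depends only through the imputed law of $Y(0)$ on the disjoint $\{A=1\}$ subpopulation. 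I would therefore take the arm-optimal distributions $P^+_{\mu,1}$ and $P^-_{\mu,0}$ supplied by \cref{proposition: pi_mean}, read off their worst-case counterfactual imputations (threshold tilts of the observed within-arm laws at $q^+_\Lambda(x,1)$ and $q^-_\Lambda(x,0)$), and glue them into a single law $P^+_\tau$ by specifying within-arm couplings $\kappa_1$ (of observed $Y(1)$ with imputed $Y(0)$ on $\{A=1\}$) and $\kappa_0$ (of observed $Y(0)$ with imputed $Y(1)$ on $\{A=0\}$) carrying the prescribed marginals. Since each imputation matches its arm's extremal mean, this $P^+_\tau$ would give $\tau_{P^+_\tau}(x)=\mu^+_\Lambda(x,1)-\mu^-_\Lambda(x,0)$, and $P^-_\tau$ is built symmetrically.

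The hard part will be verifying that the glued law actually lies in $\mathcal{P}(\Lambda)$. Taking the richest confounder $U=(Y(0),Y(1))$, condition (1) holds automatically and (2) by construction, but condition (3) reduces to the pointwise requirement that the induced propensity have bounded odds ratio, which amounts to $\kappa_1(y_0,y_1)/\kappa_0(y_0,y_1)\in[1/\Lambda,\Lambda]$ for all $(y_0,y_1)$. A naive product coupling only yields a bound in $[1/\Lambda^2,\Lambda^2]$, so the crux is to choose $\kappa_0,\kappa_1$ — exploiting that the two extremal tilts push mass in opposite directions, so an anti-monotone arrangement keeps the likelihood ratio within $[1/\Lambda,\Lambda]$ — so that the joint odds-ratio constraint is met. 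Establishing existence of such couplings (equivalently, of a single MSM-compatible confounder realizing both arm extremes) is the only nontrivial ingredient; once secured, \cref{eqsharpcatebound} and the existence of the attaining $P^\pm_\tau$ follow, and convexity closes the identified interval.
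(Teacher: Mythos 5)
Your outer bound and your structural observation (that $\mu_Q(x,1)$ and $\mu_Q(x,0)$ depend on $Q$ only through counterfactual imputations on the disjoint events $\{A=0\}$ and $\{A=1\}$) both match the logic of the paper's argument, but your proof stops exactly where the proposition's content lies: you never establish existence of the couplings $\kappa_0,\kappa_1$, and you explicitly defer this as ``the only nontrivial ingredient.'' Since the outer bound $\tau^+_\Lambda(x)\le\mu^+_\Lambda(x,1)-\mu^-_\Lambda(x,0)$ is immediate from \cref{proposition: pi_mean}, the entire substance of \cref{proposition: pi_CATE} is precisely this attainability step, so what you have is a plan, not a proof. The paper closes the step differently: it glues the extremal putative propensity scores along the treatment indicator, $E^+_\tau=AE_1^++(1-A)E_0^-$, checks that this glued variable still lies in the feasible set $\mathcal{E}(\Lambda)$ (the MSM restriction and the two balancing constraints each involve $E$ only on one of the disjoint events $\{A=1\}$, $\{A=0\}$), and then invokes the constructive result of \cite{dorn2023sharp} (their Proposition 1B and the argument of their Section C.4.2) to convert $E^+_\tau$ into a law $P^+_\tau\in\mathcal{P}(\Lambda)$ attaining $\mu^+_\Lambda(x,1)$ and $\mu^-_\Lambda(x,0)$ simultaneously, almost surely in $x$.

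Moreover, the mechanism you hint at is wrong in direction. Write $f_a$ for the observed conditional density of $Y$ given $(X=x,A=a)$, and $g_1^+=\Lambda^{\mathrm{sgn}(y-q^+_\Lambda(x,1))}f_1$, $g_0^-=\Lambda^{-\mathrm{sgn}(y-q^-_\Lambda(x,0))}f_0$ for the extremal imputations. If $\kappa_0$ has marginals $(f_0,g_1^+)$, $\kappa_1=r\,\kappa_0$ has marginals $(g_0^-,f_1)$, and $r\in[\Lambda^{-1},\Lambda]$, then the marginal constraints give $\mathbb{E}_{\kappa_0}[r\mid y_0]=g_0^-(y_0)/f_0(y_0)\in\{\Lambda,\Lambda^{-1}\}$; because these are the endpoints of the admissible range of $r$, this forces $r=\Lambda^{-\mathrm{sgn}(y_0-q^-_\Lambda(x,0))}$ a.e., and symmetrically $r=\Lambda^{-\mathrm{sgn}(y_1-q^+_\Lambda(x,1))}$ a.e. Hence every feasible $\kappa_0$ must be supported on the \emph{comonotone} region $\{\mathds{1}\{y_0<q^-_\Lambda(x,0)\}=\mathds{1}\{y_1<q^+_\Lambda(x,1)\}\}$; an anti-monotone arrangement, which places mass where the two indicators disagree, is infeasible, since there it would require $r=\Lambda$ and $r=\Lambda^{-1}$ at once. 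The comonotone arrangement is feasible because both threshold events have probability $1/(1+\Lambda)$ under the relevant marginals ($F_0(q^-_\Lambda(x,0))=1/(1+\Lambda)$ and $G_1^+(q^+_\Lambda(x,1))=\Lambda^{-1}F_1(q^+_\Lambda(x,1))=1/(1+\Lambda)$), and with $r=\Lambda$ on the lower block and $r=\Lambda^{-1}$ on the upper block one checks all four marginals directly; so your route can be repaired, but as written the key existence claim is unproven and the proposed coupling would fail.
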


	We illustrate the intuition behind \cref{proposition: pi_mean}. Suppose we had oracle access to the true propensity score $e_{o}(X,U)$. Under \cref{assumption: Marginal sensitivity model}, the true conditional mean $\mu_o(x,a)$ would then be point-identified via IPW: 
	\[
	\mu_o(x,a) = \mathbb{E}_{P_o}\left[  \frac{Y \mathds{1}\{ A=a \} }{ a e_{o}(X,U) + (1-a) \left( 1-e_{o}(X,U) \right)  } \Big | X =x \right],
	\]
	However,  when $\Lambda >1$,  the true propensity score $e_o(X,U)$ is no longer identifiable. Instead, \cref{assumption: Marginal sensitivity model} restricts $ e_{o}(X,U)$ to an uncertainty set, which in turn induces a partially identified set for $\mu_o(x,a)$. Specifically, the uncertainty set consists of all random variables $E$ that satisfy both the sensitivity restriction and the usual balancing constraints:
	\begin{align}
		\Lambda^{-1} & \leq \frac{E/(1-E)}{e(X)/\left(1-e(X)\right)}  \leq \Lambda, \label{equation:MMS_RV}  \\
		\mathbb{E}\left[ A / E | X \right] & =  \mathbb{E}\left[ (1-A) / (1- E) | X \right] = 1.
		\label{equation: balancing constraints} 
	\end{align}
	Formally, we define
	\[
	\mathcal{E}(\Lambda) = \left \{ E: \text{\cref{equation:MMS_RV} and \cref{equation: balancing constraints} hold a.s.} \right\}.
	\]
	Given this uncertainty set, the identified set $\Theta_{\mu, \Lambda}(x,a)$ can be equivalently expressed as: 
	\[
	\Theta_{\mu, \Lambda}(x,a) = \left\{  \mathbb{E} \left[ \frac{Y \mathds{1}\{ A = a \} }{ a E  + (1-a) (1-E) } \Big | X = x \right ] : E \in \mathcal{E}(\Lambda) \right\}.
	\]
	For completeness, the formal proofs of \cref{proposition: pi_mean,proposition: pi_CATE} are provided in \cref{section: proofs of Identification} for interested readers.

		

	\begin{remark}\label{remark:A1}
		Our \cref{proposition: pi_mean,proposition: pi_CATE} extend the pointwise identification results of \cite{dorn2023sharp,oprescu2023b} to a uniform setting, ensuring that the identified bounds hold simultaneously for almost all $x$ and $a$.  Such uniform bounds are crucial for policy evaluation and learning under the MSM. First, computing the robust criteria $W_\Lambda(\pi)$ and $\Delta_\Lambda(\pi)$ involves integration over the covariate and treatment spaces, which requires uniformly valid bounds on $\Theta_{\mu}(x,a)$ and $\Theta_{\tau}(x)$. Second, the closed-form sharp bounds enable the construction of orthogonal moments for the efficient estimation of $W_\Lambda(\pi)$ and $\Delta_\Lambda(\pi)$. As shown by \cite{kallus2021minimax}, the policy improvement guarantee can be severely biased by the slow convergence of propensity score estimation; by leveraging our identification results, this bias can be mitigated via orthogonal moments, yielding $\sqrt{n}$-consistent estimates of $W_\Lambda(\cdot)$ and $\Delta_\Lambda(\cdot)$.
	\end{remark}
	
	\begin{remark}\label{remark:A2}
		\cite{kallus2021minimax}, building on \cite{zhao2019sensitivity}, constructs an uncertainty set for the putative propensity scores by imposing unconditional balancing constraints, specifically $\mathbb{E}[A/E] = \mathbb{E}[(1-A)/(1-E)] = 1$.  In contrast, our uncertainty set $\mathcal{E}(\Lambda)$ incorporates the conditional balancing constraints defined in \cref{equation: balancing constraints}, which are strictly stronger than those employed in \cite{kallus2021minimax}. According to \cite{dorn2023sharp},  such conditional balancing constraints are necessary for obtaining sharp bounds on the CATE, as established in \cref{proposition: pi_CATE}. Consequently, the policy learned in \cite{kallus2021minimax} is more conservative than our MMI policy.
	\end{remark}

	\section{Extending MMW Policy to Multi-Valued Treatments}\label{appendix: discrete treatment}
	
	This section extends our max-min welfare (MMW) policy learning approach to the multi-valued treatment setting, providing theoretical justification for its application in \cref{section: headstart}. Our identification here is constructive, directly informing a procedure for estimating the lower bound of the worst-case welfare function. This, in turn, allows for policy optimization using a method directly analogous to that in \cref{algo:MMW_PL}. As these procedures and the corresponding doubly robust estimation closely follow the binary treatment case, we omit a detailed discussion here.
	
	The identification results, including sharp upper and lower bounds for conditional mean outcomes and a lower bound for the worst-case welfare function, are established in \cref{appendix: MSM identification}. Detailed proofs are presented in \cref{section:Multi_Identification}.

	We first extend the MSM framework to the setting where the treatment variable $A$ takes more than two values. Let $\mathcal{A}$ denote the set of $d$ possible treatment levels: $\mathcal{A} = \{a_1,\ldots, a_d\}$. Define the true and nominal propensity scores, respectively, as $e_{o,a}(x,u) = \mathbb{P}_{P_{o}}[A=a|X=x,U=u]$ and $e_{a}(x) = \mathbb{P}_{P_{o}}[A=a|X=x]$ for $a\in\mathcal{A}$.
	
	\begin{assumption}\label{assumption: MSM with discrete treatment}
		Suppose there exists a vector of unobserved confounders $U \in \mathbb{R}^k$ such that  
		\[
		\left( Y(a_{1}), Y(a_{2}),\dots, Y(a_{d})  \right) \indep  A \mid \left(X, U \right).
		\]
		The distribution of $\left(X, Y(a_{1}), Y(a_{2}),\dots, Y(a_{d}), A, U \right) $ satisfies the selection bias condition with $1 \leq \Lambda< \infty$ if the following inequality holds $P_o$-almost surely,
		\begin{equation}
			\frac{1}{\Lambda} \leq \frac{e_{o,a}(x, u) / \left( 1- e_{o,a}(x, u)  \right)}{e_{a}(x ) / \left( 1- e_{a}(x)  \right) }  \leq \Lambda \quad \text{for} \quad a\in\mathcal{A}.
		\end{equation}
	\end{assumption}
	
	Assumption \ref{assumption: MSM with discrete treatment} naturally induces a distributional uncertainty set over the counterfactual distribution of $(X, Y(a_{1}), Y(a_{2}), \dots, Y(a_{d}), A, U)$. Formally, let $\mathcal{P}_{\mathrm{M}}$ denote the set of all probability distributions $Q$ on $\mathcal{X} \times \mathbb{R}^{d} \times \mathcal{A} \times \mathbb{R}^k$ that satisfy:
	\begin{enumerate}
		\item[(1)]   If $\left(X, Y(a_{1}), Y(a_{2}),\dots, Y(a_{d}), A, U \right)\sim Q$, then $\left(Y(a_{1}), Y(a_{2}),\dots, Y(a_{d}) \right) \indep A \mid (X,U)$ under $Q$;
		\item[(2)] If $Y = Y(A)$, then the distribution of $(X,Y,A)$ under $Q$ coincides with the observed-data distribution $P$;
		\item[(3)] For all $a\in\mathcal{A}$, the odds ratio between the true propensity score and the nominal propensity score lies in $\left[1/\Lambda, \Lambda \right]$, i.e., 
		\[
		\frac{1}{\Lambda} \leq  \frac{ \mathbb{P}(A = a |X, U)/  \big ( 1 - \mathbb{P}(A = a |X, U) \big ) }{\mathbb{P}(A =a |X)/ \big ( 1 - \mathbb{P}(A = a |X) \big ) } \leq \Lambda, \quad \text{$Q$-a.s.}
		\]
	\end{enumerate}
	
	\subsection{Identifying the Worst-Case Welfare Function}\label{appendix: MSM identification}
	
	For any  $(x,a)\in\mathcal{X}\times\mathcal{A}$, let  $\mu_Q(x, a)=\mathbb{E}_{Q}[Y(a)|X=x]$ denote the conditional mean under a counterfactual distribution $Q$. The identified set for $\mu_o(x,a) = \mathbb{E}_{P_{o}}[Y(a)|X=x]$ is formally given by
	\[
	\Theta_{\mathrm{M},\mu}(x,a)   \equiv \left \{ \mu_Q(x, a) : Q \in \mathcal{P}_{\mathrm{M}}  \right \}.
	\]
	Let $\mu^{\pm}(x,a)$ denote the endpoints of $\Theta_{\mathrm{M},\mu}(x,a)$, i.e.,
	\[
	\mu^{-}(x,a)  = \inf_{ Q \in \mathcal{P}_{\mathrm{M}}} \mu_{Q}(x,a) \quad \text{and} \quad  
	\mu^{+}(x,a)  = \sup_{ Q \in \mathcal{P}_{\mathrm{M}}} \mu_{Q}(x,a).
	\]
	The following \cref{proposition: sharp bounds for conditional mean with discrete treatment} establishes the sharp upper and lower bounds for the identified set $\Theta_{\mathrm{M},\mu}(x,a)$.
	
	\begin{proposition}\label{proposition: sharp bounds for conditional mean with discrete treatment}
		Under \cref{assumption: MSM with discrete treatment}, there exist distributions $P^{\pm}_{\mu,a} \in \mathcal{P}_{\mathrm{M}}$  such that   $\mu_{P_{\mu,a}^+}(x,a) = \mu^{+}(x,a)$ and 
		$\mu_{P_{\mu,a}^-}(x,a) =\mu^{-}(x,a)$ for almost all $(x,a)\in\mathcal{X}\times\mathcal{A}$, respectively. In particular, the sharp bounds admit the closed-form representation:
		\begin{equation}\label{eqsharpbounds}
			\mu^{\pm}(x,a) = \mathbb{E}\left[ Y \mathds{1}\{ A = a \} \left[1+\frac{1- e_a(X)}{e_a(X)}\Lambda^{\pm\text{sgn}\left(Y-q^{\pm}(X,a)\right)}\right] \Big|X=x \right]. \\
		\end{equation}
	\end{proposition}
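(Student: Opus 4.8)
The plan is to reduce the multi-valued problem, for each fixed treatment level $a \in \mathcal{A}$, to the binary MSM already resolved in \cref{proposition: pi_mean}. The key observation is that $\mu_Q(x,a) = \mathbb{E}_Q[Y(a)\mid X=x]$ depends on $Q$ only through the joint law of $(X, Y(a), \mathds{1}\{A=a\}, U)$. Introducing the binary pseudo-treatment $\tilde{A} = \mathds{1}\{A=a\}$, I would show that the three conditions defining $\mathcal{P}_{\mathrm{M}}$ restrict this joint law to exactly the binary MSM uncertainty set $\mathcal{P}(\Lambda)$ with nominal propensity $\tilde{e}(x) = e_a(x)$: condition (1) yields $Y(a) \indep \tilde{A} \mid (X,U)$ because $\tilde A$ is a function of $A$; condition (2) matches the law of $(X, Y, \tilde{A})$ to the observed distribution restricted to whether $A=a$; and condition (3), specialized to level $a$, is precisely the binary odds-ratio bound since $\tilde e_o(x,u)=e_{o,a}(x,u)$.

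Given this reduction, the validity of the bounds in one direction is immediate: every $Q \in \mathcal{P}_{\mathrm{M}}$ induces an element of the binary uncertainty set, so \cref{proposition: pi_mean} applied to $\tilde A$ forces $\mu_Q(x,a)$ to lie between the closed-form expressions $\mu^{-}(x,a)$ and $\mu^{+}(x,a)$. This already establishes the formula in \cref{eqsharpbounds} as valid outer bounds.

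The harder direction is sharpness, which requires constructing distributions $P^{\pm}_{\mu,a} \in \mathcal{P}_{\mathrm{M}}$ attaining the endpoints. Here I would take the binary-optimal pseudo-propensity $e_{o,a}(x,u)$ delivered by \cref{proposition: pi_mean} (equivalently, the extremal weight in $\mathcal{E}(\Lambda)$) and extend it to a full $d$-vector of propensities summing to one. This extension is the main obstacle: naively leaving the remaining levels unconfounded would violate $\sum_{a'} e_{o,a'} = 1$. My proposal is to redistribute the residual mass proportionally to the nominal conditional probabilities, setting $e_{o,a'}(x,u) = \big(1 - e_{o,a}(x,u)\big)\, e_{a'}(x) / \big(1 - e_a(x)\big)$ for $a' \neq a$. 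This makes the propensities sum to one, and a short computation shows that the induced odds ratio for each $a' \neq a$ stays within $[\Lambda^{-1}, \Lambda]$ exactly when $e_a(x) + e_{a'}(x) \leq 1$, which always holds since the nominal propensities sum to one; checking this feasibility for both the $+$ and $-$ extremes of $e_{o,a}$ is the one calculation I expect to require care.

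Finally, I would assemble the full counterfactual distribution by drawing the remaining potential outcomes $\{Y(a')\}_{a' \neq a}$ from any coupling consistent with the observed conditional laws $Y \mid A = a'$, so that condition (2) holds at every level. Combined with the attainment of $\mu^{\pm}(x,a)$ on the $a$-marginal inherited from the binary construction, this yields $P^{\pm}_{\mu,a} \in \mathcal{P}_{\mathrm{M}}$ with $\mu_{P^{\pm}_{\mu,a}}(x,a) = \mu^{\pm}(x,a)$ almost surely, completing the sharpness argument. The remaining bookkeeping, namely measurability of the constructed propensities and the almost-sure qualifier, is routine and parallels the binary proof of \cref{proposition: pi_mean}.
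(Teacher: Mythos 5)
Your proposal is correct, and it builds essentially the same object as the paper, but the organization differs enough that the comparison is worth spelling out. The paper's route is through \cref{proposition: existence of distribution with discrete treatment}, a multi-valued extension of Proposition 1B of \cite{dorn2023sharp}: given any weight $E$ satisfying the balancing and odds constraints at level $a$, it constructs the full vector of potential outcomes by quantile coupling, setting $U = E$ on $\{A=a\}$ and drawing $U$ from a tilted law $K_a(\cdot|x)$ on the complement; the proposition then follows by plugging in the extremal weights, exactly as in \cref{proposition: pi_mean}. Your split---outer bounds by reducing to the binary case via $\tilde A = \mathds{1}\{A=a\}$, sharpness by extending the extremal single-level propensity to a $d$-vector---is the same construction viewed from the propensity side: applying Bayes' rule to the paper's coupling yields precisely your proportional redistribution, $e_{o,a'}(x,u) = e_{a'}(x)(1-u)/(1-e_a(x))$ for $a' \neq a$. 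Where your write-up genuinely adds something is the feasibility check. Membership in $\mathcal{P}_{\mathrm{M}}$ requires the odds-ratio constraint at \emph{every} level, yet \cref{proposition: existence of distribution with discrete treatment} states and verifies it only at the level $a$ of interest, and the paper's proof of \cref{proposition: sharp bounds for conditional mean with discrete treatment} is omitted ``for brevity''; your computation that the induced odds ratio at $a' \neq a$ stays in $[\Lambda^{-1},\Lambda]$ precisely because $(\Lambda-1)\left(1-e_a(x)-e_{a'}(x)\right) \geq 0$, which always holds since nominal propensities sum to one, is exactly the verification the paper skips (and monotonicity of the odds ratio in the redistribution factor reduces the check to the two extremes of $e_{o,a}$, as you anticipate). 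Two minor cautions: the binary reduction is slightly loose as stated, since there is no single ``control'' potential outcome corresponding to $\tilde A = 0$---what you actually need, and what is true, is that the arm-$a$ optimization in the proof of \cref{proposition: pi_mean} (balancing plus odds constraints on the weight for arm $a$ alone) goes through verbatim; and in your final assembly step, ``any coupling'' is not quite enough for level $a$ itself: on the event $A \neq a$, the potential outcome $Y(a)$ must be drawn from the law tilted by $U$ (the paper's $G^{-1}(V_{a}\,|\,X,a,U)$), not from the raw observed conditional law, or the joint conditional independence in condition (1) of $\mathcal{P}_{\mathrm{M}}$ fails---drawing from the raw laws is fine only for the untilted levels $a' \neq a$.
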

	
	In the multi-valued treatment setting, a policy is a mapping from the input space $\mathcal{X}$ to a decision $a \in \mathcal{A}$. A randomized policy can be represented as a function from $\mathcal{X}$ to the probability simplex $\Delta(\mathcal{A})$ over the action space $\mathcal{A}$. In contrast, a deterministic policy is a function $\pi : \mathcal{X} \to \{0,1\}^d$, where the output indicates a deterministic choice among $d$ possible actions.
	
	The worst-case welfare function $W(\pi)$ introduced in Section \ref{section: Policy Learning beyond Unconfoundedness} can be naturally extended to the multi-valued treatment setting:
	\[
	W(\pi) =  \inf_{ Q \in \mathcal{P}_{\mathrm{M}}} \mathbb{E}_Q \left[ Y(\pi(X)) \right] = \inf_{ Q \in \mathcal{P}_{\mathrm{M}}} \mathbb{E}_Q \left[ \sum_{a \in \mathcal{A}} Y(a) \pi_{a}(X) \right],
	\]
	Let $\Pi_{\mathrm{M}}$ denote the set of multi-valued policies specified by the policy maker. The corresponding max-min welfare (MMW) policy is obtained by solving $\max_{\pi \in \Pi_{\mathrm{M}}} W(\pi)$. The following theorem characterizes a lower bound for $W(\pi)$.
	
	\begin{theorem}\label{theorem: max-min welfare under discrete treatment}
		Under \cref{assumption: MSM with discrete treatment}, for any policy $\pi\in\Pi_{\mathrm{M}}$, 
		\[
		W(\pi) \geq \mathbb{E}\left[ \sum_{a \in \mathcal{A}} {\mu}^{-}(X,a) \pi_{a}(X) \right].
		\]
	\end{theorem}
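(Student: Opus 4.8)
The plan is to prove the inequality directly, by lower bounding the objective uniformly over the uncertainty set and then passing to the infimum. Fix an arbitrary $Q \in \mathcal{P}_{\mathrm{M}}$. Since each weight $\pi_a(\cdot)$ is a Borel measurable function of $X$ alone, the tower property under $Q$ gives $\mathbb{E}_Q[Y(a)\pi_a(X)] = \mathbb{E}_Q\left[\pi_a(X)\,\mathbb{E}_Q[Y(a)\mid X]\right] = \mathbb{E}_Q\left[\pi_a(X)\mu_Q(X,a)\right]$ for each $a \in \mathcal{A}$. Summing over the finite action set $\mathcal{A}$ and interchanging the finite sum with the expectation yields $\mathbb{E}_Q\left[\sum_{a\in\mathcal{A}} Y(a)\pi_a(X)\right] = \mathbb{E}_Q\left[\sum_{a\in\mathcal{A}} \mu_Q(X,a)\pi_a(X)\right]$.

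The first key step is to observe that this outer expectation no longer depends on the counterfactual structure of $Q$: by condition (2) in the definition of $\mathcal{P}_{\mathrm{M}}$, every $Q$ shares the same $X$-marginal as the observed-data distribution $P$, so the expectation of any function of $\bigl(X, \{\mu_Q(X,a)\}_a\bigr)$ is taken against this common marginal. The second key step is the pointwise comparison $\mu_Q(x,a) \geq \mu^{-}(x,a)$, which holds for every $(x,a)$ immediately from the definition $\mu^{-}(x,a) = \inf_{Q' \in \mathcal{P}_{\mathrm{M}}} \mu_{Q'}(x,a)$; combined with the nonnegativity $\pi_a(x) \geq 0$ of the policy weights, this gives $\sum_{a\in\mathcal{A}} \mu_Q(X,a)\pi_a(X) \geq \sum_{a\in\mathcal{A}} \mu^{-}(X,a)\pi_a(X)$ almost surely. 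Taking expectations and then the infimum over $Q \in \mathcal{P}_{\mathrm{M}}$ on the left delivers $W(\pi) = \inf_{Q} \mathbb{E}_Q\left[\sum_{a\in\mathcal{A}} Y(a)\pi_a(X)\right] \geq \mathbb{E}\left[\sum_{a\in\mathcal{A}} \mu^{-}(X,a)\pi_a(X)\right]$, since the right-hand side is free of $Q$.

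The computation is short precisely because the claim is an inequality rather than an identity. The conceptual obstacle — and the reason the result stops at a lower bound, in contrast to the equality of \cref{theorem: max-min welfare assignment rule} in the binary case — is that the distributions $P^{-}_{\mu,a}$ of \cref{proposition: sharp bounds for conditional mean with discrete treatment} that attain $\mu^{-}(\cdot,a)$ for each individual action generally cannot be realized by a single $Q \in \mathcal{P}_{\mathrm{M}}$: the $d$ putative propensity scores $\mathbb{P}_Q(A=a\mid X,U)$ must sum to one across $a \in \mathcal{A}$, so the worst-case reweightings for different treatments compete and cannot all be selected simultaneously. Verifying integrability — so that the conditional means $\mu_Q(X,a)$ are well defined and the interchange of the finite sum with the expectation is legitimate — is routine under a finite first-moment condition on $Y$ and presents no difficulty.
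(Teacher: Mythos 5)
Your proof is correct and takes essentially the same route as the paper's: the pointwise bound $\mu_Q(x,a) \geq \mu^{-}(x,a)$ (immediate from the definition of the infimum), combined with nonnegativity of the policy weights and the common $X$-marginal shared by all $Q \in \mathcal{P}_{\mathrm{M}}$, followed by taking the infimum over $Q$. Your added remarks --- the explicit tower-property reduction and the observation that the per-action worst-case distributions generally cannot be realized by a single $Q$ because the putative propensity scores must sum to one --- are consistent with the paper, which for the same reason leaves sharpness of this lower bound as an open question.
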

	
	\begin{remark}
		Whether the lower bound for $W(\pi)$ in Theorem~\ref{theorem: max-min welfare under discrete treatment} is sharp remains an open question. While sharp lower bounds for $\mu_o(x,a)$ are available for any fixed $(x,a)$, identifying a single distribution that simultaneously attains these bounds for all $(x,a)$ may be challenging and lies beyond the scope of this study.
	\end{remark}

	\section{Proofs for Results in the Main Text}\label{section: proofs of Identification}
	Since the sensitivity parameter $\Lambda$ is treated as fixed throughout, we omit the subscript $\Lambda$ for notational simplicity. For instance, we may write $\mathcal{P}\equiv \mathcal{P}(\Lambda)$,  $\mu^+(x,a) \equiv \mu_{\Lambda}^+(x,a)$ and $\rho^-(x,a) \equiv \rho^-_\Lambda(x,a)$. This convention will be used throughout the remainder of the appendix.
	
	\subsection{Proof of  Proposition \ref{proposition: pi_mean}}
	
	
	\begin{proof}[Proof of \cref{proposition: pi_mean}]
		
		\textbf{Step 1. Preliminary Results.} We begin by showing that for any $(x,a)\in\mathcal{X}\times\{0,1\}$, the partially identified set $\Theta_{\mu}(x,a)$ is an interval whose endpoints solve the following optimization problems:
		\begin{equation}\label{equation: optimization for identified set of mu}
			\begin{aligned}
				& \mu^{+}(x,a) = \sup_{E\in\mathcal{E}} \mathbb{E} \left[ \frac{Y \mathds{1}\{ A = a \} }{ a E  + (1-a) (1-E) } \Big | X = x \right ], \\
				& \mu^{-}(x,a) = \inf_{E\in\mathcal{E}} \mathbb{E} \left[ \frac{Y \mathds{1}\{ A = a \} }{ a E  + (1-a) (1-E) } \Big | X = x \right ]. \\
			\end{aligned}
		\end{equation}
		We focus on verifying the result for $\mu^{+}(x,a)$; the argument for $\mu^{-}(x,a)$ is analogous and omitted for brevity.
		
		By Proposition 1B of \cite{dorn2023sharp}, for any random variable $E$ defined on the same probability space as $(X,Y,A)$ and satisfying $\mathbb{E} \left[ \mathds{1}\{ A = a \} / \left( a E  + (1-a) (1-E) \right) | X \right] = 1$, we can construct a distribution $Q\in\mathcal{P}$ such that
		\begin{equation}\label{equation: existence of distribution}
			\mathbb{E}_{Q}\left[ Y(a) | X = x \right] = \mathbb{E}_{Q}\left[ \frac{Y\mathds{1}\{ A = a \}}{\mathbb{P}_{Q} (A=a | X,U ) } \Big | X = x  \right] = \mathbb{E} \left[ \frac{Y\mathds{1}\{ A = a \}}{a E  + (1-a) (1-E)}  \Big | X = x \right].
		\end{equation}
		Thus,
		\[
		\mu^{+}(x,a) = \sup_{Q\in\mathcal{P}}\mathbb{E}_{Q}\left[ Y(a) | X = x \right] \geq \mathbb{E} \left[ \frac{Y \mathds{1}\{ A = a \} }{ a E  + (1-a) (1-E) } \Big | X = x \right ].
		\]
		Since this inequality holds for any $E$ satisfying the balancing condition, it holds in particular for the supremum over all such $E$. 
		
		Conversely, for any $Q\in\mathcal{P}$,
		\[\begin{aligned}
			\mathbb{E}_{Q}\left[ Y(a) | X = x \right] = & \mathbb{E}_{Q}\left[ \frac{Y\mathds{1}\{ A = a \}}{\mathbb{P}_{Q} (A=a | X,U ) } \Big | X = x  \right] \\ 
			= & \mathbb{E}_{Q}\left[ Y\mathds{1}\{ A = a \} \mathbb{E} \left[ \frac{1}{\mathbb{P}_{Q} (A=a | X,U ) } \Big| X,Y,A=a   \right] \Big | X = x  \right]. \\
		\end{aligned}\]
		Define $e_{Q}(X,Y,A) = 1 / \mathbb{E} \left[ 1/\mathbb{P}_{Q} (A=a | X,U ) | X,Y,A \right]$, and introduce a random variable $E$ on the same probability space $P$ such that $AE + (1-A)(1-E) = e_{Q}(X,Y,A)$. It is straightforward to verify that $E\in\mathcal{E}$ and
		\[\begin{aligned}
			\mathbb{E}_{Q}\left[ Y(a) | X = x \right] = \mathbb{E}\left[ \frac{Y\mathds{1}\{ A = a \}}{e_{Q}(X,Y,a) } \Big | X = x  \right] = \mathbb{E} \left[ \frac{Y\mathds{1}\{ A = a \}}{a E  + (1-a) (1-E)}  \Big | X = x \right].
		\end{aligned}\]
		Thus,
		\[
		\mathbb{E}_{Q}\left[ Y(a) | X = x \right] \leq \sup_{E\in\mathcal{E}} \mathbb{E} \left[ \frac{Y \mathds{1}\{ A = a \} }{ a E  + (1-a) (1-E) } \Big | X = x \right ].
		\]
		Since $Q$ is arbitrary, the inequality continues to hold after taking the supremum over $Q\in\mathcal{P}$. 
		
		Finally, the proof that the partially identified set $\Theta_{\mu}(x,a)$ forms an interval follows the same argument as in \cite{dorn2023sharp}.

		\textbf{Step 2. The Closed-form Expression.} For $a\in\{0,1\}$, define $E^{\pm}_{a}$ by
		\begin{equation}\label{equation: closed-form solution}
			\frac{1}{aE^{\pm}_{a}+(1-a)(1-E^{\pm}_{a})} = 1+\frac{1- e_a(X)}{e_a(X)}\Lambda^{\pm\text{sgn}\left(Y-q^{\pm}(X,a)\right)}.
		\end{equation}
		It is straightforward to verify that $E^{\pm}_{a} \in \mathcal{E}$, and thus $E^{\pm}_{a}$ is feasible for \cref{equation: optimization for identified set of mu}. For any $(x,a) \in \mathcal{X} \times  {\{}0,1 {\}}$ and $E \in \mathcal{E}$, we have
		\[\begin{aligned}
			& \mathbb{E} \left[ \frac{Y \mathds{1}\{ A = a \} }{ a E  + (1-a) (1-E) } \Big | X = x \right ] \\
			= & q^{+}(x,a) \mathbb{E} \left[ \frac{  \mathds{1}\{ A = a \} }{ a E  + (1-a) (1-E) } \Big | X = x \right ] + \mathbb{E} \left[ \frac{ \left( Y - q^{+}(X,a) \right) \mathds{1}\{ A = a \} }{ a E  + (1-a) (1-E) } \Big | X = x \right ] \\
			\leq & q^{+}(x,a) \mathbb{E} \left[ \frac{  \mathds{1}\{ A = a \} }{ a E^{+}_{a}  + (1-a) (1-E^{+}_{a}) } \Big | X = x \right ] + \mathbb{E} \left[ \frac{ \left( Y - q^{+}(X,a) \right) \mathds{1}\{ A = a \} }{ a E^{+}_{a}  + (1-a) (1-E^{+}_{a}) } \Big | X = x \right ] \\
			= & \mathbb{E} \left[ \frac{Y \mathds{1}\{ A = a \} }{ a E^{+}_{a}  + (1-a) (1-E^{+}_{a}) } \Big | X = x \right ], \\
		\end{aligned}\]
		where the inequality follows because $1/\left( a E^{+}_{a}  + (1-a) (1-E^{+}_{a})  \right)$ attains the maximum (minimum) allowable value when $\left( Y - q^{+}(X,a) \right) \mathds{1}\{ A = a \}$ is positive (negative). By a similar argument, we can show that for any $(x,a)\in\mathcal{X}\times\{0,1\}$ and $E\in\mathcal{E}$,
		\[
		\mathbb{E} \left[ \frac{Y \mathds{1}\{ A = a \} }{ a E  + (1-a) (1-E) } \Big | X = x \right ] \geq \mathbb{E} \left[ \frac{Y \mathds{1}\{ A = a \} }{ a E^{-}_{a}  + (1-a) (1-E^{-}_{a}) } \Big | X = x \right ].
		\]
		Since $E$ is arbitrary, this completes the proof.

		\textit{Step 3. The Existence of Counterfactual Distribution.} \cref{equation: optimization for identified set of mu,equation: closed-form solution} together imply that random variables $E_{a}^{\pm}\in\mathcal{E}$ satisfy
		\[
		\begin{aligned}
			& \mu^{+}(x,a) =  \mathbb{E} \left[ \frac{Y \mathds{1}\{ A = a \} }{ a E_{a}^{+}  + (1-a) (1-E_{a}^{+}) } \Big | X = x \right ], \\
			& \mu^{-}(x,a) =  \mathbb{E} \left[ \frac{Y \mathds{1}\{ A = a \} }{ a E_{a}^{-}  + (1-a) (1-E_{a}^{-}) } \Big | X = x \right ], \\
		\end{aligned}
		\]
		in $(x,a)\in\mathcal{X}\times\{0,1\}$ almost surely. By Proposition 1B in \cite{dorn2023sharp} and \cref{equation: existence of distribution}, we conclude that there exist distributions $P_{\mu,a}^{\pm} \in \mathcal{P}$ such that $\mu_{P_{\mu,a}^+}(x,a) = \mu^{+}(x,a)$ and $\mu_{P_{\mu,a}^-}(x,a) = \mu^{-}(x,a)$ in $(x,a)\in\mathcal{X}\times\{0,1\}$ almost surely.
	\end{proof}
	

	\subsection{Proof of Proposition \ref{proposition: pi_CATE}}
	
	\begin{proof}[Proof of \cref{proposition: pi_CATE}]
		We complete the proof by constructing a data-compatible distribution in $\mathcal{P}$ that simultaneously attains $\mu^{+}(x,1)$ and $\mu^{-}(x,0)$. Similar arguments can be applied to derive the lower bound, which we omit for brevity.
		
		Define $E_{\tau}^{+} = AE_{1}^{+}+(1-A)E_{0}^{-}$. It is straightforward to verify that $E_{\tau}^{+}\in\mathcal{E}$. Applying \cref{proposition: pi_mean} and similar arguments as C.4.2 in \cite{dorn2023sharp}, it is straightforward to construct the distribution $P^{+}_{\tau}\in\mathcal{P}$ such that 
		\[
		\mathbb{E}_{P^{+}_{\tau}}\left[ Y(1) - Y(0) | X=x \right] = \mathbb{E} \left[ \frac{YA}{E_{\tau}^{+}} \Big| X=x  \right] - \mathbb{E} \left[ \frac{Y(1-A)}{ {1-}E_{\tau}^{+}} \Big| X=x  \right] = \mu^{+}(x,1) - \mu^{-}(x,0)
		\]
		in $x\in\mathcal{X}$ almost surely. Since
		\[\begin{aligned}
			\tau^{+}(x) = & \sup_{Q\in\mathcal{P}} \mathbb{E}_{Q}\left[ Y(1) - Y(0) | X=x \right] \\
			\leq & \sup_{Q\in\mathcal{P}} \mathbb{E}_{Q}\left[ Y(1)  | X=x \right] - \inf_{Q\in\mathcal{P}} \mathbb{E}_{Q}\left[  Y(0) | X=x \right] \\
			= & \mu^{+}(x,1) - \mu^{-}(x,0), \\
		\end{aligned}\]
		in $x\in\mathcal{X}$ almost surely. We conclude that 
		$\tau^{+}(x) = \mu^{+}(x,1) - \mu^{-}(x,0)$ in $x\in\mathcal{X}$ almost surely. 
	\end{proof}

	\subsection{Proof of Theorem \ref{theorem: max-min welfare assignment rule} }
	
	\begin{proof}[Proof of \cref{theorem: max-min welfare assignment rule}]
		
		Notice that
		\[\begin{aligned}
			\inf_{ Q \in \mathcal{P} } \mathbb{E}_{Q} [Y(\pi(X))] = & \inf_{ Q \in \mathcal{P} } \left\{ \mathbb{E}\Big[ \pi(X) \mu_{Q}(X,1) \Big] + \mathbb{E}\Big[ (1-\pi(X)) \mu_{Q}(X,0) \Big] \right\} \\
			\geq & \inf_{ Q \in \mathcal{P} } \mathbb{E}\Big[ \pi(X) \mu_{Q}(X,1) \Big] + \inf_{ Q \in \mathcal{P} } \mathbb{E}\Big[ (1-\pi(X)) \mu_{Q}(X,0) \Big]. \\
		\end{aligned}\]
		
		Then we will show that inf and expectation operators are exchangeable. On one hand, for any distribution $Q \in \mathcal{P}$,
		\[
		\mathbb{E}\Big[ \pi(X) \mu_{Q}(X,1) \Big] \geq \mathbb{E}\left[ \pi(X) \inf_{ Q \in \mathcal{P} } \mu_{Q}(X,1) \right].
		\]
		Since $Q$ is arbitrary, the inequality continues to hold after taking the infimum over $\mathcal{P}$ on both sides. For the other side, \cref{proposition: pi_mean} has shown that there exists a distribution $P_{\mu,1}^{-}\in \mathcal{P}$ satisfying $\mu^{-}(x,1)=\mu_{P_{\mu,1}^{-}}(x,1)$ for any $x\in\mathcal{X}$. Thus,
		\[
		\inf_{ Q \in \mathcal{P} } \mathbb{E}\Big[ \pi(X) \mu_{Q}(X,1) \Big] \leq \mathbb{E} \Big[ 
		\pi(X)\mu_{P_{\mu,1}^{-}}(X,1) \Big] = \mathbb{E} \Big[ 
		\pi(X)\mu^{-}(X,1) \Big] = \mathbb{E}\left[ \pi(X) \inf_{ Q \in \mathcal{P} } \mu_{Q}(X,1) \right].
		\]
		Combining the results above, we can conclude that 
		\[
		\inf_{ Q \in \mathcal{P} }\mathbb{E}\Big[ \pi(X) \mu_{Q}(X,1) \Big] = \mathbb{E}\left[ \pi(X) \inf_{ Q \in \mathcal{P} } \mu_{Q}(X,1) \right].
		\]
		Similar arguments can be applied to derive that
		\[
		\inf_{ Q \in \mathcal{P} } \mathbb{E}\Big[ (1-\pi(X)) \mu_{Q}(X,0) \Big] =  \mathbb{E}\left[ (1-\pi(X)) \inf_{ Q \in \mathcal{P} } \mu_{Q}(X,0) \right].
		\]
		
		Define $E^{-} = AE_{1}^{-} + (1-A)E_{0}^{-}$ satisfying $E^{-}\in\mathcal{E}$. Applying \cref{proposition: pi_mean} and similar arguments as C.4.3 in \cite{dorn2023sharp}, we can construct the distribution $P^{-}$ fulfilling
		\[
		\mathbb{E}_{P^{-}}[Y(1)|X=x] = \mathbb{E}\left[ \frac{YA}{E^{-}} \Big| X=x  \right] = \mu^{-}(x,1)
		\]
		and
		\[
		\mathbb{E}_{P^{-}}[Y(0)|X=x] = \mathbb{E}\left[ \frac{Y(1-A)}{ {1-}E^{-}} \Big| X=x  \right] = \mu^{-}(x,0).
		\]
		which imply that
		\[
		\mathbb{E}_{P^{-}}[Y(\pi(X))] = \mathbb{E}\left[  \pi(X) \mu^{-}(X,1) + (1-\pi(X)) \mu^{-}(X,0)  \right].
		\]
		Thus, we conclude that
		\begin{align*}
			& \inf_{ Q \in \mathcal{P} } \mathbb{E}_{Q} [Y(\pi(X))] \leq \mathbb{E}_{P^{-}}[Y(\pi(X))] \\
			= & \mathbb{E}\left[  \pi(X) \mu^{-}(X,1) + (1-\pi(X)) \mu^{-}(X,0)  \right] \\
			= & \mathbb{E}\left[ \pi(X) \inf_{ Q \in \mathcal{P} } \mu_{Q}(X,1) \right] + \mathbb{E}\left[ (1-\pi(X)) \inf_{ Q \in \mathcal{P} } \mu_{Q}(X,0) \right]. 
		\end{align*}
		Results above together indicate that
		\[\begin{aligned}
			\inf_{ Q \in \mathcal{P} } \mathbb{E}_{Q} [Y(\pi(X))] 
			= & \mathbb{E}\left[ \pi(X) \inf_{ Q \in \mathcal{P} } \mu_{Q}(X,1) \right] + \mathbb{E}\left[ (1-\pi(X)) \inf_{ Q \in \mathcal{P} } \mu_{Q}(X,0) \right] \\
			= & \mathbb{E} \Big[ \pi(X) \left( \mu^{-}(X,1) - \mu^{-}(X,0) \right)  \Big] + \mathbb{E}[\mu^{-}(X,0)],
		\end{aligned}\]
		which completes the proof.
	\end{proof}

	\subsection{Proof of Theorem \ref{theorem: CATE-based robust welfare} }
	
	\begin{proof}[Proof of \cref{theorem: CATE-based robust welfare}]
		Applying similar arguments as the proof of \cref{theorem: max-min welfare assignment rule}, we have that
		\begin{equation}\label{appendix equation: CATE PL}
			\inf_{Q \in \mathcal{P}} \mathbb{E} \left[ \tau_{Q}(X) \pi (X) \right]  =   \mathbb{E} \left[ \inf_{Q \in \mathcal{P}} \tau_{Q}(X) \pi (X) \right]  = \mathbb{E} \left[  \tau^-(X) \pi (X) \right].
		\end{equation}
		According to \cref{appendix equation: CATE PL}, the first-best policy is given by $\pi_{\Delta}^\star(x) = \mathds{1}\{ \tau^-(x) > 0 \}$.
	\end{proof}

	\subsection{Proof of  Proposition \ref{proposition: doubly robust score for mean}}
	
	Recall that
	\[
	\psi_{W}\left(z,\pi;\eta_{W} \right) = \sum_{t \in \{0,1\} } \phi_{t}^{-} \left(z; \eta_{W}  \right) \pi(t|x),
	\]
	where $\eta_{W} = \left(e,q^- ,\rho^-_{1}, \rho^-_{0} \right) $,  $\phi_{t}^{-} \left(z;\eta_{W}\right)$ are defined in \cref{equation: phi_t^- definition}, 
	and the functions $\rho_{1}^{\pm}(x,t)$ and $\rho_{0}^{\pm}(x,t)$ are defined in \cref{equation: rho}.

	\begin{proof}[Proof of \cref{proposition: doubly robust score for mean}]
		
		\noindent {\bf Step 1.} The proof of part (1) follows directly from \cref{theorem: max-min welfare assignment rule}, together with the following derivation:
		\begin{equation}\label{appendix equation: conditional mean zero}
			\mathbb{E}\left[ \mathds{1}\{A=t\}\left( \frac{1}{1+\Lambda} - \mathbb{1}\left\{ Y< q^{-}(X,t)\right\} \right) \Big| X \right] =  \mathbb{E}[\mathds{1}\{A=t\}-e_{t}(X)|X] = 0.
		\end{equation}

		\noindent {\bf Step 2.} To show part (2), given the linearity of pathwise derivative, it suffices to show 
		the pathwise derivative of $r \mapsto \mathbb{E}\left[ \psi_{W}(Z,\pi;\eta_{W, \Lambda} + r \bar{\eta}  )  \right] = 0$  for all perturbation directions $\bar{\eta}$ varying only in the components corresponding to  $e,q^-, \rho_1^-$ and $\rho_0^-$, respectively. We only show that 
		\begin{equation}\label{appendix equation: orthogonality}
			\frac{ \mathrm{d} }{\mathrm{d}r}\mathbb{E}\left[ \psi_{W}\left(Z,\pi; {e}, q^{-}+r(\widetilde{q}^{-}-q^{-}) {,\rho^{-}_{1},\rho^{-}_{0}} \right) \right]_{r=0} = 0,
		\end{equation}
		for any $\widetilde{q}^{-}$ belonging to some small neighborhood of $q^-$. The derivations of the pathwise derivatives with respect to the perturbation directions in $ e$, $\rho^{-}_{1}$ 
		and $\rho^{-}_{0}$ are analogous and thus omitted for brevity. Notice that
		\[\begin{aligned}
			& \frac{ \mathrm{d} }{\mathrm{d}r}\mathbb{E}\left[ \psi_{W}\left(Z,\pi; {e,}q^{-}+r  {(\widetilde{q}^{-}-q^{-}),\rho^{-}_{1},\rho^{-}_{0}} \right) \right] \\
			= &  \sum_{t \in  \{0,1\} } \frac{ \mathrm{d} }{\mathrm{d}r}\mathbb{E}\left[ \phi_{t}^{-}\left(Z; {e,}q^{-}+r(\widetilde{q}^{-}-q^{-}) {,\rho^{-}_{1},\rho^{-}_{0}} \right) \pi(t| X) \right] .
		\end{aligned}\]
		To derive \cref{appendix equation: orthogonality}, it suffices to show that
		\[\begin{aligned}
			& \frac{ \mathrm{d} }{\mathrm{d}r}\mathbb{E}\left[ \phi_{ {t}}^{-}\left(Z; {e,}q^{-}+r(\widetilde{q}^{-}-q^{-}) {,\rho^{-}_{1},\rho^{-}_{0}} \right) \pi(t|X) \right]_{r=0}  = 0, \\
		\end{aligned}
		\]
		for all $t \in \{0,1\}$.
		By  {expanding the term for $t=1$ and using} \cref{appendix equation: conditional mean zero}, we have
		\[\begin{aligned}
			& \frac{ \mathrm{d} }{\mathrm{d}r}\mathbb{E}\left[ \phi_{1}^{-}\left(Z;e,q^{-}+r\left(\widetilde{q}^{-}-q^{-}\right),\rho^{-}_{1},\rho^{-}_{0}\right) \pi(X) \right]_{r=0} \\
			= & \frac{ \mathrm{d} }{\mathrm{d}r}\mathbb{E}\left[ YA \left(1+\frac{1- e(X)}{e(X)}\Lambda^{-\text{sgn}\big(Y-q^{-}(X,1)-r\left(\widetilde{q}^{-}(X,1)-q^{-}(X,1)\right)\big)}\right) {\pi(X)}\right]_{r=0} \\
			+ & \frac{ \mathrm{d} }{\mathrm{d}r}\mathbb{E}\left[ q^{-}(X,1) A \frac{1-e(X)}{e(X)}\left( \Lambda - \Lambda^{-1} \right) \left( \frac{1}{1+\Lambda} - \mathbb{1}\left\{ Y< q^{-}(X,1)+r\left(\widetilde{q}^{-}(X,1)-q^{-}(X,1)\right)\right\} \right) {\pi(X)}\right]_{r=0},  \\
		\end{aligned}\]
		 {where $\pi(1|X)=\pi(X)$ by notation.} Let $\mathrm{II}_{A.1}$ and $\mathrm{II}_{A.2}$ denote the two summands in the expression above.  It is not difficult to verify that
		\[\begin{aligned}
			\mathrm{II}_{A.1} & =  \frac{ \mathrm{d} }{\mathrm{d}r}\mathbb{E}\left[ YA \frac{1-e(X)}{e(X)}\left( \Lambda - \Lambda^{-1} \right) \mathbb{1}\left\{ Y< q^{-}(X,1)+r\left(\widetilde{q}^{-}(X,1)-q^{-}(X,1)\right)\right\}  {\pi(X)} \right]_{r=0} \\
			& =  { \frac{ \mathrm{d} }{\mathrm{d}r} \mathbb{E} \left[ (1-e(X)) \left( \Lambda - \Lambda^{-1} \right) \pi(X) \mathbb{E} \left[ Y \mathbb{1}\left\{ Y< q^{-}(X,1)+r\left(\widetilde{q}^{-}(X,1)-q^{-}(X,1)\right)\right\} | X, A=1 \right]  \right]_{r=0}  } \\
			& =  {  \mathbb{E} \left[ (1-e(X)) \left( \Lambda - \Lambda^{-1} \right) \pi(X) \frac{ \mathrm{d} }{\mathrm{d}r} \left[ \int_{-\infty}^{q^{-}(X,1)+r\left(\widetilde{q}^{-}(X,1)-q^{-}(X,1)\right)} y f_{Y}(y|X,1) dy  \right]_{r=0}    \right]}   \\
			& =  \mathbb{E}\left[ q^{-}(X,1)   {(1-e(X))} \left( \Lambda - \Lambda^{-1} \right) f_{Y}\left( q^{-} (X,1) | X,1 \right) \left(\widetilde{q}^{-} (X,1)-q^{-} (X,1)\right)  {\pi(X)} \right] \\
			& =  {-} \mathrm{II}_{A.2},
		\end{aligned}
		\]
		 {where the fourth equality follows from \cref{assum: density}. This} implies that the pathwise derivative with respect to $q^-$ is zero.  {The results for $t=0$ follow from an analogous procedure, thereby completing} the proof of Part (2).
	\end{proof}
	
	\subsection{Proof of  Proposition \ref{proposition: doubly robust score for CATE}}
	\begin{proof}[Proof of \cref{proposition: doubly robust score for CATE}]
		The proof is identical to that of \cref{proposition: doubly robust score for mean} and is omitted here.
	\end{proof}

	\subsection{Proof of Lemma \ref{lemma: nuisance parameter estimation error}}

	\begin{proof}[Proof of \cref{lemma: nuisance parameter estimation error}] We focus on deriving the result for $\widehat{W}_{n}(\pi)-W_{n}(\pi)$; the argument for   $\widehat{\Delta}_{n}(\pi)-\Delta_{n}(\pi)$ is similar and omitted for brevity. Notice that
		\[\begin{aligned}
			\widehat{W}_{n}(\pi)-W_{n}(\pi) = & \frac{1}{n}\sum_{k=1}^{K}\sum_{i\in{\mathcal{I}_{k}}} \pi(X_{i}) \left( \phi_{1}^{-}\left(Z_{i};\widehat{\eta}_{W}^{-k}\right) - \phi_{1}^{-}\left(Z_{i};\eta_{W}\right) \right) \\
			&   {+}  \frac{1}{n}\sum_{k=1}^{K}\sum_{i\in{\mathcal{I}_{k}}} \left( 1-\pi(X_{i}) \right) \left( \phi_{0}^{-}\left(Z_{i};\widehat{\eta}_{W}^{-k}\right) - \phi_{0}^{-}\left(Z_{i};\eta_{W}\right) \right). \\
		\end{aligned}\]
		We bound the first term at the right-hand-side. The analysis of the second term is similar and omitted. To simplify notation, we suppress the superscript $-k$ of the nuisance estimators, e.g., $\widehat{\eta}_{W} = \widehat{\eta}_{W}^{-k}$. By definition,
		\[
		\begin{aligned}
			& \frac{1}{n}\sum_{k=1}^{K}\sum_{i\in{\mathcal{I}_{k}}} \pi(X_{i}) \left( \phi_{1}^{-}\left(Z_{i};\widehat{\eta}_{W}\right) - \phi_{1}^{-}\left(Z_{i};\eta_{W}\right)  \right) \\ 
			= & \frac{1}{n}\sum_{k=1}^{K}\sum_{i\in{\mathcal{I}_{k}}} \pi(X_{i}) Y_{i}A_{i} \left( \frac{1- \widehat{e} (X_{i})}{\widehat{e} (X_{i})}\Lambda^{-\text{sgn}\big(Y_{i}-\widehat{q}^{-} (X_{i},1)\big)} -   \frac{1- e(X_{i})}{e(X_{i})}\Lambda^{-\text{sgn}\big(Y_{i}-q^{-}(X_{i},1)\big)}\right) \\
			+ & \frac{\Lambda - \Lambda^{-1}}{n}\sum_{k=1}^{K}\sum_{i\in{\mathcal{I}_{k}}} \pi(X_{i}) A_{i}\left( \frac{1}{1+\Lambda} - \mathbb{1}\left\{ Y_{i}< \widehat{q}^{-} (X_{i},1)\right\} \right) \left( \widehat{q}^{-} (X_{i},1) \frac{1- \widehat{e} (X_{i})}{\widehat{e} (X_{i})} - q^{-}(X_{i},1) \frac{1- e(X_{i})}{e(X_{i})} \right) \\
			- & \frac{\Lambda - \Lambda^{-1}}{n}\sum_{k=1}^{K}\sum_{i\in{\mathcal{I}_{k}}} \pi(X_{i}) A_{i}q^{-}(X_{i},1) \frac{1- e(X_{i})}{e(X_{i})} \left( \mathbb{1}\left\{ Y_{i}< \widehat{q}^{-} (X_{i},1)\right\} - \mathbb{1}\left\{ Y_{i}< q^{-}(X_{i},1)\right\} \right) \\
			- & \frac{1}{n}\sum_{k=1}^{K}\sum_{i\in{\mathcal{I}_{k}}} \pi(X_{i}) A_{i} \left(\Lambda\widehat{\rho}_{1}^{-}(X_{i},1) + \Lambda^{-1}\widehat{\rho}_{0}^{-}(X_{i},1) \right) \left( \frac{1}{\widehat{e} (X_{i})} - \frac{1}{e(X_{i})}  \right) \\
			- & \frac{1}{n}\sum_{k=1}^{K}\sum_{i\in{\mathcal{I}_{k}}} \pi(X_{i}) \frac{A_{i}-e(X_{i})}{e(X_{i})} 
			\left[ \Lambda \left( \widehat{\rho}_{1}^{-}(X_{i},1)  - \rho_{1}^{-}(X_{i},1) \right)  + \Lambda^{-1} \left( \widehat{\rho}_{0}^{-}(X_{i},1)  - \rho_{0}^{-}(X_{i},1) \right) \right]. \\
		\end{aligned}
		\]
		Denote these five summands by $D_j(\pi)$ for $1 \leq j \leq 5$.
		
		{\it First term.} To bound the first term, it is useful to separate the contributions of each of the $K$ folds:
		\begin{equation*}\label{equation: I_k_pi_1}
			\begin{aligned}
				D_{1,k} (\pi) = & \frac{1}{n} \sum_{i \in \mathcal{I}_{k} } \pi(X_{i}) Y_{i}A_{i} \left( \frac{1- \widehat{e} (X_{i})}{\widehat{e} (X_{i})}\Lambda^{-\text{sgn}\big(Y_{i}-\widehat{q}^{-} (X_{i},1)\big)} -   \frac{1- e(X_{i})}{e(X_{i})}\Lambda^{-\text{sgn}\big(Y_{i}-q^{-}(X_{i},1)\big)}\right) \\
				= & \frac{1}{n} \sum_{i \in \mathcal{I}_{k} } \pi(X_{i}) Y_{i}A_{i} \Lambda^{-\text{sgn}\big(Y_{i}-q^{-}(X_{i},1)\big)} \left( \frac{1}{\widehat{e} (X_{i})} -   \frac{1}{e(X_{i})}\right) \\
				+ & \frac{\Lambda - \Lambda^{-1}}{n} \sum_{i \in \mathcal{I}_{k} } \pi(X_{i}) Y_{i}A_{i}\frac{1- e(X_{i})}{e(X_{i})} \left( \mathbb{1}\left\{ Y_{i}< \widehat{q}^{-} (X_{i},1)\right\} - \mathbb{1}\left\{ Y_{i}< q^{-}(X_{i},1)\right\} \right) \\
				+ & \frac{\Lambda - \Lambda^{-1}}{n} \sum_{i \in \mathcal{I}_{k} } \pi(X_{i}) Y_{i}A_{i} \left( \frac{1}{\widehat{e} (X_{i})} -   \frac{1}{e(X_{i})}\right) \left( \mathbb{1}\left\{ Y_{i}< \widehat{q}^{-} (X_{i},1)\right\} - \mathbb{1}\left\{ Y_{i}< q^{-}(X_{i},1)\right\} \right) \\
				= & D_{1,k}^{(1)} (\pi) + D_{1,k}^{(2)} (\pi) + D_{1,k}^{(3)} (\pi). \\
			\end{aligned}
		\end{equation*}
		We can use the Cauchy-Schwartz inequality to verify that
		\[\begin{aligned}
			\sup_{\pi \in \Pi_n}\left| D_{1,k}^{(3)} (\pi) \right| \leq & \left( \Lambda - \Lambda^{-1} \right)  \sqrt{ \frac{1}{n} \sum_{i \in \mathcal{I}_{k} } Y_{i}^{2} \left( \frac{1}{\widehat{e} (X_{i})} -   \frac{1}{e(X_{i})}\right)^{2} } \\
			&\times\sqrt{ \frac{1}{n} \sum_{i \in \mathcal{I}_{k} }  \left( \mathbb{1}\left\{ Y_{i}< \widehat{q}^{-} (X_{i},1)\right\} - \mathbb{1}\left\{ Y_{i}< q^{-}(X_{i},1)\right\} \right)^{2} } \\
			= & O_{P}\left( \frac{b(n)}{n^{(\zeta_{e}+\zeta_{q})/2}}  \right) =  {o}_{P} \left( n^{-1/2} \right), \\
		\end{aligned}\]
		where the first equality holds by Markov's inequality, \cref{assumption: bounded support} and \cref{assumption: nuisance parameter estimation error}. Then, uniformly over $\pi \in \Pi_n$,
		\[
		D_{1,k} (\pi) = D_{1,k}^{(1)} (\pi) + D_{1,k}^{(2)} (\pi) +  {o}_{P} \left( n^{-1/2} \right).
		\]
		
		{\it Second term.} For bounding the second term,  we still separate out the contributions of the $K$ different folds. After applying similar arguments as the preceding one, we obtain that uniformly over $\pi \in \Pi_n$,
		\begin{equation}\label{decompose of D2k}
			D_{2,k}(\pi) = \widetilde{D}_{2,k}(\pi) +  {o}_{P} \left( n^{-1/2} \right),
		\end{equation}
		where
		\[
		\begin{aligned}
			\widetilde{D}_{2,k}(\pi) 
			=  \frac{\Lambda - \Lambda^{-1}}{n}\sum_{i\in{\mathcal{I}_{k}}}  &\pi(X_{i}) A_{i}\left( \frac{1}{1+\Lambda} - \mathbb{1}\left\{ Y_{i}< q^{-}(X_{i},1)\right\} \right) \\
			& \times \left( \widehat{q}^{-} (X_{i},1) \frac{1- \widehat{e} (X_{i})}{\widehat{e} (X_{i})} - q^{-}(X_{i},1) \frac{1- e(X_{i})}{e(X_{i})} \right) . \\ 
		\end{aligned}
		\]
		Since $\widehat{q}^{-} $ and $\widehat{e} $ were computed using data $\{Z_i\}_{i \in \mathcal{I}_{k}^c}$, one has 
		\[
		\mathbb{E} \left[ \widetilde{D}_{2,k}(\pi) \Big| \widehat{q}^{-} , \widehat{e}   \right] = 0.
		\]
		By \cref{assumption: nuisance parameter estimation error}, we have that uniformly over $i$,
		\[
		\left| \widehat{q}^{-} (X_{i},1) \frac{1- \widehat{e} (X_{i})}{\widehat{e} (X_{i})} - q^{-}(X_{i},1) \frac{1- e(X_{i})}{e(X_{i})}  \right| \leq 1
		\]
		with probability approaching to 1. So the individual summands in $\widetilde{D}_{2,k}(\pi)$ are all $\nu$-sub Gaussian with probability approaching to 1. Define
		\[
		\widehat{V}_{2,k} = \frac{\Lambda}{(1+\Lambda)^{2}} \mathbb{E} \left[ e(X_{i}) \left( \widehat{q}^{-} (X_{i},1) \frac{1- \widehat{e} (X_{i})}{\widehat{e} (X_{i})} - q^{-}(X_{i},1) \frac{1- e(X_{i})}{e(X_{i})} \right)^{2} \Big| \widehat{q}^{-} , \widehat{e}  \right].
		\]
		We can apply Corollary 3 in \cite{athey2021policy} to establish that
		\begin{equation}\label{eq.D2k}
			\begin{aligned}
				\mathbb{E} \left[\sup_{\pi \in \Pi_n} \left| \widetilde{D}_{2,k}(\pi) \right|  \Big | \widehat{q}^{-} , \widehat{e}  \right] 
				= & \mathbb{E} \left[\sup_{\pi \in \Pi_n} \left| \widetilde{D}_{2,k}(\pi) - \mathbb{E} \left[ \widetilde{D}_{2,k}(\pi) \Big| \widehat{q}^{-} , \widehat{e}   \right]  \right|  \Bigg| \widehat{q}^{-} , \widehat{e}  \right]   \\
				= & O \left( \sqrt{ \mathrm{VC}(\Pi_n) \frac{\widehat{V}_{2,k}}{n}  } \right) .
			\end{aligned}
		\end{equation}
		\cref{assumption: nuisance parameter estimation error}, an application of Jensen's inequality and \cref{decompose of D2k,eq.D2k} result that
		\[
		\mathbb{E} \left[\sup_{\pi \in \Pi_n} \left| D_{2}(\pi) \right|   \right] = O \left( \sqrt{ \mathrm{VC}(\Pi_n) \frac{ b(n) }{ n^{1+\min\{ \zeta_{e}, \zeta_{q} \} }  } } \right).
		\]

		{\it Third term.}  Denote
		\[
		D_{3,k}(\pi) = - \frac{\Lambda - \Lambda^{-1}}{n}\sum_{i\in{\mathcal{I}_{k}}} \pi(X_{i}) A_{i}q^{-}(X_{i},1) \frac{1- e(X_{i})}{e(X_{i})} \left( \mathbb{1}\left\{ Y_{i}< \widehat{q}^{-} (X_{i},1)\right\} - \mathbb{1}\left\{ Y_{i}< q^{-}(X_{i},1)\right\} \right).
		\]
		as the $k$-th fold in $D_{3}(\pi)$. Then
		\[\begin{aligned}
			& D_{1,k}^{(2)}(\pi) + D_{3,k}(\pi) \\
			= & \frac{\Lambda - \Lambda^{-1}}{n} \sum_{i \in \mathcal{I}_{k} } \pi(X_{i}) A_{i}\frac{1- e(X_{i})}{e(X_{i})} \Big( Y_{i} - q^{-}(X_{i},1) \Big) \left( \mathbb{1}\left\{ Y_{i}< \widehat{q}^{-} (X_{i},1)\right\} - \mathbb{1}\left\{ Y_{i}< q^{-}(X_{i},1)\right\} \right). \\
		\end{aligned}\]
		Define
		\[
		\widehat{V}_{3,k} = \mathbb{E} \left[ A_{i}  \Big( Y_{i} - q^{-}(X_{i},1) \Big)^{2} \left( \mathbb{1}\left\{ Y_{i}< \widehat{q}^{-} (X_{i},1)\right\} - \mathbb{1}\left\{ Y_{i}< q^{-}(X_{i},1)\right\} \right)^{2} \Big|  \widehat{q}^{-}  \right].
		\]
		By the law of iterated expectation, $\mathbb{E}\left[ \widehat{V}_{3,k} \right]$ can be expressed as
		\[\begin{aligned}
			\mathbb{E}\left[ \widehat{V}_{3,k} \right] = &\mathbb{E} \left[ e(X_{i})  \mathbb{E} \left[  \Big( Y_{i} - q^{-}(X_{i},1) \Big)^{2} \left( \mathbb{1}\left\{ Y_{i}< \widehat{q}^{-} (X_{i},1)\right\} - \mathbb{1}\left\{ Y_{i}< q^{-}(X_{i},1)\right\} \right)^{2} \Big|  X_{i}, A_{i}=1, \widehat{q}^{-}  \right] \right]. \\
		\end{aligned}\]
		Notice that
		\[\begin{aligned}
			& \mathbb{E}\left[ \Big( Y_{i} - q^{-}(X_{i},1) \Big)^{2} \left( \mathbb{1}\left\{ Y_{i}< \widehat{q}^{-} (X_{i},1)\right\} - \mathbb{1}\left\{ Y_{i}< q^{-}(X_{i},1)\right\} \right)^{2} \Big|  X_{i}, 
			A_{i}=1, \widehat{q}^{-}   \right] \\
			= & \int_{\widehat{q}^{-} (X_{i},1)}^{q^{-}(X_{i},1)} \Big( y - q^{-}(X_{i},1) \Big)^{2} dF_{Y_{i}}\left( y \Big|  X_{i}, 
			A_{i}=1, \widehat{q}^{-}  \right) 
			+ \int_{q^{-}(X_{i},1)}^{\widehat{q}^{-} (X_{i},1)} \Big( y - q^{-}(X_{i},1) \Big)^{2} dF_{Y_{i}}\left( y \Big|  X_{i}, 
			A_{i}=1, \widehat{q}^{-}  \right) \\
			= & O_{P} \left( \left( \widehat{q}^{-} (X_{i},1) - q^{-}(X_{i},1) \right)^{3} \right), \\
		\end{aligned}\]
		where the last equality follows by the mean value theorem. Hence, by \cref{assumption: nuisance parameter estimation error},
		\[
		\mathbb{E}\left[ \widehat{V}_{3,k} \right] = O \left( \frac{b(n)}{n^{3\zeta_{q}/2}}\right).
		\]
		After applying a similar argument as the proof of $\widetilde{D}_{2,k}(\pi) $, we have that for any $k$,
		\begin{equation}\label{eq.concentration bound}
			\mathbb{E} \left[\sup_{\pi \in \Pi_n} \left| D_{1,k}^{(2)}(\pi) + D_{3,k}(\pi) - \mathbb{E}\left[ D_{1,k}^{(2)}(\pi) + D_{3,k}(\pi) \Big|  \widehat{q}^{-}   \right] \right|  \Bigg|  \widehat{q}^{-}  \right] = O \left( \sqrt{ \mathrm{VC}(\Pi_n) \frac{ b(n) }{ n^{1+3\zeta_{q}/2 }  } } \right).
		\end{equation}
		By \cref{assumption: strict overlap}, $\pi(X_{i})\big( 1-e(X_{i}) \big)/e(X_{i})$ is bounded uniformly over $i$ and $\pi$.
		Thus,
		\begin{equation}\label{eq.D3k bound}
			\begin{aligned}
				& \mathbb{E}\left[ D_{1,k}^{(2)}(\pi) + D_{3,k}(\pi) \Big|  \widehat{q}^{-}   \right] \\
				\leq & \frac{\Lambda - \Lambda^{-1}}{K} \frac{1-\kappa}{\kappa} \mathbb{E}\left[ A_{i} \Big| Y_{i} - q^{-}(X_{i},1) \Big| \left| \mathbb{1}\left\{ Y_{i}< \widehat{q}^{-} (X_{i},1)\right\} - \mathbb{1}\left\{ Y_{i}< q^{-}(X_{i},1)\right\} \right| \Big|  \widehat{q}^{-}   \right] \\
				= & O \left( \frac{b(n)}{n^{\zeta_{q}}}\right)
			\end{aligned}
		\end{equation}
		uniformly over $\pi\in\Pi_{n}$. Applying Jensen's inequality, triangle inequality, and \cref{eq.concentration bound,eq.D3k bound} yield
		\[
		\mathbb{E} \left[\sup_{\pi \in \Pi_n} \left| D_{1}^{(2)}(\pi) + D_{3}(\pi) \right|   \right] = O \left( \sqrt{ \mathrm{VC}(\Pi_n) \frac{ b(n) }{ n^{1+3\zeta_{q}/2 }  } } + \frac{b(n)}{n^{\zeta_{q}}} \right).
		\]

		{\it Fourth term.} Define the $k$-th fold of $D_{4}(\pi)$ as
		\[
		D_{4,k}(\pi) = -  \frac{1}{n}\sum_{i\in{\mathcal{I}_{k}}} \pi(X_{i}) A_{i} \left(\Lambda\widehat{\rho}_{1}^{-}(X_{i},1) + \Lambda^{-1}\widehat{\rho}_{0}^{-}(X_{i},1) \right) \left( \frac{1}{\widehat{e} (X_{i})} - \frac{1}{e(X_{i})}  \right).
		\]
		Applying similar arguments as the preceding one, we can derive that
		\[
		D_{4,k}(\pi) = \widetilde{D}_{4,k}(\pi) + O_{P} \left( n^{-1/2} \right)
		\]
		uniformly over $\pi\in\Pi_{n}$, where
		\[
		\widetilde{D}_{4,k}(\pi) = -  \frac{1}{n}\sum_{i\in{\mathcal{I}_{k}}} \pi(X_{i}) A_{i} \left(\Lambda\rho_{1}^{-}(X_{i},1) + \Lambda^{-1}\rho_{0}^{-}(X_{i},1) \right) \left( \frac{1}{\widehat{e} (X_{i})} - \frac{1}{e(X_{i})}  \right).
		\]
		Then we consider
		\[\begin{aligned}
			& D_{1,k}^{(1)}(\pi) + \widetilde{D}_{4,k}(\pi) \\
			= & \frac{1}{n}\sum_{i\in{\mathcal{I}_{k}}} \pi(X_{i}) A_{i} \left( Y_{i} \Lambda^{-\text{sgn}\big(Y_{i}-q^{-}(X_{i},1)\big)} - \Lambda\rho_{1}^{-}(X_{i},1) - \Lambda^{-1}\rho_{0}^{-}(X_{i},1) \right) \left( \frac{1}{\widehat{e} (X_{i})} - \frac{1}{e(X_{i})}  \right). \\
		\end{aligned}\]
		It is not difficult to verify that
		\[
		\mathbb{E}\left[ D_{1,k}^{(1)}(\pi) + \widetilde{D}_{4,k}(\pi) \Big|  \widehat{e}   \right] =0,
		\]
		because
		\[
		\mathbb{E}\left[ Y_{i} \Lambda^{-\text{sgn}\big(Y_{i}-q^{-}(X_{i},1)\big)} - \Lambda\rho_{1}^{-}(X_{i},1) - \Lambda^{-1}\rho_{0}^{-}(X_{i},1) \Big|  X_{i}, A_{i}=1, \widehat{e}   \right] =0.
		\]
		Similar arguments are used to show that
		\[
		\mathbb{E} \left[\sup_{\pi \in \Pi_n} \left| D_{1}^{(1)}(\pi) + D_{4}(\pi) \right|   \right] = O \left( \sqrt{ \mathrm{VC}(\Pi_n) \frac{ b(n) }{ n^{1+\zeta_{e} }  } } \right).
		\]

		{\it Fifth term.} Recall that $\mathbb{E}\left[ A_{i} - e(X_{i}) \Big| X_{i}, \widehat{\rho}_{1}^{-}, \widehat{\rho}_{0}^{-} \right] = 0$.
		Applying a similar argument to bound the fourth term, we can similarly bound the fifth term as follows:
		\[
		\mathbb{E} \left[ \sup_{\pi \in \Pi_n} | D_5 (\pi)| \right] = O \left( \sqrt{ \mathrm{VC}(\Pi_n) \frac{ b(n) }{ n^{1+\zeta_{\rho} }  } } \right).
		\]

		{\it Wrapping Up \cref{lemma: nuisance parameter estimation error}. } Combining five terms above with \cref{assumption: policy size}  and \cref{assumption: nuisance parameter estimation error} gives 
		\[
		\begin{aligned}
			& \mathbb{E} \left[ \sup_{\pi \in \Pi_n} \left|\frac{1}{n}\sum_{k=1}^{K}\sum_{i\in{\mathcal{I}_{k}}} \pi(X_{i}) \left( \phi_{1}^{-}\left(Z_{i};\widehat{\eta}_{W}\right) - \phi_{1}^{-}\left(Z_{i};\eta_{W}\right)  \right) \right| \right] \\
			= & O\left( \sqrt{      \frac{ \mathrm{VC}(\Pi_n) 
					b\left( n \right) }{n^{1 + \min \left\{ \zeta_{\rho},\zeta_q, \zeta_e  \right\} }} } \right)  + O\left( \frac{b(n)}{n^{\zeta_{q}}} \right) \\
			= & o  (n^{-1/2}  ).
		\end{aligned}
		\]
	\end{proof}

	\subsection{Proof of Theorem \ref{theorem: asymptotic welfare loss}}
	
	\begin{proof}[Proof of \cref{theorem: asymptotic welfare loss}]
		
		We prove \cref{equation:regret_MMW}; the proof of \cref{equation:regret_MMI} follows by a similar argument and is omitted here for brevity.

		Suppose $\pi_{W,n}  \in  \argmax_{\pi \in \Pi_n} W_{n}( \pi  )$. Otherwise, the proof can be modified by incorporating an $\varepsilon$-approximate optimizer argument. By the definitions of $\pi_{W,n}$ and $\widehat{\pi}_{W,n}$ and \cref{assumption: nuisance parameter estimation error}, it follows that for any $\pi \in \Pi_n$:
		\[
		\begin{aligned}
			W(\pi)- W(\widehat{\pi}_{W,n} ) 
			& =  W(\pi)-W_{n} ( \pi )    +  \underbrace{ W_{n} ( \pi )  -  W_{n} (\pi_{W,n}   ) }_{\leq 0}  +  \underbrace{ W_{n} (\pi_{W,n}  )  - \widehat{W }_{ n} (\pi_{W,n}    )}_{= o_P(n^{-1/2 })}\\
			& + \underbrace{\widehat{W }_{ n} (\pi_{W,n}    ) -    \widehat{W }_{ n} (\widehat{\pi}_{W,n}   ) }_{ \leq 0 }   +  \underbrace{\widehat{W }_{ n} (\widehat{\pi}_{W,n}   )  - W_{n} (\widehat{\pi}_{W,n}   ) }_{ o_P(n^{-1/2 })} +  W_{n} (\widehat{\pi}_{W,n} ) - W(\widehat{\pi}_{W,n} ) \\
			& \leq  W(\pi)-W_{n} ( \pi ) + W_{n} (\widehat{\pi}_{W,n} ) - W(\widehat{\pi}_{W,n} ) + r_n\\
			& \leq  2 \sup_{\pi \in \Pi_n} \left|W(\pi)-W_{n} ( \pi )\right| + r_n
		\end{aligned}
		\]
		where $r_n =  o_P(n^{-1/2})$ and $\sqrt{n}  \mathbb{E}|r_n| = o(1)$. For notational simplicity, we write
		\[
		\psi_{W, \pi}: z \mapsto  \psi_W( z, \pi; \eta_{W} ).
		\]
		Thus, for all $\pi \in \Pi_n$:
		\begin{equation}\label{equation: decomposition of V_theta}
			\begin{aligned}
				W(\pi)-W(\widehat{\pi}_{W,n}  ) 
				& \leq 2 \sup_{\pi \in \Pi_n } \left | W_{n} (\pi) - W(\pi) \right| + r_n\\
				& = 2 \sup _{\pi \in \Pi_n}\left|\left(\mathbb{E}_n-\mathbb{E}\right) \psi_{W, \pi}  \right| + r_n.
			\end{aligned}
		\end{equation}
		Without loss of generality, suppose that there exists $\pi_{W,n}^* \in \Pi_n$  such that  $W (\pi_{W,n}^*) = \max_{\pi \in \Pi_n} W(\pi)$. If no such \(\pi_{W,n}^*\) exists, the proof can be adapted using an \(\varepsilon\)-approximate optimizer, where \(\varepsilon \to 0\). Substituting $\pi_{W,n}^*$  into the preceding expression yields
		\[
		0 \leq  \mathrm{Reg}_W(\widehat{\pi}_{W,n}) =  W(\pi_{W,n}^*) - W( \widehat{\pi}_{W,n}  )   \leq 2 \sup_{\pi \in \Pi_n} \left|\left(\mathbb{E}_n-\mathbb{E}\right) \psi_{W, \pi}  \right|  + r_n.
		\]

		Define $\mathcal{F}_n = \{ \psi_{W, \pi} : \pi \in \Pi_n \}$. To complete the proof, it suffices to show the empirical process term decaying at rate $\sqrt{\mathrm{VC}(\Pi_n)/n }$, i.e.,
		\begin{equation}\label{equation: maximal inequality}
			\sup_{\psi \in \mathcal{F}_n } \left|\left(\mathbb{E}_n-\mathbb{E}\right) \psi  \right| = \sup_{\pi \in \Pi_n} \left|\left(\mathbb{E}_n-\mathbb{E}\right) \psi_{W, \pi}  \right| = O_{P}\left( \sqrt{\mathrm{VC}(\Pi_n)/ n} \right). 
		\end{equation}
		We show \cref{equation: maximal inequality} using  a similar argument in Theorem 2.14.1 in \cite{vaart2023empirical}, while allowing the function class $\mathcal{F}_n$ to vary with $n$.

		{\bf Step 1. Envelope function for $\mathcal{F}_n$. }
		As a first step for upper bounding $\sup_{\psi \in \mathcal{F}_n } \left|\left(\mathbb{E}_n-\mathbb{E}\right) \psi  \right|$, we construct an envelope function for the function class $\mathcal{F}_n$.  Under \cref{assumption: strict overlap}, we have the following bound:
		\[
		\begin{aligned}
			\mathbb{E} |Y|^2 & = \sum_{a\in \{0,1\} } \int  \mathbb{E}\left[ Y^2| X=x, A=a \right] e_a(x) \mathrm{d} F_{X|A}(x|a) \\
			& \geq  \kappa \int  \mathbb{E}\left[ Y^2| X=x, A=a \right]  \mathrm{d} F_{X|A}(x|a),
		\end{aligned}
		\]
		where $F_{X|A}(x|a)$ denote the conditional distribution of $X$ given $A=a$. Now, for any fixed $\Lambda \in [1, \infty)$ and $j, a \in \{0,1\}$, applying Jensen's inequality along with the inequality above yields: 
		\[
		\begin{aligned}
			\mathbb{E} \big| \rho_{j, \Lambda }^- (X, a)  \big|^2 &   \leq \mathbb{E} \left| \mathbb{E}\left[ Y^2 | X, A =a\right] \right|^2  =  \int  \mathbb{E}\left[ Y^2| X=x, A=a \right] \mathrm{d} F_{X|A=a}(x)   \\
			& \leq  \mathbb{E} |Y|^2 /\kappa.
		\end{aligned}
		\]
		As a result, the functions $e(x)$, $q^-(x,a)$, $\rho_{1}^-(x,a)$, and $\rho_{0}^-(x,a)$ are all square-integrable, and hence $\phi_t^-(z; \eta_{W})$ defined in \cref{equation: phi_t^- definition} for $t \in \{0,1\}$ is also square-integrable by Minkowski inequality. The function $F(\cdot) \equiv\left|\phi_0^-(\cdot; \eta_{W}) \right| + \left|\phi_1^-(\cdot; \eta_{W}) \right|$ is an envelop function for $\mathcal{F}_n$ satisfying $|\psi| \leq F$ for all $\psi\in \mathcal{F}_n$ and $\|F\|_{\mathbb{P},2} < \infty$.
		
		{\bf Step 2. Uniform entropy bound. } Second, we derive a uniform upper bound on the covering number of $\mathcal{F}_n$ for all $n$. Lemma A.1 in supplement of \cite{kitagawa2018a} implies that $\mathrm{VC}(\mathcal{F}_n) \leq \mathrm{VC}(\Pi_n)$.  By Theorem 2.6.7 in \cite{vaart2023empirical}, there is a universal $K >0$ such that for all $n$, the following inequality holds
		\[
		\sup_{Q} \log N\left(\epsilon\|F\|_{Q,2}, \mathcal{F}_n, L^2(Q) \right) \leq  K \mathrm{VC}(\mathcal{F}_n) \log \epsilon^{-1} , \quad \forall 0< \epsilon < 1,
		\]
		where the supremum is taken over all discrete probability measures $Q$ with $\| F \|_{Q,2} >0$. As a consequence, the uniform entropy can be upper bounded by:
		\[
		\sup_{Q} \int_0^1 \sqrt{ 1 + \log N\left(\epsilon\|F\|_{Q,2}, \mathcal{F}_n, L^2(Q) \right)    } \mathrm{d} \epsilon \leq 1 +  \sqrt{K} (1 + \sqrt{\pi}/2 ) \sqrt{ \mathrm{VC}(\Pi_n)}.
		\]
		It is important to note that the universal constant $K$ is independent of the function class $\mathcal{F}_n$; see Theorem 2.6.4 and 2.6.7 in \cite{vaart2023empirical} for details.
		
		{\bf Step 3. Bounding the supremum of the symmetrized process. } Given the sample $\{Z_i\}_{i=1}^n$, define the symmetrized empirical process $\left\{ \mathbb{G}_n^o f: f\in \mathcal{F}_n \right\}$ as
		\[
		\mathbb{G}_n^o:  f  \mapsto  \frac{1}{\sqrt{n}} \sum_{i=1}^n \varepsilon_i f(Z_i),
		\]
		where  the $\varepsilon_i$ are independent Rademacher random variables such that $\varepsilon_i = \pm 1$ with probability 1/2 each. Moreover, let $\phi(x) = e^{x^2}-1$ and the conditional Orlicz norm on $\mathcal{F}_n$ is defined as 
		\[
		\left \| \mathbb{G}_n^o f  \right\|_{\phi,n} = \inf \left\{ c> 0: \mathbb{E}\left[ \phi\left(|\mathbb{G}_n^o f|/c\right) | \{Z_i\}_{i=1}^n \right]\leq 1  \right\}.
		\]
		For more details on Orlicz norms, see in Chapter 2.2 in \cite{vaart2023empirical}.
		
		Conditionally on $\{Z_i\}_{i=1}^n$, the process $\mathbb{G}_n^o$ is sub-Gaussian for the $L^2(\mathbb{P}_n)$-seminorm $\|\cdot \|_n$ by Hoeffding's inequality. Formally, for any $f, g \in \mathcal{F}_n$,
		\[
		\begin{aligned}
			\left \| \mathbb{G}_n^o f - \mathbb{G}_n^o g \right\|_{\phi,n} \leq \| f-g \|_n \equiv \sqrt{  \frac{1}{n} \sum_{i=1}^n \left|f(Z_i) - g(Z_i)\right|^2 }.
		\end{aligned}
		\]
		The value $\eta_n = \sup_{f\in \mathcal{F}_n } \|  f\|_n $ is an upper bound for the radius of $\mathcal{F}_n \cup \{0\}$ with respect to this norm.
		The maximal inequality Theorem 2.2.4 in \cite{vaart2023empirical} gives
		\[
		\begin{aligned}
			\left\| \sup_{f \in \mathcal{F}_n}  \left| \mathbb{G}_n^o f \right|  \right\|_{\phi, n} & \leq K_{\phi} \int_0^{\eta_n} \sqrt{ 1 + \log N \left(\epsilon, \mathcal{F}_n, L^2(\mathbb{P}_n) \right) } \mathrm{d} \epsilon\\
			& \leq K_{\phi} \| F \|_n  \int_0^{1} \sqrt{ 1 + \log N \left(\epsilon \| F \|_n, \mathcal{F}_n, L^2(\mathbb{P}_n) \right) } \mathrm{d} \epsilon.
		\end{aligned}
		\]
		where $K_{\phi}$ is a universal constant only depending on the function $\phi$; see Theorem 2.2.4 in \cite{vaart2023empirical} for more details.
		By Problem 2.2.5 in \cite{vaart2023empirical}, we have 
		\[
		\mathbb{E} \left[ \sup_{f \in \mathcal{F}_n}  \left| \mathbb{G}_n^o f \right|^2 \Big| \{Z_i\}_{i=1}^n \right]  \leq  4 \log 2 \left\| \sup_{f \in \mathcal{F}_n}  \left| \mathbb{G}_n^o f \right|  \right\|_{\phi, n}^2. 
		\]
		Consequently, applying Jensen's inequality gives that there is a constant $K>0$ not depending on $n$ such that for all $n \in \mathbb{N}^+$,
		\[
		\begin{aligned}
			\mathbb{E} \left[ \sup_{f \in \mathcal{F}_n}  \left| \mathbb{G}_n^o f \right| \Big| \{Z_i\}_{i=1}^n \right] & \leq   K \| F \|_{n}  \int_0^{1} \sqrt{ 1 + \log N \left(\epsilon  \| F \|_{n} , \mathcal{F}_n, L^2(\mathbb{P}_n) \right) } \mathrm{d} \epsilon\\
			& \leq  K   \| F \|_{n} \sup_{Q} \int_0^1 \sqrt{ 1 + \log N \left(\epsilon  \| F \|_{Q,2} , \mathcal{F}_n, L^2(Q) \right)    } \mathrm{d} \epsilon\\
			& \leq  K  \| F \|_{n} \sqrt{\mathrm{VC}(\Pi_n) },
		\end{aligned}
		\]
		where $K>0$ is a constant independent of $n$. Taking expectation on both hand sides and applying  Lemma 2.3.1 in \cite{vaart2023empirical} gives 
		\[
		\begin{aligned}
			\mathbb{E}\left[  \sup_{\psi \in \mathcal{F}_n } \left|\left(\mathbb{E}_n-\mathbb{E}\right) \psi  \right|  \right] & \leq 2 n^{-1/2}  \mathbb{E} \left[ \sup_{f \in \mathcal{F}_n}  \left| \mathbb{G}_n^o f \right|  \right]  \\
			& \lesssim \sqrt{\mathrm{VC}(\Pi_n)/n }  \mathbb{E} \sqrt{ \frac{1}{n}  \sum_{i=1}^n \left| F(Z_i) \right|^2  } \\
			& \leq  \| F \|_{\mathbb{P},2} \sqrt{\mathrm{VC}(\Pi_n)/n },
		\end{aligned}
		\]
		where the last step follows from Jensen's inequality. The desired result follows.
	\end{proof}

	\section{ {Proofs for the Results in \cref{appendix: self-selection,appendix: baseline policy and mmr}}}\label{section:Proof for appendices}
	
	\subsection{Proof of Theorem \ref{thm:AMMW}}
	
	\begin{proof}[Proof of \cref{thm:AMMW}]
		By definition of $Y(\pi^{\dagger})$, the worst-case welfare $W(\pi^{\dagger})$ can be decomposed as follows:
		\begin{align*}
			W(\pi^{\dagger}) = & \inf_{ Q \in \mathcal{P} } \mathbb{E}_Q \left[ Y(\pi^{\dagger}) \right] \\
			= & \inf_{ Q \in \mathcal{P} } \mathbb{E}_Q \left[ \mathds{1}\{ \pi^{\dagger}(X) = 1 \} Y(1) + \mathds{1}\{ \pi^{\dagger}(X) = 0 \} Y(0) + \mathds{1}\{ \pi^{\dagger}(X) = \mathrm{S} \} Y
			\right] \\
			= & \inf_{ Q \in \mathcal{P} } \mathbb{E} \left[ \mathds{1}\{ \pi^{\dagger}(X) = 1 \} \mu_Q (X,1) + \mathds{1}\{ \pi^{\dagger}(X) = 0 \} \mu_Q (X,0)
			\right]  + \mathbb{E} \left[ \mathds{1}\{ \pi^{\dagger}(X) = \mathrm{S} \} \mathbb{E}[Y|X] \right],
		\end{align*}
		where the third equality holds because the counterfactual outcome under the self-selection arm $\mathrm{S}$ is observed, whose distribution is identified from the data and thus invariant to $Q$. Applying similar arguments as the proof of \cref{theorem: max-min welfare assignment rule}, we have that
		\begin{align*}
			W(\pi^{\dagger}) = &  \mathbb{E} \left[ \mathds{1}\{ \pi^{\dagger}(X) = 1 \} \inf_{ Q \in \mathcal{P} } \mu_Q (X,1) + \mathds{1}\{ \pi^{\dagger}(X) = 0 \} \inf_{ Q \in \mathcal{P} }\mu_Q (X,0) +  \mathds{1}\{ \pi^{\dagger}(X) = \mathrm{S} \} \mathbb{E}[Y|X]
			\right] \\
			= & \mathbb{E} \left[ \mathds{1}\{ \pi^{\dagger}(X) = 1 \} \mu^{-} (X,1) + \mathds{1}\{ \pi^{\dagger}(X) = 0 \} \mu^{-} (X,0) + \mathds{1}\{ \pi^{\dagger}(X) = \mathrm{S} \} \mathbb{E}[Y|X]  \right], 
		\end{align*}
		which completes the proof.
	\end{proof}

	\subsection{Proof of Theorem \ref{thm:AMMI}}
	\begin{proof}[Proof of \cref{thm:AMMI}]
		Notice that $\Delta( \pi^{\dagger})$ can be decomposed as
		\begin{align*}
			\Delta( \pi^{\dagger}) = & \inf_{ Q \in \mathcal{P} }  \mathbb{E}_Q  \Big[ \mathds{1}\{ \pi^{\dagger}(X) = 1 \} Y(1) + \left( \mathds{1}\{ \pi^{\dagger}(X) = 0 \} - 1 \right) Y(0) + \mathds{1}\{ \pi^{\dagger}(X) = \mathrm{S} \}  Y \Big] \\
			= & \inf_{ Q \in \mathcal{P} }  \mathbb{E} \Big[ \mathds{1}\{ \pi^{\dagger}(X) = 1 \} \mu_Q (X,1) - \left( 1 - \mathds{1}\{ \pi^{\dagger}(X) = 0 \} \right) \mu_Q (X,0) + \mathds{1}\{ \pi^{\dagger}(X) = \mathrm{S} \}  \mathbb{E}[Y|X] \Big].
		\end{align*}
		Applying similar arguments as the proof of \cref{theorem: max-min welfare assignment rule}, we obtain
		\begin{align*}
			\Delta(\pi^{\dagger}) 
			& = \mathbb{E} \left[ \mathds{1}\{ \pi^{\dagger}(X) = 1 \} \inf_{ Q \in \mathcal{P} }\mu_Q (X,1) - \left( 1 - \mathds{1}\{ \pi^{\dagger}(X) = 0 \} \right) \sup_{ Q \in \mathcal{P} }\mu_Q (X,0) + \mathds{1}\{ \pi^{\dagger}(X) = \mathrm{S} \}  \mathbb{E}[Y|X] \right] \\
			& = \mathbb{E} \left[ \mathds{1}\{ \pi^{\dagger}(X) = 1 \} \mu^{-} (X,1) +  \mathds{1}\{ \pi^{\dagger}(X) = 0 \}  \mu^{+} (X,0) + \mathds{1}\{ \pi^{\dagger}(X) = \mathrm{S} \}  \mathbb{E}[Y|X]  \right] - \mathbb{E}[\mu^{+} (X,0)].
		\end{align*}
		This completes the proof.
	\end{proof}
	
	\subsection{Proof of \cref{thm:MMI-baseline}}
	
	\begin{proof}[Proof of \cref{thm:MMI-baseline}]
		Consider the following derivation:
		\begin{align*}
			& \quad\  \mathbb{E}_Q\left[ Y(\pi(X)) - Y\left(\pi_0(X) \right) \right] \\
			& =  \mathbb{E}_{Q} \left[ (Y(1)-Y(0))  \left( \pi (X) - \pi_{0} (X)  \right) \right] = \mathbb{E} \left[ \tau_{Q}(X)  \left( \pi (X) - \pi_{0} (X)  \right) \right] \\
			& = \mathbb{E} \left[ \tau_{Q}(X)  \max\{\pi (X) - \pi_{0} (X),0\} \right] - \mathbb{E} \left[ \tau_{Q}(X)  \max\{\pi_{0} (X) - \pi (X),0\} \right]. 
		\end{align*}
		Since the regions $\{x: \pi(x) > \pi_{0}(x)\}$ and $\{x: \pi(x) < \pi_{0}(x)\}$ are disjoint, after applying similar arguments as the proof of \cref{theorem: max-min welfare assignment rule}, we have
		\begin{align*}
			& \inf_{ Q \in \mathcal{P} } \left\{\mathbb{E} \left[ \tau_{Q}(X)  \max\{\pi (X) - \pi_{0} (X),0\} \right] - \mathbb{E} \left[ \tau_{Q}(X)  \max\{\pi_{0} (X) - \pi (X),0\} \right]\right\} \\
			= &  \inf_{ Q \in \mathcal{P} }\mathbb{E} \left[ \tau_{Q}(X)  \max\{\pi (X) - \pi_{0} (X),0\} \right] - \sup_{ Q \in \mathcal{P} }\mathbb{E} \left[ \tau_{Q}(X)  \max\{\pi_{0} (X) - \pi (X),0\} \right] \\
			= & \ \mathbb{E} \left [ \tau^{-}(X) \max \left\{ \pi(X)  -  \pi_{0} (X), 0 \right\} \right] - \mathbb{E} \left [ \tau^{+}(X) \max \left\{ \pi_{0}(X) - \pi(X), 0 \right\} \right], 
		\end{align*}
		which completes the proof.
	\end{proof}

	\subsection{Proof of Theorem \ref{theorem: minimax regret policy}}
	
	\begin{proof}[Proof of \cref{theorem: minimax regret policy}]
		We start the proof with the following derivation:
		\[
		\begin{aligned}
			\inf_{\pi \in \Pi} \sup_{Q \in \mathcal{P} }   \mathrm{Reg}_Q( \pi )  &  = \inf_{\pi \in \Pi} \sup_{Q \in \mathcal{P} }   \sup_{\pi^\prime \in \Pi} \mathbb{E}_Q [Y(\pi^\prime(X))] -  \mathbb{E}_Q [Y(\pi(X))]  \\
			&  = \inf_{\pi \in \Pi}  \sup_{\pi^\prime \in \Pi} \sup_{Q \in \mathcal{P} }   \mathbb{E}_Q \left [ \left(Y(1) - Y(0) \right) \left(  \pi^\prime(X)  -  \pi (X) \right)   \right]  \\
			& =  \inf_{\pi \in \Pi}  \sup_{\pi^\prime \in \Pi} \sup_{Q \in \mathcal{P} }   \mathbb{E} \left [ \tau_{Q}(X) \left(  \pi^\prime(X)  -  \pi (X) \right)   \right] . \\
		\end{aligned}
		\]
		Notice that
		\begin{align*}
			& \mathbb{E} \left [ \tau_{Q}(X) \left(  \pi'(X)  -  \pi (X) \right) \right] \\
			= & \mathbb{E} \left [ \tau_{Q}(X) \max \left\{ \pi'(X) - \pi(X), 0 \right\} \right] - \mathbb{E} \left [ \tau_{Q}(X) \max \left\{ \pi(X) - \pi'(X), 0 \right\} \right], 
		\end{align*}
		and the fact that $ \{x: \pi(x)  >  \pi^\prime (x) \}$ and $ \{x: \pi(x)  <  \pi^\prime (x) \}$ are disjoint. Applying similar arguments as the proof of \cref{theorem: max-min welfare assignment rule}, we have
		\begin{equation*}
			\begin{aligned}
				& \sup_{ Q \in \mathcal{P} } \mathbb{E} \left [ \tau_{Q}(X) \left(  \pi'(X)  -  \pi (X) \right) \right] \\
				= &\ \mathbb{E} \left [ \tau^{+}(X) \max \left\{ \pi'(X) - \pi(X), 0 \right\} \right] - \mathbb{E} \left [ \tau^{-}(X) \max \left\{ \pi(X) - \pi'(X), 0 \right\} \right], 
			\end{aligned}
		\end{equation*}
		which completes the proof of \cref{equation: minimax regret expression - minimax formula}.
		
		Then we turn to derive the first-best policy. When $\Pi = \Pi_o$ consists of all measurable policies, it suffices to consider the following optimizations for any $x \in \mathcal{X}$:
		\begin{equation}\label{equation: piece-wise optimization}
			\sup_{ \pi^{\prime}(x) \in [0,1]  }      \tau^+(x)       \max \left\{ \pi'(x) - \pi(x), 0 \right\}   -     \tau^-(x)       \max \left\{ \pi(x) - \pi'(x), 0 \right\}.
		\end{equation}
		It is straightforward to show that
		\[
		\tilde{\pi}'(x) =  \mathds{1}\{ \left(  1 - \pi(x) \right) \tau^{+} (x)   + \pi(x) \tau^{-} (x) > 0 \}
		\]
		is an optimal solution to \cref{equation: piece-wise optimization}. Substituting $\tilde{\pi}^\prime(\cdot)$ into \cref{equation: piece-wise optimization}, we find that the first-best policy minimizing
		\[
		\tau^+(x)       \max \left\{ \tilde{\pi}'(x) - \pi(x), 0 \right\}   -     \tau^-(x)       \max \left\{ \pi(x) - \tilde{\pi}'(x), 0 \right\}
		\]
		is given by
		\[
		\pi^\star_{R} (x) 
		= \begin{cases}
			1, & \text{if}\ \   \tau^-(x) \geq 0, \\
			\tau^+(x)/(\tau^+(x) - \tau^-(x)), & \text{if}\ \  \tau^-(x) < 0 < \tau^+(x) \\
			0, & \text{if}\ \  \tau^+(x) \leq 0, \\
		\end{cases}.  
		\]
		
	\end{proof}
	


	\section{Proofs  for the Identification under Multi-Valued Treatment}\label{section:Multi_Identification}

	Before presenting the formal proofs of \cref{proposition: sharp bounds for conditional mean with discrete treatment,theorem: max-min welfare under discrete treatment}, we first establish a more general result that extends Proposition 1B of \cite{dorn2023sharp} to the multi-valued treatment setting. As shown below, \cref{proposition: sharp bounds for conditional mean with discrete treatment} follows as a direct corollary of \cref{proposition: existence of distribution with discrete treatment}.
	
	\begin{lemma}\label{proposition: existence of distribution with discrete treatment}
		For any $a\in\mathcal{A}$ and any random variable $E\in(0,1)$ satisfying
		\[
		\mathbb{E} \left[ \mathds{1}\{A=a\} / E | X=x  \right] =1,
		\]
		and 
		\begin{equation} \label{assumption: MSM bound under discrete treatment}
			\frac{\left( e_{a}(X) + \left[ 1-e_{a}(X) \right] / \Lambda \right) \mathds{1}\{A=a\}}{e_{a}(X)} \leq \frac{\mathds{1}\{A=a\}}{E} \leq \frac{ \left( e_{a}(X) + \left[ 1-e_{a}(X) \right]  \Lambda \right)\mathds{1}\{A=a\}}{e_{a}(X)},
		\end{equation}
		we can construct random variables $\left(X,Y(a_{1}), Y(a_{2}),\dots, Y(a_{d}), A, U\right)$ defined on the same probability space as $(X,Y,A,E)$ and an associated putative propensity score $e_{a} (x,u) = \mathbb{P} [ A=a | X=x, U=u ]$ such that
		\begin{enumerate}
			\item[(1)] $ Y = \sum_{j=1}^{d}\mathds{1}\{A=a_{j}\}Y(a_{j})$.
			\item[(2)] $\left(Y(a_{1}), Y(a_{2}),\dots, Y(a_{d}) \right) \indep A \mid (X,U)$ and 
			\[
			e_{a}(X)/\left( e_{a}(X) + \left[ 1-e_{a}(X) \right] \Lambda \right) \leq  e_{a} (X,U) \leq  e_{a}(X)/\left( e_{a}(X) + \left[ 1-e_{a}(X) \right] / \Lambda \right).
			\]
			\item[(3)] $\mathds{1}\{A=a\}/e_{a} (X,U) = \mathds{1}\{A=a\}/E$.
		\end{enumerate}
	\end{lemma}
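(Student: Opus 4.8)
The plan is to mimic, one treatment arm at a time, the binary construction behind Proposition~1B of \cite{dorn2023sharp}, and then to enlarge the latent variable so that the \emph{entire} potential-outcome vector---not merely $Y(a)$---becomes conditionally independent of $A$. Fix the arm $a$ and abbreviate the two MSM endpoints by $\ell_a(x)=e_a(x)/\bigl(e_a(x)+(1-e_a(x))\Lambda\bigr)$ and $u_a(x)=e_a(x)/\bigl(e_a(x)+(1-e_a(x))/\Lambda\bigr)$. Restricting the sandwich hypothesis \cref{assumption: MSM bound under discrete treatment} to the event $\{A=a\}$ (where the indicators equal one) and taking reciprocals shows it is equivalent to $\ell_a(X)\le E\le u_a(X)$ on $\{A=a\}$; together with $\mathbb{E}[\mathds{1}\{A=a\}/E\mid X]=1$ these are exactly the binary MSM feasibility conditions for the collapsed indicator $\mathds{1}\{A=a\}$ with nominal propensity $e_a$.

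First I would build the putative propensity $e_a(x,u)=\mathbb{P}(A=a\mid X=x,U=u)$ and the latent $U$. Let $\nu_x$ denote the conditional law of $E$ given $X=x,\,A=a$; by the reduction $\nu_x$ is supported in $[\ell_a(x),u_a(x)]$. I would require $\mathbb{P}(A=a\mid X=x,\,e_a(X,U)=w)=w$, which forces the marginal law of the propensity value $w=e_a(x,u)$ on the stratum $\{X=x\}$ to be the reweighting
\[
\mathbb{P}\bigl(e_a(X,U)\in dw\mid X=x\bigr)=\frac{e_a(x)}{w}\,\nu_x(dw),
\]
and the balancing identity $\mathbb{E}[\mathds{1}\{A=a\}/E\mid X=x]=e_a(x)\int w^{-1}\nu_x(dw)=1$ is precisely what makes this a probability measure. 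I would then disintegrate the $\{X=x\}$-stratum over $w$, placing a mass-$w$ fraction into $\{A=a\}$ (carrying $e_a(X,U)=E$, so that property~(3) and $\mathds{1}\{A=a\}/e_a(X,U)=\mathds{1}\{A=a\}/E$ hold) and the complementary mass-$(1-w)$ fraction into $\{A\ne a\}$. A one-line computation using the balancing identity again shows the induced conditional law of $w$ on $\{A\ne a\}$ integrates to one and that $\mathbb{E}[e_a(X,U)\mid X=x]=e_a(x)$, so the glued law keeps the observed $(X,A)$ margin intact. Since $w$ ranges over $[\ell_a(X),u_a(X)]$ by construction, the two-sided bound in property~(2) is automatic.

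Next I would attach the potential outcomes. On $\{A=a\}$ set $Y(a)=Y$; on $\{A\ne a\}$ draw $Y(a)$, conditionally on $(X,e_a(X,U))=(x,w)$, from the observed law of $Y$ given $X=x,\,A=a,\,E=w$. This renders $Y(a)$ conditionally independent of $\mathds{1}\{A=a\}$ given $(X,e_a(X,U))$ while leaving the observed $(X,Y,A)$ law untouched. To secure the stronger requirement that the full vector satisfy $(Y(a_1),\dots,Y(a_d))\indep A\mid(X,U)$, I would take $U$ rich enough to carry both $e_a(X,U)$ and an independent uniform seed and define each remaining coordinate $Y(a_k)$ ($k\ne a$) by the analogous within-stratum resampling, setting $Y(a_k)=Y$ on $\{A=a_k\}$; equivalently one may absorb the whole outcome vector into $U$, so that the potential outcomes are $(X,U)$-measurable and hence trivially independent of $A$. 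Consistency $Y=\sum_j\mathds{1}\{A=a_j\}Y(a_j)$ (property~(1)) is then immediate from the arm-by-arm assignment.

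The main obstacle is this final gluing step: producing a \emph{single} coherent joint law of $\bigl(X,Y(a_1),\dots,Y(a_d),A,U\bigr)$ under which the observed $(X,Y,A)$ margin is preserved, the propensity restricted to the treated equals $E$, the MSM bound holds, and the \emph{whole} outcome vector---not just $Y(a)$---decouples from $A$ given $(X,U)$. Conditioning on arm $a$'s propensity alone does not by itself render the off-arm outcomes $Y(a_k)$ independent of $A$, so the argument hinges on enlarging $U$ (via the auxiliary randomization, or by embedding the full outcome vector in $U$) and verifying, through the disintegration above, that this enlargement stays compatible with $e_a(X,U)=E$ and with the observed conditional laws. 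The balancing identity is the single algebraic fact that keeps every reweighted measure a genuine probability measure, and it is what the whole construction turns on.
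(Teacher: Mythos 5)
Your construction of the latent structure and of arm $a$'s own potential outcome coincides with the paper's. The tilted law $\frac{e_a(x)}{w}\,\nu_x(\mathrm{d}w)$ you place on the propensity values is exactly the mixture of the paper's on-arm component ($U=E$ on $\{A=a\}$) with its off-arm component $K_{a}(u|x)=\int_{-\infty}^{u}\frac{e_{a}(x)}{1-e_{a}(x)}\frac{1-e}{e}\,\mathrm{d}H(e|x,a)$; the balancing identity is what makes both probability measures; Bayes' rule gives $e_a(x,u)=u$ and hence the MSM bound from the support restriction; and your off-arm draw of $Y(a)$ from the law of $Y$ given $(X,A=a,E=w)$ is the paper's $G^{-1}(V_{a}|X,a,U)$. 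Up to that point the two arguments are the same, only phrased via disintegration rather than inverse-CDF coupling.

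The genuine gap is the step you yourself label ``the main obstacle'' and then leave undone: proving $(Y(a_1),\dots,Y(a_d))\indep A\mid(X,U)$ for the \emph{whole} vector. The paper's resolution is not an enlargement of $U$ at all: $U$ stays the scalar propensity value; the auxiliary uniforms $V,V_{a_1},\dots,V_{a_d}$ are kept \emph{outside} $U$; each remaining coordinate $Y(a_k)$ with $a_k\neq a$ is drawn off its arm from the observed law $F(\cdot|X,a_k)$, which depends on $X$ only; and one then checks by direct computation that the conditional law of the vector given $(X,U,A=a_j)$ equals the same product $G(y|X,a,U)\prod_{k:\,a_k\neq a}F(y_k|X,a_k)$ for \emph{every} $j$. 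This works precisely because off $\{A=a\}$ the latent $U=K_a^{-1}(V|X)$ is pure noise, so $Y\indep U$ given $(X,A=a_k)$, and the forced assignment $Y(a_k)=Y$ on $\{A=a_k\}$ therefore has exactly the law $F(\cdot|X,a_k)$ used by the off-arm resampling. Neither of your two substitutes delivers this. Putting the ``independent uniform seed'' inside $U$ breaks the very independence you need: off its own arm, $Y(a_k)=F^{-1}(S_k|X,a_k)$ becomes $(X,U)$-measurable, i.e.\ a point mass given $(X,U)$, while on $\{A=a_k\}$ it equals the observed $Y$, which is non-degenerate given $(X,U)$; hence the conditional law of $Y(a_k)$ given $(X,U,A)$ depends on $A$. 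Absorbing the whole outcome vector into $U$ is circular: independence becomes trivial, but properties (2) and (3) must then be re-verified for the enlarged $U$, and showing that conditioning on the outcomes does not move the propensity away from $E$ is exactly the conditional-independence claim you were trying to bypass. So the decisive computation of the paper's proof is absent from the proposal, and the routes offered in its place either fail or beg the question.
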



	\begin{proof}[Proof of \cref{proposition: existence of distribution with discrete treatment}]
		Define the following conditional distribution functions:
		\[
		F(y|x,a) = \mathbb{P}(Y\leq{y}|X=x,A=a),
		\]
		\[
		G(y|x,a,e) = \mathbb{P}(Y\leq{y}|X=x,A=a,E=e),
		\]
		\[
		H(e|x,a) = \mathbb{P} (E\leq e|X=x,A=a),
		\]
		\[
		K_{a}(u|x) = \int_{-\infty}^{u} \frac{e_{a}(x)}{1-e_{a}(x)} \frac{1-e}{e} \text{d} H(e|x,a).
		\]
		Given $\mathbb{E} \left[ \mathds{1}\{A=a\} / E \mid X=x \right] =1$, $K_{a}(u|x)$ is a valid CDF for any $x\in\mathcal{X}$ and $a\in\mathcal{A}$.
		For simplicity, we provide the construction for $a=a_{1}$; identical arguments apply to other treatments. Let $V, V_{a_{1}},V_{a_{2}},\dots,V_{a_{d}}$ be i.i.d. $\text{Uniform}[0,1]$ random variables, independent of $(X,Y,A,E)$. We  construct $Y(a_{1}), Y(a_{2}),\dots, Y(a_{d})$ and $U$ as:
		\[
		Y(a_{1}) = \mathds{1} \{ A=a_{1} \} Y + \mathds{1} \{ A \neq a_{1} \} G^{-1} (V_{a_{1}}|X,a_{1},U),
		\]
		\[
		Y(a_{j}) =  \mathds{1} \{ A \neq a_{j} \} F^{-1} (V_{a_{j}}|X,a_{j}) + \mathds{1} \{ A = a_{j} \} Y \quad\text{for}\quad j\in\{ 2,3,\dots,d \},
		\]
		\[
		U = \mathds{1} \{ A=a_{1} \} E + \mathds{1} \{ A \neq a_{1} \} K^{-1}_{a_{1}} (V|X).
		\]
		We verify that the constructed random variables satisfy the three required properties. Property (1) follows directly from the construction. We now proceed to examine the remaining properties. To verify property (2), we consider the following derivation. Conditional on $X, U, A=a_{1}$, we have
		\begin{align*}
			& \mathbb{P} ( Y(a_{1}) \leq y_{1}, Y(a_{2}) \leq y_{2}, \dots, Y(a_{d}) \leq y_{d} | X, U, A=a_{1}  ) \\
			=  & \mathbb{P} \left ( Y \leq y_{1}, F^{-1} (V_{a_{2}}|X,a_{2}) \leq y_{2}, \dots, F^{-1} (V_{a_{d}}|X,a_{d}) \leq y_{d} | X, U, A=a_{1} \right ) \\
			= & \mathbb{P} \left ( Y \leq y_{1} | X, U, A=a_{1} \right ) \prod\limits_{j=2}^{d} \mathbb{P} \left (  F^{-1} (V_{a_{j}}|X,a_{j}) \leq y_{j} | X, U, A=a_{1} \right ) \\
			= & G(y_{1}|X,a_{1},U) \prod\limits_{j=2}^{d} F (y_{j}|X,a_{j}).
		\end{align*}
		Conditional on $X, U, A=a_{j}$ for $j\neq 1$, we have
		\begin{align*}
			& \mathbb{P} ( Y(a_{1}) \leq y_{1}, Y(a_{2}) \leq y_{2}, \dots, Y(a_{d}) \leq y_{d} | X, U, A=a_{j}  ) \\
			=  & \mathbb{P} \left ( G^{-1} (V_{a_{1}}|X,a_{1},U) \leq y_{1}, F^{-1} (V_{a_{2}}|X,a_{2}) \leq y_{2}, \dots, Y \leq y_{j}, \dots , F^{-1} (V_{a_{d}}|X,a_{d}) \leq y_{d} | X, U, A=a_{j} \right ) \\
			= & \mathbb{P} \left ( G^{-1} (V_{a_{1}}|X,a_{1},U) \leq y_{1} | X, U, A=a_{j} \right ) \prod\limits_{\ell=2,\ell\neq{j}}^{d} \mathbb{P} \left (  F^{-1} (V_{a_{\ell}}|X,a_{\ell}) \leq y_{\ell} | X, U, A=a_{\ell} \right ) \\
			&\times \mathbb{P} \left ( Y \leq y_{j} | X, U, A=a_{j} \right ) \\
			= & G(y_{1}|X,a_{1},U) \prod\limits_{\ell=2,\ell\neq{j}}^{d} F (y_{\ell}|X,a_{\ell}) \times \mathbb{P} \left ( Y \leq y_{j} | X, K^{-1} (V|X), A=a_{j} \right ) \\
			= & G(y_{1}|X,a_{1},U) \prod\limits_{j=2}^{d} F (y_{j}|X,a_{j}),
		\end{align*}
		Combining the two expressions above gives       $\left(Y(a_{1}), Y(a_{2}),\dots, Y(a_{d}) \right) \indep A \mid (X,U)$. By Bayes' rule,
		\begin{align*}
			e_{a_{1}}(x,u) = & e_{a_{1}}(x) \frac{\text{d} \mathbb{P}(u|X=x,A=a_{1})}{\text{d} \mathbb{P}(u|X=x)} \\
			= & e_{a_{1}}(x) \frac{\text{d} \mathbb{P}(u|X=x,A=a_{1}) / \text{d}H(u|x,a_{1})}{\text{d} \mathbb{P}(u|X=x) / \text{d}H(u|x,a_{1})} \\
			= & \frac{e_{a_{1}}(x)}{e_{a_{1}}(x) + \frac{e_{a_{1}}(x)}{1-e_{a_{1}}(x)} \frac{1-u}{u} \left( 1-e_{a_{1}}(x) \right)} \\
			= & u.
		\end{align*}
		Since the support of $K_{a_{1}}(\cdot|x)$ is contained in that of $H(\cdot|x,a_{1})$, \cref{assumption: MSM bound under discrete treatment} implies
		\[
		e_{a_{1}}(X)/\left( e_{a_{1}}(X) + \left[ 1-e_{a_{1}}(X) \right] \Lambda \right) \leq   U \leq  e_{a_{1}}(X)/\left( e_{a_{1}}(X) + \left[ 1-e_{a_{1}}(X) \right] / \Lambda \right) \quad\text{a.s}.
		\]
		Finally, property (3) is also easy to verify. The event $\mathds{1} \{ A=a_{1} \}$ implies $U=E$, so 
		\[
		\mathds{1}\{A=a_{1}\}/e_{a_{1}} (X,U) = \mathds{1}\{A=a_{1}\}/U = \mathds{1}\{A=a_{1}\}/E.
		\]
		
	\end{proof}

	\begin{proof}[Proof of \cref{proposition: sharp bounds for conditional mean with discrete treatment}]
		The proof follows directly from that of \cref{proposition: existence of distribution with discrete treatment}, using a similar argument as in  \cref{proposition: pi_mean}, and is therefore omitted for brevity.
	\end{proof}
	
	We now turn to proof \cref{theorem: max-min welfare under discrete treatment}.
	
	\begin{proof}[Proof of \cref{theorem: max-min welfare under discrete treatment}]
		According to \cref{proposition: sharp bounds for conditional mean with discrete treatment}, for any $Q\in \mathcal{P}_{\mathrm{M}}$ and $(x,a)\in\mathcal{X}\times\mathcal{A}$, we have 
		\[
		\mu_{Q}(x,a)\geq \inf_{ Q \in \mathcal{P}_{\mathrm{M}} } \mu_{Q}(x,a) =\mu^{-}(x,a).
		\]
		It follows that
		\[
		\mathbb{E} \left[ \sum_{a\in\mathcal{A}} {\mu}_{Q}(X,a) \pi_{a}(X)   \right] \geq \mathbb{E} \left[ \sum_{a\in\mathcal{A}} {\mu}^{-}(X,a) \pi_{a}(X)   \right]
		\]
		Since $ Q \in \mathcal{P}_{\mathrm{M} }$ is arbitrary, the inequality remains valid after taking the infimum over $Q\in\mathcal{P}_{\mathrm{M}}$ on both sides. Therefore, we conclude that
		\[
		\begin{aligned}
			W(\pi) &=  \inf_{ Q \in \mathcal{P}_{\mathrm{M} } } \mathbb{E}_Q \left[ \sum_{a\in \mathcal{A}} Y(a) \pi_{a}(X) \right] = \inf_{ Q \in \mathcal{P}_{\mathrm{M} } } \mathbb{E} \left[ \sum_{a\in \mathcal{A}} \mu_{Q}(X,a) \pi_{a}(X)   \right] \\
			& \geq \mathbb{E} \left[ \sum_{a\in \mathcal{A}}\mu^{-}(X,a) \pi_{a}(X)   \right]. \\
		\end{aligned}
		\]
	\end{proof}

\end{appendices}

\end{document}